\documentclass[journal,letterpaper,onecolumn,11pt]{IEEEtran}
\usepackage{hyperref}
\usepackage{tikz}
\usetikzlibrary{positioning, arrows.meta, fit, backgrounds, calc}
\usetikzlibrary{arrows.meta}
\usepackage{amsmath}
\hypersetup{
 colorlinks,
 linkcolor={blue!100!black},
 citecolor={blue!100!black},
 urlcolor={blue!80!black}
}
\usepackage{accents}
\newlength{\dhatheight}

\def\BibTeX{{\rm B\kern-.05em{\sc i\kern-.025em b}\kern-.08em
    T\kern-.1667em\lower.7ex\hbox{E}\kern-.125emX}}
\usepackage{stmaryrd} 

\newcommand{\bN}{\mathbb{N}}

\newcommand{\bZ}{\mathbb{Z}}
\usepackage{booktabs} 
\newcommand{\cA}{\mathcal{A}}
\newcommand{\cB}{\mathcal{B}}
\newcommand{\cC}{\mathcal{C}}

\newcommand{\cE}{\mathcal{E}}
\newcommand{\cF}{\mathcal{F}}
\newcommand{\cG}{\mathcal{G}}

\newcommand{\cM}{\mathcal{M}}

\newcommand{\cP}{\mathcal{P}}

\newcommand{\cR}{\mathcal{R}}
\newcommand{\cS}{\mathcal{S}}
\newcommand{\cT}{\mathcal{T}}
\newcommand{\cU}{\mathcal{U}}

\newcommand{\boldc}{\mathbf{c}}

\newcommand{\boldf}{\mathbf{f}}

\newcommand{\boldm}{\mathbf{m}}

\newcommand{\boldp}{\mathbf{p}}
\newcommand{\boldq}{\mathbf{q}}

\newcommand{\bolds}{\mathbf{s}}

\newcommand{\boldu}{\mathbf{u}}
\newcommand{\boldv}{\mathbf{v}}
\newcommand{\boldw}{\mathbf{w}}
\newcommand{\boldx}{\mathbf{x}}
\newcommand{\boldy}{\mathbf{y}}
\newcommand{\boldz}{\mathbf{z}}

\newcommand{\boldH}{\mathbf{H}}

\newcommand{\rank}{\operatorname{rank}}

\usepackage{tikz}
\newif\ifFULL
\FULLfalse
\usepackage[utf8]{inputenc} 
\usepackage[T1]{fontenc}
\usepackage{url}
\usepackage{bbm}
\usepackage{ifthen}
\usepackage[cmex10]{mathtools}

\usepackage{verbatim}
\usepackage{amsthm,amssymb}
\usepackage{xcolor}
\usepackage{graphicx}
\usepackage{amssymb}
\usepackage{epstopdf}

\newtheorem{theorem}{Theorem}
\newtheorem{observation}{Observation}
\newtheorem{remark}{Remark}

\newtheorem{proposition}{Proposition}
\newtheorem{example}{Example}
\newtheorem{lemma}{Lemma}
\newtheorem{definition}{Definition}

\newtheorem{corollary}{Corollary}
\usepackage[style=ieee,citestyle=numeric-comp,sorting=nty]{biblatex}
\usepackage{algorithm}
\usepackage{algorithmic}
\begin{document}
\title{Capacity Achieving Torn Paper Codes} 


\author{%
  \IEEEauthorblockN{\textbf{Junsheng Liu} and \textbf{Netanel Raviv}}  \IEEEauthorblockA{\\~\\Department of Computer Science and Engineering\\
                    Washington University in St Louis, St Louis, MO, USA\\
                    \texttt{junsheng,netanel.raviv@wustl.edu}}
}


\maketitle


\begin{abstract}
In the torn paper channel, a codeword is cut at random locations, and the resulting error-free fragments are delivered to the decoder as an unordered multiset. 
Although the capacity of this channel can be achieved using random codebooks, decoding such codes generally requires exponential time, whereas existing structured schemes with efficient decoding and provable vanishing error remain strictly below the capacity. 
The interleaved-pilot construction of Shomorony and Vahid embeds a De Bruijn sequence among the symbols of a shifted erasure code and aligns only fragments that are sufficiently long for the pilot to be distinguished from the information symbols through global statistical uniqueness. 
A subsequent local-alignment scheme by Liu and Raviv employs run-length-limited constraints and exclusive all-zero markers to identify pilot positions from local structure, substantially reducing the minimum fragment length that can be aligned. 
Although the second method improves the achievable rate, it still falls short of the torn-paper-channel capacity.

We further improve the method of Liu and Raviv by replacing its fixed pilot sequence with a multilevel \emph{successive local alignment and pilot-recycling} procedure. 
In Liu and Raviv, the pilot sequence is chosen once and must simultaneously balance the length of the pilot sequence against the ability to align short fragments, which prevents the achievable rate from reaching capacity. 
Our construction removes this limitation by reusing pilot sequences across successive decoding levels. 
A pilot sequence with an independent random linear code is first used to align and decode the longest fragments. 
The information recovered at this stage is then recycled as a larger pilot sequence for the next stage.
This process is repeated over multiple levels, so that progressively shorter fragments are recovered while the length of the pilot sequence remains small, and thus reaches a higher achievable rate.
Our construction depends on choosing a series of random linear codes, and we show that for any~$\varepsilon>0$, a choice of such codes which attains rate at least~$\varepsilon$ below the capacity, with high probability as the block length goes to infinity.
Therefore, our construction achieves the capacity of the torn-paper channel.  

\end{abstract}

\section{Introduction}
{\let\thefootnote\relax\footnote{
This work was supported in part by NSF grant CNS-2223032.}}

Consider a simple reconstruction problem. An information sequence is written on a long strip of paper, the strip is torn at random locations, and the resulting pieces are shuffled before reaching the receiver. Although every bit on each individual fragment remains intact and readable, all information about the original order of the fragments is lost. 
Recovering the message therefore requires determining where every useful fragment originated in the sequence. This is the central difficulty of \emph{torn paper coding}.

The problem is motivated by storage and identification systems in which physical fragmentation can destroy positional context without necessarily corrupting the symbols themselves. DNA storage, for example, has emerged as a promising medium for large-scale archival data~\cite{douglas2012logic,goldman2013towards,blawat2016forward,antkowiak2020low}, while DNA molecules are inherently vulnerable to strand breaks. A similar issue arises when identifying 3D-printed objects from information embedded in their physical structure~\cite{wang2024secureinformationembeddingextraction}: once an object is broken, the embedded sequence is distributed among separate pieces whose original locations are no longer apparent. These applications lead naturally to a coding-theoretic abstraction in which a codeword is cut randomly into error-free fragments and the decoder observes only their unordered multiset.

In the torn paper channel, introduced by Shomorony and Vahid~\cite{shomorony2021torn}, a break occurs independently between consecutive codeword symbols with probability $p_n$. 
It was shown in~\cite{shomorony2021torn} that the channel capacity is $e^{-\alpha}$, where $\alpha=\lim_{n\to\infty}p_n\log n$, and that purely random code achieves this capacity. 
Random coding, however, generally requires exponential decoding complexity and is therefore impractical for efficient alignment and reconstruction.

Several works have studied coding specifically for the torn paper channel. 
Shomorony and Vahid~\cite{shomorony2021torn} developed a structured interleaved-pilot construction with a rigorous proof of vanishing decoding error and a provable achievable rate. 
Building on their framework, Liu et al.~\cite{liu2026improved} introduced a local alignment method that reduces the minimum usable-fragment length and consequently achieves a strictly higher provable rate.
Nevertheless, both structured constructions remain strictly below the torn paper channel capacity.
Some experimental results using nested constructions have been proposed in~\cite{nassirpour2023dna,jiao2025efficient} and are shown to achieve favorable empirical rates. 
However, these works do not address large codeword length $n$ and large break probability $\alpha$, and their performance evaluations are primarily experimental, without rigorous analytical proofs of the achievable rates or decoding reliability.

Beyond the original model, a number of related fragmentation settings have also been considered. These include the torn paper channel with lost pieces, in which fragments are deleted according to length-dependent probabilities~\cite{ravi2024recovering}; break-resilient coding for forensic 3D printing~\cite{wang2024breakresilientcodesforensic3d}; reconstruction from a single sufficiently long fragment~\cite{liu2025single}; adversarial torn paper coding~\cite{barlev}; and the sliced channel~\cite{sima2023error}. In this paper, we focus on the original  torn paper channel and aim to close the gap between its known capacity and the achievable rates of coding constructions with provable reliability.

The first provable coding scheme was proposed by Shomorony and Vahid~\cite{shomorony2021torn}. 
Their construction regularly inserts known pilot symbols into the codeword. 
When a fragment is received, the decoder first searches for these pilot symbols and uses them to determine where the fragment came from in the original codeword. 
This works reliably only when the fragment is long enough to contain a sufficiently long pilot subsequence. 
Shorter fragments cannot be positioned with confidence and are therefore discarded. 
As a result, fragments shorter than the required alignment length are discarded, and the loss of these fragments prevents the construction from achieving capacity.
We improved the achievable rate in our previous work~\cite{liu2026improved} by allowing the decoder to use much shorter fragments. 
The main idea was to place special all-zero markers at known locations in the pilot sequence and to encode the information symbols using a Run-Length-Limited code so that the same marker could not appear inside the data. 
Therefore, when a fragment contains one of these markers, the decoder can quickly identify the pilot symbols from the fragment itself. 
It can then use the nearby pilot pattern to determine where the fragment belongs in the original codeword. 
Unlike the earlier method, this local alignment approach does not require a long fragment to distinguish the pilot from the data, so fewer received fragments are discarded.
This leads to a higher provable rate while still guaranteeing that the decoding error probability vanishes. 
However, the construction uses only one fixed pilot throughout the codeword. 
Using more pilot symbols helps recover shorter fragments but leaves less space for information, whereas using fewer pilot symbols reserves more space for information but causes more short fragments to be discarded. 
Due to this remaining tradeoff, the method improves the rate but falls short of achieving the torn paper channel capacity.

The present work improves the previous local alignment method through \emph{successive local alignment with pilot recycling}. 
Instead of using one fixed pilot pattern to align all fragments, our construction works in several decoding levels. 
A modified De Bruijn pilot sequence is inserted to begin the decoding process. 
At first, this pilot sequence is used to align the longest received fragments, which provide enough information to decode the first information level.
The main improvement is that the decoded information is not used only as message data. 
Once a level has been decoded, its known symbols are reused as additional pilot symbols for the next level. 
The decoder therefore has many more known positions than it had at the beginning. 
Using these newly available positions, it can align fragments that are too short to be aligned by the original pilot sequence. 
After another information level is decoded, its symbols are also recycled as pilots. 
Repeating this process allows the decoder to use progressively shorter fragments without inserting a large number of fixed pilot symbols throughout the codeword.
In this way, the construction avoids the main tradeoff of the previous method: it uses only the initial pilot sequence while still recovering many of the short fragments that would otherwise be discarded.
The advantage of the multilevel construction is that different sets of fragments are recovered at different stages. The first stage uses the longest fragments, and each later stage adds progressively shorter ones. 
By increasing the number of levels and making the difference of fragment length threshold in adjacent levels scales sufficiently small, the total recovered information approaches the capacity of the channel. To be more specific, the construction can be chosen to achieve a rate  for any desired gap from capacity while maintaining vanishing average decoding error. 

The remainder of the paper is organized as follows. Section~\ref{Section:problem definition} formally defines the torn paper channel, recalls its capacity, reviews the interleaved pilot construction of~\cite{shomorony2021torn} and the local alignment construction of~\cite{liu2026improved}.
Section~\ref{section:successive alignment} introduces the successive local alignment and pilot-recycling framework and defines the multilevel residue structure used throughout the construction.
Section~\ref{Section:proposed-work encoding} introduces the proposed coding method and explains how the codeword is divided into several information levels together with an initial pilot sequence. Section~\ref{Section:levelized-decoding} describes the decoding process, beginning with the longest fragments and then reusing the recovered information as new pilot symbols to align progressively shorter fragments. 
Section~\ref{Section:rate-and-reliability} proves that the proposed method is reliable, shows that the loss caused by ambiguous fragments becomes negligible, and establishes that the achievable rate approaches the torn-paper-channel capacity. \
\section{Problem Definition and Previous Results}\label{Section:problem definition}

\subsection{The torn paper channel}\label{section:torn-paper-channel}
We consider the problem of encoding 
a message in $\{0,1\}^k$ into a codeword~$\boldx=(\boldx[1],\ldots,\boldx[n])\in\{0,1\}^n$. 
During storage or transmission, the codeword is subject to random physical breaks between consecutive bits. 
For each index $i\in \{1,2,\dots,n-1\}$, a break occurs between $\boldx[i]$ and $\boldx[i+1]$ with probability $p_n$ and no break with probability $1-p_n$, where~$p_n$ is a channel parameter which may depend on~$n$.
The fragmentation process does not delete or alter any bits; it only introduces random cuts between consecutive positions. 
More specifically, let $N_1, N_2, \ldots$ be i.i.d. $\operatorname{Geometric}(p_n)$ random variables representing the lengths of the broken pieces, and let $K$ be the smallest index such that $$\sum_{i=1}^{K} N_i \ge n.$$
The channel tears the string $\boldx$ into fragments $\vec{X}_1, \vec{X}_2, \ldots, \vec{X}_K$, where 
$$\vec{X}_i \triangleq \left( \boldx[1+\textstyle\sum_{j=1}^{i-1} N_j], \dots, \boldx[\textstyle\sum_{j=1}^i N_j] \right)$$
for~$1\le i<K$, and the final fragment is bounded by the total length $n$:
$$\vec{X}_K \triangleq \left( \boldx[1+\textstyle\sum_{j=1}^{K-1} N_j], \dots, \boldx[n] \right).$$

The decoder receives the unordered multiset $\{\{\vec{X}_1, \vec{X}_2, \dots, \vec{X}_K\}\}$ of all resulting fragments, and the goal is to design a family of binary codewords $\mathcal{C} \subseteq \{0,1\}^n$ such that the average probability of decoding error—defined as
\[
P_e
=
\frac{1}{|\mathcal{C}|}
\sum_{\boldc\in\mathcal{C}}
\Pr\!\left(
\psi\!\left(\{\{\vec{X}_j\}\}_{j=1}^K\right)\neq \boldc
\right),
\]
for a decoding function \(\psi\) that maps the multiset of fragments back to the original codeword—vanishes as \(n\to\infty\).
Clearly, for any practical use the function~$\psi$ must be efficiently computable. 
The channel is illustrated in Figure~\ref{Figure: channel}, where raw information is encoded into a string, the string is torn into pieces, and the decoder must decode the correct information based on the unordered set of substrings.

We assume that the received fragments retain their orientation, i.e., when receiving a fragment~$(\boldx[i],\ldots,\boldx[i+\ell])$, the decoder does not confuse this fragment with its reverse $(\boldx[i+\ell],\ldots,\boldx[i])$.
This assumption reflects DNA storage settings, where the underlying biochemical directionality of the DNA strand dictates the correct reading direction~\cite{wiki:Directionality_molecular_biology} and is similarly mirrored in 3D-printing forensics where asymmetric inner codes and orientation-specific markers are embedded into the model’s geometry to break rotational symmetry and ensure unique readability~\cite{wang2024secureinformationembeddingextraction}.
Thus, the orientation of each fragment is assumed to be known.

\begin{figure}[ht]
    \centering
    \includegraphics[width=0.6\linewidth]{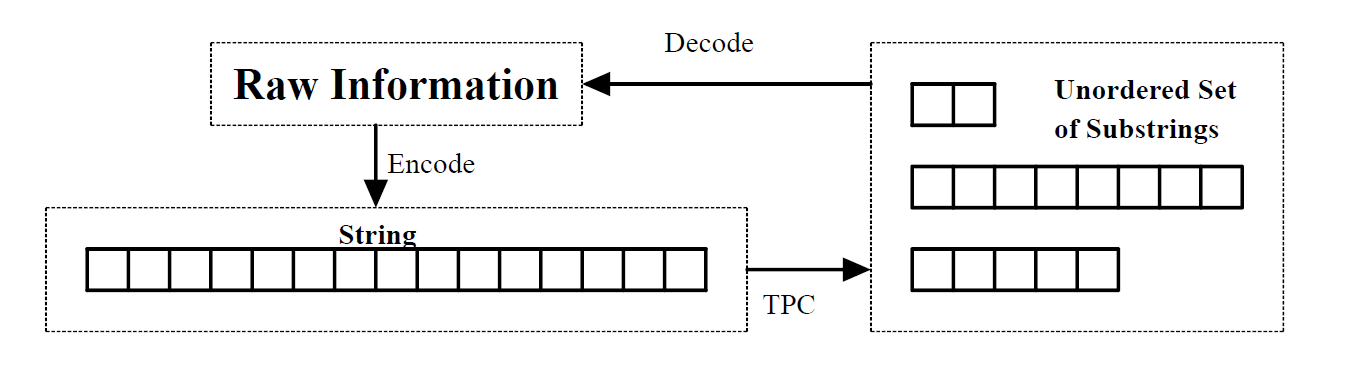}
    \caption{A sketch of torn paper coding.}
    \label{Figure: channel}
\end{figure}

The capacity of the associated channel (called the \textit{torn paper channel}) has been studied in~\cite{shomorony2021torn}.
They defined $\alpha=\lim_{n\to\infty}p_n\log n$ and proved that the capacity of this channel is $e^{-\alpha}$.
In the case that $p_n=o(\frac{1}{\log n})$, we have that $e^{-\alpha}$ approaches~$1$ when $n\xrightarrow[]{}\infty$, and similarly, if $p_n=\omega(\frac{1}{\log n})$ then $e^{-\alpha}$ approaches~$0$ when $n\xrightarrow[]{}\infty$.
Thus, the remaining regime of interest is $p_n=\Theta(\frac{1}{\log n})$, i.e., where $\alpha$ is constant.
\subsection{The interleaved-pilot method of Shomorony and Vahid}\label{section:interleaved pilot scheme}
The first structured construction with a provable vanishing error probability is the interleaved-pilot scheme of~\cite{shomorony2021torn}. 
We begin with the definition of a De Bruijn sequence. 
\begin{definition}\label{Definition:De Bruijn sequence}
A \emph{De Bruijn sequence} of order $\log n$ over $\mathbb{F}_2$ is a binary sequence of length $n$ in which every possible binary string of length $\log n$ appears at most once as a (possibly cyclic) contiguous substring.
\end{definition}

For a fixed positive integer $m$, consider a De Bruijn sequence $\mathbf{p}$ of length $n/m$, which we refer to as the \emph{pilot} sequence. 
According to Definition~\ref{Definition:De Bruijn sequence}, every sequence of length $\log(n/m)$ appears in $\mathbf{p}$ at most once.
To construct the codebook, the authors of~\cite{shomorony2021torn} interleave the pilot sequence with codewords of an erasure code. 
Specifically, let $\cC_{\mathrm{er}}$ be an erasure code of length $n/m$ and rate $R_{\mathrm{er}}$. They form a \emph{shifted} version of this code by drawing an i.i.d.\ $\operatorname{Bernoulli}(1/2)$ random vector in $\{0,1\}^{n/m}$ and XORing it with each codeword of $\cC_{\mathrm{er}}$, thereby obtaining the code $\Tilde{\cC}_{\mathrm{er}}$.
A key property required for successful global alignment is that the pilot sequence substrings should not coincide with any substring of a randomly shifted erasure codeword as follows.
\begin{observation}\label{Lemma: probability of match in ilan's method}
    For any $\mathbf{s} \in \Tilde{\cC}_{\mathrm{er}}$, the probability that $\mathbf{s}$ shares a length-$k$ substring with the pilot sequence is
\begin{align*}
    \Pr\Big(&\mathbf{p}[i:i+k-1]=\mathbf{s}[j:j+k-1] \text{ for any distinct }i,j\in[\tfrac{n}{m}-k]
    \Big) \leq (\tfrac{n}{m})^22^{-k},
\end{align*} 
where the probability is taken over the choices of the random vector used to construct~$\tilde{\cC}_\text{er}$ from~$\cC_\text{er}$, and where $\mathbf{p}[i:i+k-1]$ refers to the $i$-th through $(i+k-1)$-th bits of~$\mathbf{p}$.
\end{observation}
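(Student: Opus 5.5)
The plan is a union bound over all pairs of positions, combined with the observation that each fixed window of $\mathbf{s}$ is uniformly distributed. Write $\mathbf{s}=\boldc\oplus\mathbf{z}$, where $\boldc\in\cC_{\mathrm{er}}$ is fixed and $\mathbf{z}\in\{0,1\}^{n/m}$ is the i.i.d.\ $\operatorname{Bernoulli}(1/2)$ shift vector. The only randomness is in $\mathbf{z}$. The pilot $\mathbf{p}$ and the codeword $\boldc$ are deterministic, because the statement holds for any fixed codeword, i.e., any $\mathbf{s}\in\Tilde{\cC}_{\mathrm{er}}$ indexed by its underlying $\boldc$.

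First, I fix a pair of indices $i,j\in[\tfrac{n}{m}-k]$. The window $\mathbf{s}[j:j+k-1]=\boldc[j:j+k-1]\oplus\mathbf{z}[j:j+k-1]$ is the XOR of a fixed string with $k$ i.i.d.\ uniform bits. It is therefore uniformly distributed on $\{0,1\}^k$, and this holds regardless of $\boldc$. Since $\mathbf{p}[i:i+k-1]$ is a fixed string, the event $\mathbf{p}[i:i+k-1]=\mathbf{s}[j:j+k-1]$ has probability exactly $2^{-k}$.

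Second, I take a union bound over all pairs $(i,j)$. There are at most $(\tfrac{n}{m}-k)^2\le(\tfrac{n}{m})^2$ such pairs, and requiring $i\neq j$ only shrinks the index set. This gives the bound $(\tfrac{n}{m})^2 2^{-k}$.

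There is no real obstacle here. The only point needing care is that, for each fixed $j$, the window of $\mathbf{z}$ covers $k$ distinct independent coordinates, so the event's probability does not depend on $\boldc$ or on $\mathbf{p}$. Overlaps between windows for different $(i,j)$ would break independence across pairs, but the union bound does not need independence. The same argument goes through if "any" is read as including $i=j$.
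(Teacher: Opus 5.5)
Your proposal is correct and is essentially the paper's argument. The paper only remarks that the observation follows from a straightforward union bound. You supply exactly that argument: each fixed window of $\mathbf{s}=\boldc\oplus\mathbf{z}$ is uniform on $\{0,1\}^k$, so each pair $(i,j)$ matches the fixed pilot window with probability $2^{-k}$, and summing over at most $(\tfrac{n}{m})^2$ pairs gives the bound.
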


Observation~\ref{Lemma: probability of match in ilan's method} is proved via a straightforward application of the union bound. 
Therefore, if $k=(2+\delta)\log n$ for some $\delta>0$, this probability goes to $0$ when $n\xrightarrow[]{}\infty$. 
This means that for any $\epsilon>0$ there exists a large enough~$n$ so that it is possible to choose at least a $1-\epsilon$ fraction of the codewords in $\Tilde{\cC}_{\mathrm{er}}$ such that \textit{any} $(2+\delta)\log n$ segment in \textit{any} of its codewords is distinct from \textit{any} $(2+\delta)\log n$ segment of the pilot sequence~$\mathbf{p}$.

Let $\cS=\{\bolds_1,\bolds_2,\dots,\bolds_{|\cS|}\}\subset\Tilde{\cC}_{\mathrm{er}}$ be a set of $(1-\epsilon)2^{\frac{n}{m}R_{\mathrm{er}}}$ such codewords, i.e., where none of them shares any $(2+\delta)\log n$ contiguous subsequence with $\mathbf{p}$. 
Ref.~\cite{shomorony2021torn} builds a codeword $\boldc$ by taking $m-1$ codewords 
from $\cS$ and interleaving their symbols with symbols from $\mathbf{p}$. 
More precisely, for any function~$u:[m-1]\to[|\cS|]$, they build the codeword $\boldc_u=(\boldc_u[0],\boldc_u[1],\ldots, \boldc_u[n-1])$ where for every $t\in\{0,1,\dots,n/m-1\}$, we have
\begin{equation}
    \boldc_u[mt+j]=
    \begin{cases}
    \mathbf{p}[t], & j= 0, \\
    \bolds_{u(j)}[t], & j=1, 2,3,\dots,m-1.
\end{cases}
\end{equation}
The resulting codebook is of size $|\cS|^{m-1}=(1-\epsilon)^{m-1}2^{(1-\frac{1}{m})nR_{\mathrm{er}}}$. 

Subsequently, decoding begins by recovering the global alignment of the fragments through the embedded pilot symbols drawn from the De Bruijn sequence $\mathbf{p}$. 
Since a window of length $(2+\delta)\log n$ from the pilot sequence coincides with a window of the shifted erasure code with probability which tends to zero as $n \to \infty$, every fragment whose length exceeds $(2+\delta)m\log n$ must contain at least one pilot window of length $(2+\delta)\log n$ that is uniquely identifiable. 
Moreover, since every $\log(n/m)$ window of the De Bruijn sequence $\mathbf{p}$ is distinct, the occurrence of such a window immediately reveals the fragment’s absolute position in the codeword, a process we call \textit{global alignment}. 
Once these long fragments are anchored, the short fragments—those too small to contain any identifiable pilot window—are discarded, and the resulting missing positions are treated as erasures. 
The reconstruction of the full message is then completed by the outer shifted-erasure code $\Tilde{\cC}_{\mathrm{er}}$, which deterministically corrects the erased locations created by the random breaking process.

Notice that in the interleaved-pilot scheme, the decoder identifies the locations of the pilot sequence bits only in sufficiently long fragments. 
The rationale is that the probability of a long window of the De Bruijn sequence coinciding with a window of any shifted erasure codeword is vanishingly small. 
However, this approach has an inherent gap: the number of bits required to ensure that a De Bruijn window is \emph{not} identical to a shifted erasure codeword window is approximately  $2\log n$ (Observation~\ref{Lemma: probability of match in ilan's method}), whereas the number of bits needed to uniquely identify a De Bruijn window is only $\log (n/m)$. 
Consequently, the decoder in~\cite{shomorony2021torn} restricts itself to fragments of length at least $(2+\delta)m\log n$, discarding the large number of shorter fragments, which reduces the effective rate since all such fragments are treated as erasures.
This observation motivates the question of whether one can identify the locations of the pilot sequence bits by exploiting the local pattern surrounding each De Bruijn bit, rather than relying solely on long fragments. 

\subsection{The local-alignment method by Liu and Raviv}\label{section:previous-local-alignment}

Our previous work~\cite{liu2026improved} improves the interleaved-pilot method by allowing the decoder to use much shorter fragments. 
Instead of comparing a long pilot window with all possible windows of shifted erasure code, the decoder looks for a short pattern that appears only in the pilot sequence.
To create this pattern, each shifted erasure codeword is first encoded so that it cannot contain a long run of zeros. 
Let $\beta$ be a positive integer such that~$\beta=o(\log(n))$ and~$\beta=\omega(1)$.
A simple Run Length Limited encoder ensures that $0$-runs of length~$\beta$ never appear in an erasure codeword.
The pilot is constructed from a De Bruijn sequence with all-zero pattern $0^\beta$ inserted at designated positions.
The final codeword is then formed in the same way as in the interleaved-pilot method: one pilot sequence is interleaved with $m-1$ erasure codewords.
Since $\beta=o(\log(n))$, the extra bits introduced by this RLL encoder form a vanishing fraction of the codeword and therefore cause only a vanishing rate loss.

When a fragment of length at least $(1+\delta)m\log n$ is received, the decoder separates it into the $m$ possible de-interleaved sequences. 
Since the marker cannot appear in an RLL-coded erasure code and the de-interleaved subsequence from the pilot sequence contains a complete marker, the decoder can identify which symbols of the fragment belong to the pilot simply by locating the marker. 
This step is called \emph{local alignment}, since the pilot is recognized inside the fragment rather than from a comparison with the erasure code.
After identifying the pilot subsequence, the decoder removes the inserted markers and finds the bits belonging to the De Bruijn sequence which then reveal the fragment's position in the original codeword. 
All fragments that can be positioned in this way are placed in their correct locations. 
Fragments that are too short to contain sufficiently many bits from the  pilot sequence are discarded, and are treated as erasures. 
The random binary linear code decoder is then used to correct these erasures.

The main advantage of local alignment is that it reduces the required fragment length threshold from approximately $(2+\delta)m\log n$ in the method of~\cite{shomorony2021torn} to approximately $(1+\delta)m\log n$. 
Consequently, many shorter fragments that were previously discarded can now be used for decoding. 
Although this method improves the achievable rate substantially, it still does not achieve capacity due to the following tradeoff. 
Using more pilot symbols allows the decoder to use shorter fragments but leaves fewer positions for information. 
Using fewer pilot symbols leaves more positions for information but causes more short fragments to be discarded. Therefore, even after choosing the optimal~$m$, the achievable rate remains below the torn-paper-channel capacity.

The construction proposed in the next section improves this method through \emph{successive local alignment and pilot recycling}. The original pilot sequence is used to decode the first information level.
Once that level has been recovered, its known symbols are reused as additional pilot symbols to align shorter fragments. 
Repeating this process allows the decoder to use progressively shorter fragments without inserting more fixed pilot symbols into the codeword.

\section{Successive local alignment with pilot recycling}\label{section:successive alignment}
We propose an improved coding scheme that builds on the local-alignment technique of~\cite{liu2026improved}. The main new ingredient is a multilevel pilot-recycling procedure. 
At the initialization level, a fixed pilot sequence is used to identify and globally align sufficiently long fragments. 
The information recovered from those fragments is then decoded and reused as an additional larger pilot sequence at the next level. 
Repeating this procedure allows the decoder to decode progressively shorter fragments. 
At deeper levels of the recursion process, a fragment is decoded only when it admits a unique global placement; fragments with more than one consistent placement are discarded. 
It is shown in Section~\ref{Section:rate-and-reliability} that the total normalized length of these ambiguous fragments vanishes, so their removal creates only a negligible number of additional erasures.

The construction has two components. 
First, as in~\cite{liu2026improved}, all-zero markers are embedded inside the pilot sequence at special locations, whereas every information-bearing level codeword is constrained by an $\mathrm{RLL}(0,\beta-1)$ encoder (see Definition~\ref{definition:random-linear-erasure-codes} for $\beta$). 
This makes the pilot sequences locally identifiable, as in~\cite{liu2026improved}.
Second, we partition the set of codeword positions into \textit{residue classes} according to certain modular conditions over their indices. 
Then, at each level, different residue classes are used to assign
erasure-coded information.
Once one level has been decoded, the corresponding decoded positions become known and enlarge the pilot sequence available to the next level.

We first define the levelized design;
subsequent Lemma~\ref{lemma:periodic-coordinate-properties} proves the properties used throughout the encoding, decoding, and rate analysis.
The following definitions set the fundamental relations among the parameters of our constructions; as shown in the sequel, different parameters will lead to different rates. 
\begin{definition}
\label{definition:alignment-scales}
Fix integers $T>1$ and $m>1$, and let
$
J\triangleq m(T-1).
$
Set $t_0\triangleq 1$ and, for every $i\in[J]$, define
$
t_i\triangleq 1+\frac{i}{m},
$
and write \(t_i=b_i/a_i\), with
\(a_i,b_i\in\mathbb N\), \(\gcd(a_i,b_i)=1\) for every~\(i\). 
Furthermore, let
\(Q\triangleq\operatorname{lcm}(b_1,\ldots,b_J)\), and let
\(q_i\triangleq Q/t_i=Qa_i/b_i\in\bN\).
\end{definition}
For convenience of exposition, we assume that~$Q$ in Definition~\ref{definition:alignment-scales} divides~$n$.
The parameter \(t_i\) specifies the alignment scale at level \(i\). More
precisely, for a fixed constant \(\delta>0\), the decoder at level \(i\)
aims to align every fragment of length at least
\((1+\delta)t_i\log n\). Since \(t_J=T\), the decoder begins at level~\(J\)
by aligning the longest fragments and then proceeds recursively to lower
levels. At each step, the bits recovered from the previously decoded levels
increase the density of known positions, thereby allowing progressively
shorter fragments to be aligned.
At the final level, the decoder aligns all fragments of length at least
\((1+\delta)t_1\log n\). By choosing \(m\) sufficiently large, $t_1$ can be made arbitrarily close to $1$ which approaches the threshold required for achieving
capacity, and hence the resulting achievable rate can be made arbitrarily
close to capacity.

We next define the residue classes used to assign codeword positions to
the fixed pilot sequence and to the different coding levels.
For example, if a residue
class is labeled by \(r\in\{0,1,\ldots,Q-1\}\), then it contains the
positions \(r,Q+r,2Q+r,\ldots\). 
During decoding, once the bits in one residue class have
been recovered, those positions become known and can be reused to help
align shorter fragments at the next level. 
The next definitions are illustrated in Example~\ref{example:nested-periodic-alignment-sets} which follows.

\begin{definition}
\label{definition:nested-residue-sets}
Let \(\cR_0\triangleq\bZ_Q=\{0,1,\ldots,Q-1\}\). Define the \emph{pilot
residue set} by
\(\cR_J\triangleq\{0,T,2T,\ldots,(q_J-1)T\}\). 
Write the residues outside~$\cR_J$ in increasing order as
\(\bZ_Q\setminus\cR_J
=\{u_0,u_1,\cdots,u_{Q-q_J-1}\}\), where~$u_0<u_1<\ldots<u_{Q-q_J-1}$. 
For every \(i\in[J-1]\), define
\(\cR_i\triangleq\cR_J\cup
\{u_0,u_1,\ldots,u_{q_i-q_J-1}\}\), and for $i\in[J]$ let \(\cB_i\triangleq\cR_{i-1}\setminus\cR_i\).
\end{definition}

We now extend these residue sets from one period of length \(Q\) to the full length-\(n\) codeword by collecting all codeword bits whose residues modulo \(Q\) belong to the corresponding sets.

\begin{definition}\label{definition: Si and Ai}
    Let \(\cS_0\triangleq\{0,1,\ldots,n-1\}\) and for every \(i\in[J]\), let
\(\cS_i\triangleq\{x\in\{0,1,\ldots,n-1\}:x\bmod Q\in\cR_i\}\).
Further, set \(\cA_i\triangleq\cS_{i-1}\setminus\cS_i=\{x\in\{0,1,\ldots,n-1\}:x\bmod Q\in\cB_i\}\). 
\end{definition}

By construction, the residue sets are nested as
\(\cR_J\subseteq\cR_{J-1}\subseteq\cdots\subseteq\cR_1\subseteq\cR_0\),
with \(|\cR_i|=q_i\). 
The set \(\cR_J\) is reserved for the fixed
pilot sequence, while the difference
\(\cB_i=\cR_{i-1}\setminus\cR_i\) is assigned to the level-\(i\)
information codeword using RLL-encoded random binary linear code.
During encoding, the level-\(i\) codeword is
placed at all codeword positions whose residues modulo \(Q\) belong to
\(\cB_i\), i.e. the set \(\cA_i\).

The same nested residue assignment determines the recursive decoding order. 
Initially, only the pilot bits in \(\cS_J\) are known to the decoder. 
After level \(J\) is decoded,
bits in \(\cA_J\) also become known, enlarging the set of known bits to
\(\cS_{J-1}=\cA_J\mathbin{\dot\cup}\cS_J\) (see Remark~\ref{remark: disjoint union} for $\dot\cup$). 
More generally, after levels
\(i+1,\ldots,J\) have been decoded, the decoder knows all the bits in
\(\cS_i\). 
Decoding level~\(i\) reveals the bits in
\(\cA_i\), so the set of known bits expands to
\(\cS_{i-1}=\cA_i\mathbin{\dot\cup}\cS_i\). 
Thus, each decoded level
provides a larger set of known bits for aligning progressively shorter fragments at the
next level.
The following Example~\ref{example:nested-periodic-alignment-sets} illustrates this nested residue assignment, together with the corresponding placement of the
pilot sequence and all the level codewords, and is shown in Figure~\ref{Figure: recursive alignment}.

\begin{remark}\label{remark: disjoint union}
    $\mathbin{\dot\cup} $ represents a disjoint union of sets, meaning the sets being combined share no elements in common.
\end{remark}
\begin{example}\label{example:nested-periodic-alignment-sets}
Let \(m=2\), \(T=3\), and \(t_1=3/2\). By
Definition~\ref{definition:alignment-scales}, we have
\(J=1+m(T-t_1)=4\), and therefore
\[
(t_0,t_1,t_2,t_3,t_4)
=
\left(1,
\frac32,\,
2,\,
\frac52,\,
3
\right).
\]
hence
\((b_1,b_2,b_3,b_4)=(3,2,5,3)\), so
\(
Q=\operatorname{lcm}(3,2,5,3)=30\).
Since \(q_i=Q/t_i\) and \(q_0\triangleq Q\), we obtain
\[
(q_0,q_1,q_2,q_3,q_4)=(30,20,15,12,10).
\]
By Definition~\ref{definition:nested-residue-sets} the pilot residue set is
\[
\cR_4
=
\{0,3,6,9,12,15,18,21,24,27\},
\]
the remaining residues are
\[
\bZ_{30}\setminus\cR_4
=
\{1,2,4,5,7,8,10,11,13,14,
16,17,19,20,22,23,25,26,28,29\}.
\]
By Definition~\ref{definition:nested-residue-sets}, the nested
residue sets are obtained by successively adding the required number of
nonpilot residues, i.e.,
\[
\begin{aligned}
\cR_3&=\cR_4\cup\{1,2\},
&\cB_4&=\{1,2\},\\
\cR_2&=\cR_4\cup\{1,2,4,5,7\},
&\cB_3&=\{4,5,7\},\\
\cR_1&=\cR_4\cup\{1,2,4,5,7,8,10,11,13,14\},
&\cB_2&=\{8,10,11,13,14\},\\
\cR_0&=\{0,1,\ldots,29\},
&\cB_1&=\{16,17,19,20,22,23,25,26,28,29\}.
\end{aligned}
\]
\end{example}
\begin{figure}[ht]
    \centering
    \includegraphics[width=0.9\linewidth]{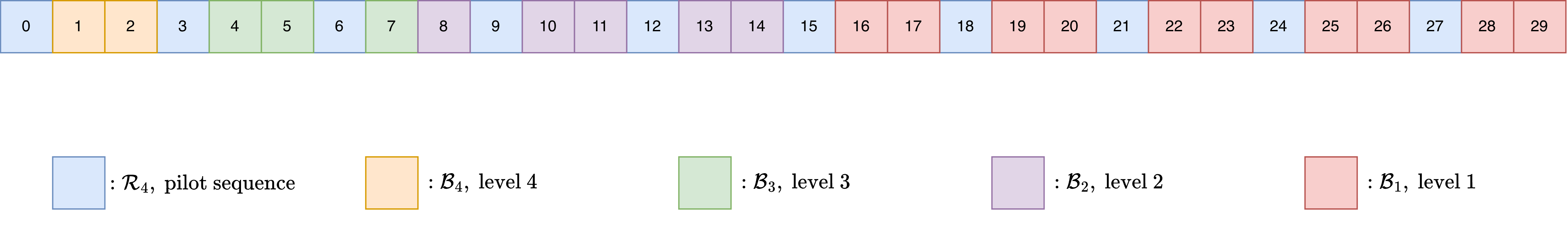}
    \caption{An illustration of the partition of~~$\bZ_Q$ to residue classes, given in Example~\ref{example:nested-periodic-alignment-sets}.}
    \label{Figure: recursive alignment}
\end{figure}
Definition~\ref{definition: Si and Ai} extends the local modular relations in Definition~\ref{definition:nested-residue-sets} to the entire codeword, as exemplified next.

\begin{example}\label{example:full-position-sets}
Following Example~\ref{example:nested-periodic-alignment-sets}, we have
\(Q=30\) and the residue sets
\(\cR_4,\ldots,\cR_0\) and
\(\cB_4,\ldots,\cB_1\).
We apply Definition~\ref{definition: Si and Ai} to obtain the
corresponding $\cS_i$'s and $\cA_i$'s in the length-\(n\) codeword.
We have
\[
\cS_4
=
\{30u+b:
u=0,1,\ldots,n/30-1,\;
b\in\{0,3,6,9,12,15,18,21,24,27\}\}=\{0,3,6,\ldots,n-3\},
\]
\[
\begin{aligned}
\cS_3
&=\cS_4\mathbin{\dot\cup}
\{30u+b:
u=0,\ldots,n/30-1,\;
b\in\cB_4=\{1,2\}\},
&
\\
\cA_4
&=
\{30u+b:
u=0,\ldots,n/30-1,\;
b\in\cB_4=\{1,2\}\},
\\
\cS_2
&=\cS_4\mathbin{\dot\cup}
\{30u+b:
u=0,\ldots,n/30-1,\;
b\in\cB_4\cup\cB_3=\{1,2,4,5,7\}\},
&
\\
\cA_3
&=
\{30u+b:
u=0,\ldots,n/30-1,\;
b\in\cB_3=\{4,5,7\}\},
\\
\cS_1
&=\cS_4\mathbin{\dot\cup}
\{30u+b:
u=0,\ldots,n/30-1,\;
b\in\cB_4\cup\cB_3\cup\cB_2=\{1,2,4,5,7,8,10,11,13,14\}\},
&
\\
\cA_2
&=
\{30u+b:
u=0,\ldots,n/30-1,\;
b\in\cB_2=\{8,10,11,13,14\}\},
\\
\cS_0
&=\{0,1,\ldots,n-1\},
&
\\
\cA_1
&=
\{30u+b:
u=0,\ldots,n/30-1,\;
b\in\cB_1=\{16,17,19,20,22,23,25,26,28,29\}\}.
\end{aligned}
\]
For example,
\[
\cA_4=\{1,2,31,32,61,62,\ldots\}
\mbox{ and }
\cA_3=\{4,5,7,34,35,37,64,65,67,\ldots\}.
\]
\end{example}

To put Definitions~\ref{definition:alignment-scales}-\ref{definition: Si and Ai} in context, we briefly explain the encoding/decoding process; the full details are in Section~\ref{Section:proposed-work encoding} and Section~\ref{Section:levelized-decoding}.
\begin{figure}[ht]
    \centering    \includegraphics[width=1.0\linewidth]{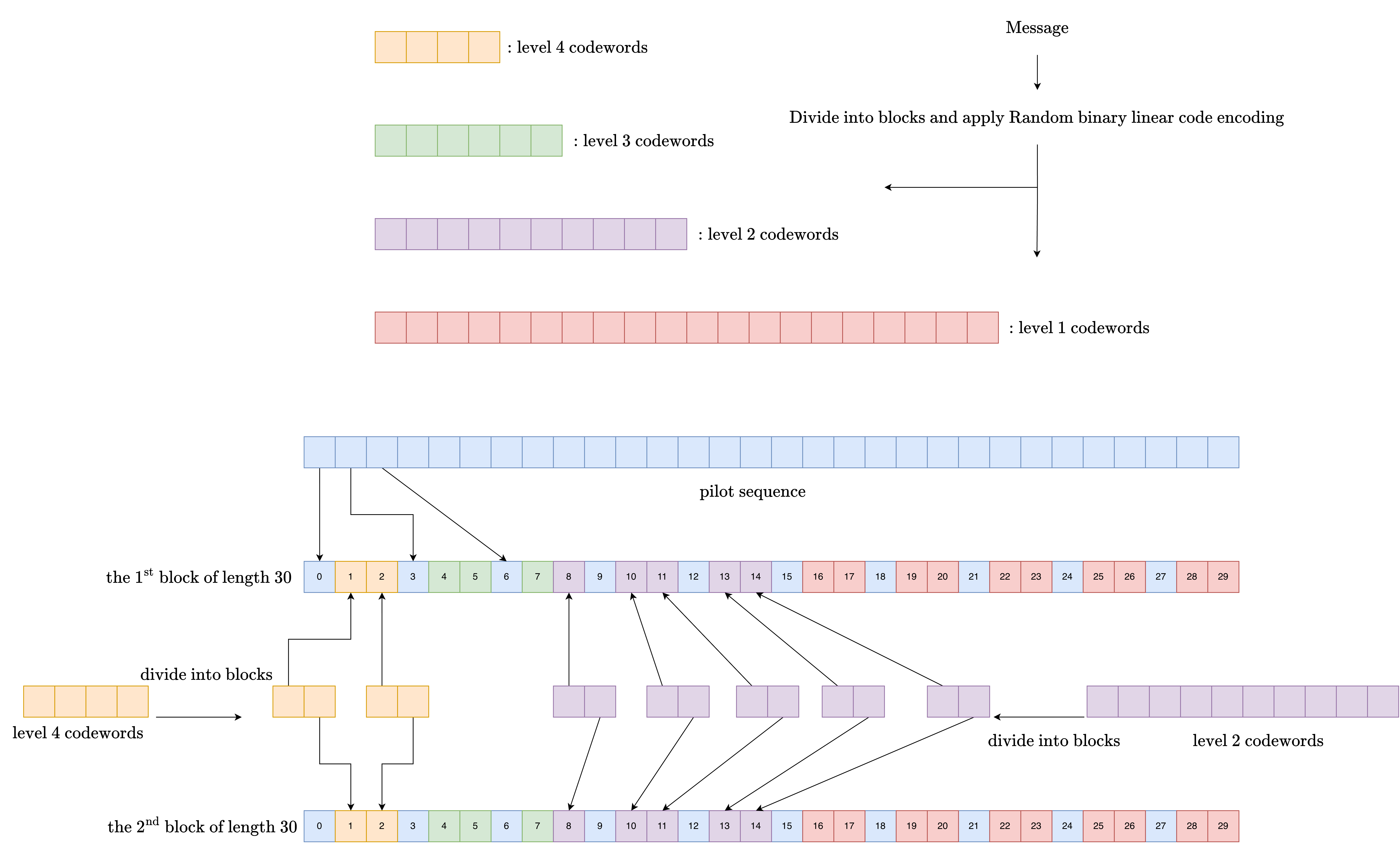}
    \caption{A sketch of encoding.}
    \label{Figure: encoding}
\end{figure}
Figure~\ref{Figure: encoding} illustrates the encoding process using the residue sets and codeword bits constructed in Examples~\ref{example:nested-periodic-alignment-sets} and~\ref{example:full-position-sets}. 
The message is first divided into four parts, and each part is encoded by the random binary linear code corresponding to one level, producing the level-$1$, level-$2$, level-$3$, and level-$4$ codewords. 
The fixed pilot sequence is generated separately and the codeword is then viewed as consecutive blocks of length $Q=30$.
Within each length-$30$ block, the pilot bits are placed at the positions whose residues belong to
\[
\cR_4=\{0,3,6,9,12,15,18,21,24,27\},
\]
whereas the four level codewords are placed according to
\[
\begin{aligned}
\cB_4&=\{1,2\},\\
\cB_3&=\{4,5,7\},\\
\cB_2&=\{8,10,11,13,14\},\\
\cB_1&=\{16,17,19,20,22,23,25,26,28,29\}.
\end{aligned}
\]
For every level $i$, the corresponding level codeword is self-interleaved across the residue classes in $\cB_i$. 
More precisely, the level-$i$ codeword is divided into $|\cB_i|$ equal consecutive blocks, and each block is assigned to one residue class in $\cB_i$. 
The bits of that block are then placed successively in the same residue class across consecutive length-$Q$ blocks of the codeword. In this example, the $|\cB_4|=2$ blocks of the level-$4$ codeword are placed in residues $1$ and $2$ modulo $30$, while the $|\cB_2|=5$ blocks of the level-$2$ codeword are placed in residues $8,10,11,13,$ and $14$ modulo $30$.
The pilot sequence is placed consecutively at positions $0,3,6,\ldots$. 
Hence, in every length-$30$ block, the pilot occupies the whole $\cR_4$, while each level codeword occupies exactly the positions corresponding to its set $\cB_i$. 
Repeating this placement over all length-$Q$ blocks fills the entire codeword, with every position carrying either one pilot bit or one bit from exactly one level codeword.
\begin{figure}[ht]
    \centering
    \includegraphics[width=1.0\linewidth]{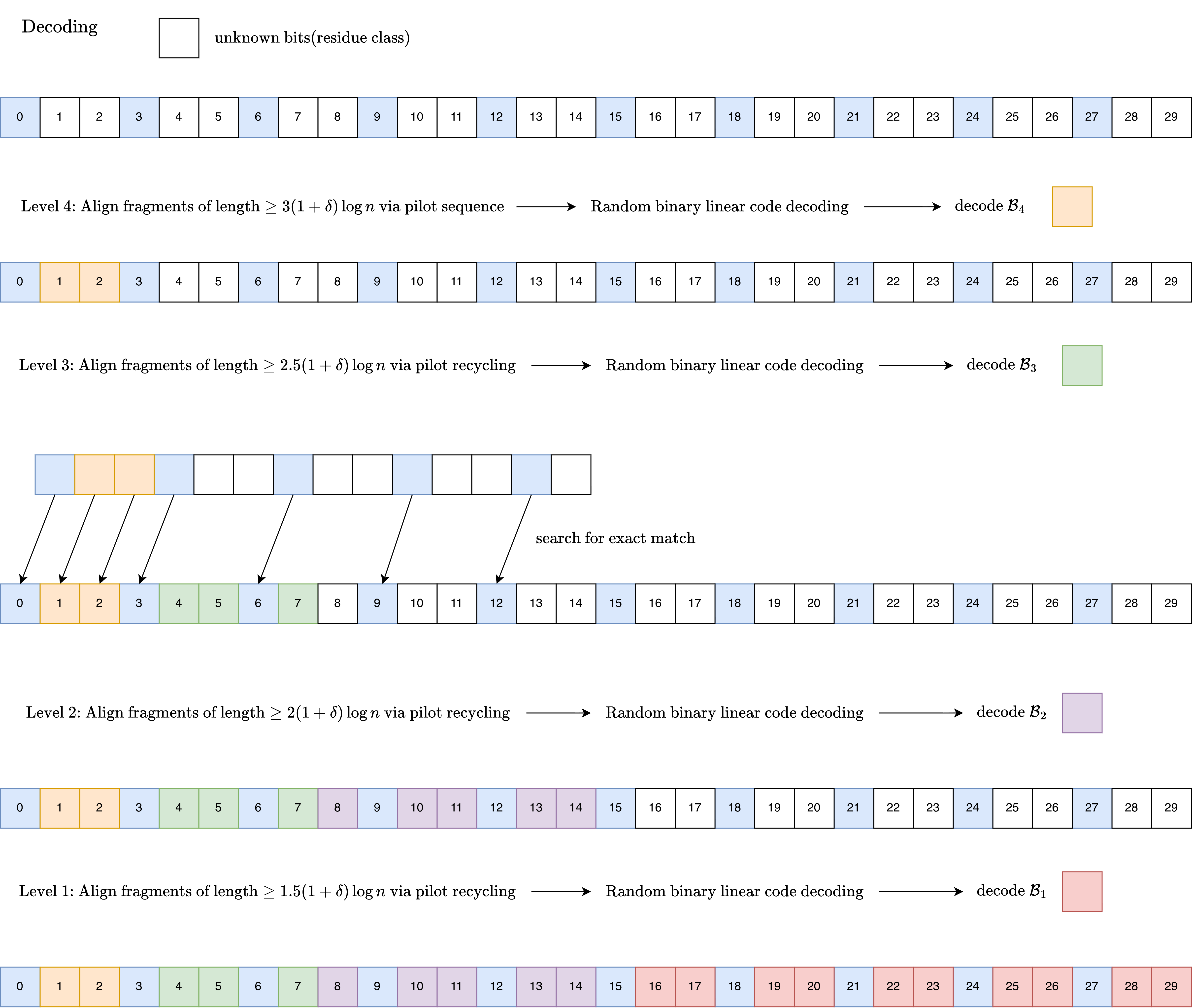}
    \caption{A sketch of decoding.}
    \label{Figure: decoding}
\end{figure}

We next provide an informal description for the decoding process.
Figure~\ref{Figure: decoding} shows how the general recursive decoding rule is applied to the concrete residue assignment given in Example~\ref{example:nested-periodic-alignment-sets} and Example~\ref{example:full-position-sets}. 
Initially, only the fixed pilot positions
in \(\cS_4\), i.e., the
positions whose residues belong to
\[
\cR_4=\{0,3,6,9,12,15,18,21,24,27\},
\]
are known. 
All positions belonging to
\(\cB_4,\cB_3,\cB_2,\cB_1\) are initially treated as unknown bits, represented by white squares in Figure~\ref{Figure: decoding}.
 Starting at level \(4\), the decoder first uses fragments of length at least
\(3(1+\delta)\log n\). 
As explained later in
Section~\ref{subsection:initial-pilot-decoding}, these fragments are
aligned using the fixed pilot sequence.
Once these fragments have been placed at
their correct positions, a random binary linear code decoder described in
Section~\ref{subsection:level-erasure-decoding} reveals the entire level-\(4\) codeword
carried by
$\cB_4=\{1,2\}$.
After level \(4\) is
decoded, all codeword bits on \(\cA_4\) are known.

At this point, the first pilot-recycling step occurs. The recovered
level-\(4\) bits are no longer unknown bits.
They can therefore be
used together with the original pilot bits when aligning the next set of
fragments. 
In this example, after level \(4\) is decoded,  the known residue classes increase from $10$ pilot residues in
\(\cR_4\) to $12$ residues in
$\cR_3
=
\cR_4\cup\{1,2\}.$
In the figure, the orange squares numbered \(1\) and \(2\) therefore change from
unknown to known and are reused as additional pilot bits.

The decoder then moves to level \(3\). 
It uses all positions in \(\cS_3\) to align
fragments of length at least \(2.5(1+\delta)\log n\), which are shorter than
the fragments used at level \(4\). 
More precisely, Algorithm~\ref{alg:recycled-pilot-alignment} tests possible
global starting positions of a fragment.
For a candidate starting position,
the decoder compares every fragment bit with bits in
\(\cS_3\) and checks whether it agrees with the already known codeword
bit at that position. 
This comparison is illustrated by the
``search for exact match'' part of Figure~\ref{Figure: decoding}. 
The true starting position
always passes the comparison since the torn-paper channel only cuts and
reorders the codeword and does not change its bits. 
If exactly one
candidate position agrees with all known bits, the fragment can be uniquely aligned at that position.
If several candidate positions agree with all known bits,
the fragment is declared ambiguous and is discarded.

After the uniquely aligned fragments have been placed, they provide
observations of the level-\(3\) codeword in
$
\cB_3=\{4,5,7\}.
$
The remaining unknown level-\(3\) bits are treated as erasures and are
recovered using the level-\(3\) random binary linear code. 
Once level \(3\)
has been decoded, the green squares in Figure~\ref{Figure: decoding} also become known.
The known set therefore expands from
\(\cS_3\) to
\(\cS_2=\cS_3\mathbin{\dot\cup}\cA_3,\) where \(q_2=15\) of the \(30\) residue classes are now known and can be
used for the next step.

At level \(2\),  the decoder considers fragments of length at
least \(2(1+\delta)\log n\). 
It applies the same pilot-recycling procedure and level \(2\) random binary linear code decoder to reveal all bits in $\cB_2=\{8,10,11,13,14\}$,
shown as purple squares in Figure~\ref{Figure: decoding}.
The known set expands to $\cS_1=\cS_2\mathbin{\dot\cup}\cA_2$,
which corresponds to
\[
\cR_1
=
\cR_2\cup\cB_2
=
\cR_4\cup
\{1,2,4,5,7,8,10,11,13,14\}.
\]

Finally, at level \(1\), the decoder can align fragments as
short as \(1.5(1+\delta)\log n\).
These fragments are aligned using all
known positions in \(\cS_1\). 
The aligned fragments reveal observations of
the remaining level-\(1\) codeword carried on
\[
\cB_1
=
\{16,17,19,20,22,23,25,26,28,29\}.
\]
After the level-\(1\) random binary linear code is decoded, these remaining positions also become known.
Hence
$
\cS_0=\cS_1\mathbin{\dot\cup}\cA_1
=\{0,1,\ldots,n-1\},
$
and, within one block,
$
\cR_0
=
\cR_1\cup\cB_1
=
\{0,1,\ldots,29\}.
$
This example shows the complete recursion
\[
\cR_4
\subset
\cR_3
\subset
\cR_2
\subset
\cR_1
\subset
\cR_0,
\]
corresponding to the successive decoding of levels \(4,3,2,1\).
Decoding each level reveals a new set of codeword bits.
These recovered bits are then recycled as additional pilot bits, making a larger
set of known bits. 
As a result, the minimum length of fragment that can be aligned in each level
decreases successively from
\(3(1+\delta)\log n\), to
\(2.5(1+\delta)\log n\), to
\(2(1+\delta)\log n\), and finally to
\(1.5(1+\delta)\log n\).
This example illustrates the main idea of the proposed construction:
the decoder repeatedly reuses recovered bits to align
progressively shorter fragments.

The subsequent Lemma~\ref{lemma:periodic-coordinate-properties} shows several properties of Definition~\ref{definition:alignment-scales}-\ref{definition: Si and Ai} that are used throughout encoding, decoding, and rate analysis.

\begin{lemma}\label{lemma:periodic-coordinate-properties}
The sets in Definition~\ref{definition:alignment-scales}-\ref{definition: Si and Ai} satisfy the following properties.
\begin{enumerate}
    \renewcommand{\labelenumi}{(\Alph{enumi})}
    \renewcommand{\theenumi}{(\Alph{enumi})}

    \item\label{lemma:A} $\cR_J\subset\cR_{J-1}\subset\cdots\subset\cR_1\subset\cR_0$ and
    $\cS_J\subset\cS_{J-1}\subset\cdots\subset\cS_1\subset\cS_0$.
    
    \item\label{lemma:B} For every $i\in[J]$, we have
    $|\cR_i|=q_i=Q/t_i$ and $|\cS_i|=n/t_i$.
    
    \item\label{lemma:C} The sets $\cA_1,\ldots,\cA_J,\cS_J$ form a partition of
    $\{0,1,\ldots,n-1\}$:
    \[
    \{0,1,\ldots,n-1\}
    =\cA_1\mathbin{\dot\cup}\cA_2\mathbin{\dot\cup}\cdots
    \mathbin{\dot\cup}\cA_J\mathbin{\dot\cup}\cS_J.
    \]
    
    \item\label{lemma:D} For every $i\in[J]$, we have
    \[
    \cA_i=\{uQ+b:u\in\{0,1,\ldots,n/Q-1\},\ b\in\cB_i\}
    \]
    and
    \[
    |\cA_i|
    =n\left(\frac{1}{t_{i-1}}-\frac{1}{t_i}\right)
    =\frac{n}{Q}|\cB_i|.
    \]
\end{enumerate}
\end{lemma}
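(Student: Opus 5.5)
The plan is to establish the residue-level facts first, because each statement about $\cS_i$ and $\cA_i$ follows from the corresponding statement about $\cR_i$ and $\cB_i$ by periodicity. The standing assumption $Q\mid n$ gives the key tool: for any $\cR\subseteq\bZ_Q$, the set $\{x\in\{0,\ldots,n-1\}: x\bmod Q\in\cR\}$ equals $\{uQ+b: 0\le u<n/Q,\ b\in\cR\}$. The map $(u,b)\mapsto uQ+b$ is a bijection from $\{0,\ldots,n/Q-1\}\times\bZ_Q$ onto $\{0,\ldots,n-1\}$, so this set has exactly $\frac nQ|\cR|$ elements, and the map $\cR\mapsto$ (this set) preserves inclusions, disjointness and set differences.

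The main technical point, and the only step needing care, is that Definition~\ref{definition:nested-residue-sets} is well posed.
\begin{itemize}
\item Since $t_i=1+i/m$ is strictly increasing in $i$, the values $q_i=Q/t_i$ are strictly decreasing, with $q_0=Q$ and $q_J=Q/T$.
\item $T=t_J$ is an integer, so $b_J=T$, $a_J=1$, and $T\mid Q$. The multiples $0,T,\ldots,(q_J-1)T$ are therefore distinct elements of $\bZ_Q$, which gives $|\cR_J|=q_J$.
\item For $1\le i\le J-1$ we have $0<q_i-q_J<Q-q_J=|\bZ_Q\setminus\cR_J|$. Hence the prefix $\{u_0,\ldots,u_{q_i-q_J-1}\}$ exists and is disjoint from $\cR_J$, so $|\cR_i|=q_J+(q_i-q_J)=q_i$. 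Together with $|\cR_0|=Q=q_0$, this proves the residue half of \ref{lemma:B}.
\item As $q_i$ decreases in $i$, these prefixes are nested and strictly growing as $i$ decreases, giving $\cR_J\subsetneq\cR_{J-1}\subsetneq\cdots\subsetneq\cR_1\subsetneq\cR_0=\bZ_Q$ (strict inclusions since the cardinalities differ). This proves the residue half of \ref{lemma:A}.
\end{itemize}

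Transferring through the bijection gives $\cS_J\subset\cdots\subset\cS_0$ and $|\cS_i|=\frac nQ q_i=n/t_i$, completing \ref{lemma:A} and \ref{lemma:B}.

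For \ref{lemma:C}, the chain $\cR_J\subset\cdots\subset\cR_0=\bZ_Q$ telescopes into the disjoint union $\bZ_Q=\cB_1\dot\cup\cdots\dot\cup\cB_J\dot\cup\cR_J$. Equivalently, at the level of positions, $\cS_0=\cA_1\dot\cup\cS_1=\cA_1\dot\cup\cA_2\dot\cup\cS_2=\cdots$, using $\cS_i\subseteq\cS_{i-1}$ and $\cA_i=\cS_{i-1}\setminus\cS_i$. Pulling back along $x\mapsto x\bmod Q$ preserves the partition, which yields the claimed decomposition of $\{0,\ldots,n-1\}$.

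For \ref{lemma:D}, the second description of $\cA_i$ in Definition~\ref{definition: Si and Ai} is $\{x: x\bmod Q\in\cB_i\}$, and the periodicity bijection turns it into $\{uQ+b\}$ with $|\cA_i|=\frac nQ|\cB_i|$. Since $\cR_i\subseteq\cR_{i-1}$, we have $|\cB_i|=q_{i-1}-q_i$, so $|\cA_i|=\frac nQ(Q/t_{i-1}-Q/t_i)=n\left(\frac1{t_{i-1}}-\frac1{t_i}\right)$.
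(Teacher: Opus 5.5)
Your proposal is correct and follows essentially the same route as the paper: you establish nesting and cardinalities at the residue level, lift them to positions using \(Q\mid n\), telescope \(\cS_{i-1}=\cA_i\mathbin{\dot\cup}\cS_i\) to get the partition, and use \(|\cB_i|=q_{i-1}-q_i\) to count \(\cA_i\). Your extra well-posedness checks are points the paper's proof leaves implicit: that \(T=t_J\) divides \(Q\), so \(\cR_J\) has exactly \(q_J\) distinct residues, and that \(0<q_i-q_J<Q-q_J\), so the required prefix of \(\bZ_Q\setminus\cR_J\) exists.
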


\begin{proof}[Proof of~\ref{lemma:A}]
By Definition~\ref{definition:alignment-scales}, we have
\(t_0<t_1<\cdots<t_J=T\), and since \(q_i=Q/t_i\), it follows that
\(q_0>q_1>\cdots>q_J\). By
Definition~\ref{definition:nested-residue-sets}, for every
\(i\in[J-1]\), the set \(\cR_i\) consists of \(\cR_J\) together with the
first \(q_i-q_J\) elements of the ordered complement
\(\bZ_Q\setminus\cR_J\). Therefore, as \(q_i\) decreases with \(i\), the
sets satisfy
\(\cR_J\subset\cR_{J-1}\subset\cdots\subset\cR_1\subset\cR_0\).
By Definition~\ref{definition: Si and Ai}, the set \(\cS_i\) contains
exactly those positions whose residues modulo \(Q\) lie in \(\cR_i\), so
the same nesting relation gives
\(\cS_J\subset\cS_{J-1}\subset\cdots\subset\cS_1\subset\cS_0\).
\end{proof}

\begin{proof}[Proof of~\ref{lemma:B}]
The set \(\cR_J=\{0,T,2T,\ldots,(q_J-1)T\}\) contains exactly
\(q_J=Q/T\) residues. For every \(i\in[J-1]\), the set \(\cR_i\) is
obtained by adjoining exactly \(q_i-q_J\) residues to \(\cR_J\), and hence
\(|\cR_i|=q_J+(q_i-q_J)=q_i=Q/t_i\).

Since \(Q\mid n\), the interval \(\{0,1,\ldots,n-1\}\) consists of exactly
\(n/Q\) complete blocks of length \(Q\). In each such block, exactly
\(|\cR_i|=q_i\) positions have residues belonging to \(\cR_i\), and
therefore \(|\cS_i|=(n/Q)q_i=n/t_i\). 
\end{proof}

\begin{proof}[Proof of~\ref{lemma:C}]
By Definition~\ref{definition: Si and Ai},
\(\cA_i=\cS_{i-1}\setminus\cS_i\). Since~\ref{lemma:A} gives
\(\cS_i\subset\cS_{i-1}\), we have
\(\cS_{i-1}=\cA_i\mathbin{\dot\cup}\cS_i\) for every \(i\in[J]\).
Applying this identity recursively yields
\[
\cS_0
=\cA_1\mathbin{\dot\cup}\cA_2\mathbin{\dot\cup}\cdots
\mathbin{\dot\cup}\cA_J\mathbin{\dot\cup}\cS_J.
\]
Since \(\cS_0=\{0,1,\ldots,n-1\}\), the sets
\(\cA_1,\ldots,\cA_J,\cS_J\) form the stated disjoint partition. 
\end{proof}

\begin{proof}[Proof of~\ref{lemma:D}]
By Definition~\ref{definition: Si and Ai}, a position belongs to
\(\cA_i=\cS_{i-1}\setminus\cS_i\) exactly when its residue modulo \(Q\)
belongs to \(\cB_i=\cR_{i-1}\setminus\cR_i\). Since \(Q\mid n\), every
such position can be written uniquely as \(uQ+b\), where
\(u\in\{0,1,\ldots,n/Q-1\}\) and \(b\in\cB_i\). Hence
\[
\cA_i=\{uQ+b:u\in\{0,1,\ldots,n/Q-1\},\ b\in\cB_i\}.
\]

Each residue in \(\cB_i\) occurs exactly \(n/Q\) times in
\(\{0,1,\ldots,n-1\}\), so \(|\cA_i|=(n/Q)|\cB_i|\). Moreover,
\(|\cB_i|=|\cR_{i-1}|-|\cR_i|=q_{i-1}-q_i\). Using~\ref{lemma:B} and
\(q_j=Q/t_j\), we obtain
\[
|\cA_i|
=\frac{n}{Q}(q_{i-1}-q_i)
=n\left(\frac{1}{t_{i-1}}-\frac{1}{t_i}\right).
\]
Therefore,
\[
|\cA_i|
=n\left(\frac{1}{t_{i-1}}-\frac{1}{t_i}\right)
=\frac{n}{Q}|\cB_i|.\qedhere
\]
\end{proof}
Lemma~\ref{lemma:periodic-coordinate-properties} provides the
 structure used by both the encoder and the decoder. 
 In particular, each level-\(i\) codeword is placed on the set \(\cA_i\), while the nested sets \(\cS_i\) describe the bits that become known successively during decoding.
After the fragments used at level \(i\) have been globally
aligned, the uncovered positions of the level-\(i\) codeword are treated as erasures. 
These erasures are neither independent nor memoryless, since they
are induced by the fragmentation process. 
This motivates protecting the information assigned to each level by an independent random binary linear code.
\section{Encoding}\label{Section:proposed-work encoding}
We now describe the encoding procedure formally. 
Similar to~\cite{liu2026improved}, we apply Run Length Limited (RLL) constraints to distinguish the fixed pilot sequence from the information-bearing codewords during local alignment. 
Specifically, every level codeword is encoded using an \(\mathrm{RLL}(0,\beta-1)\) block encoder (see Lemma~\ref{lemma:rll-beta-minus-one} below), so that \(\beta\) consecutive zeros cannot occur within a level codeword. 
The pilot sequence is encoded separately and contains deliberately inserted all-zero markers of length \(q_J\beta\). 
The difference between the marker length \(q_J\beta\) and the RLL constraint of \(\beta\) consecutive zeros is explained in Lemma~\ref{lemma:initial-local-alignment-correct}. 
These markers therefore allow the decoder to identify the de-interleaved subsequence containing the pilot bits. 
Once the pilot subsequence has been identified, the De Bruijn bits contained in it are used to determine the global position of the fragment.

The two fixed-length RLL encoders used below are identical to the ones
introduced in~\cite[Lemmas~1 and~2]{liu2026improved}. 
We restate their properties here to keep the notation and construction self-contained; their proofs are included in Appendix~\ref{appendix:rll-encoders} for completeness.
\begin{lemma}\label{lemma:rll-beta-minus-one}
There exists a $\mathrm{RLL}(0,k-1)$ fixed-length block code with rate $(k-1)/k$.
\end{lemma}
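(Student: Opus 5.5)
The plan is an explicit construction: a fixed-position stuffing scheme. Let $k\ge 2$ be the run-length parameter. The encoder reads $k-1$ message bits $(x_1,\ldots,x_{k-1})\in\{0,1\}^{k-1}$ and outputs the length-$k$ block $(x_1,\ldots,x_{k-1},1)$. A message whose length is a multiple of $k-1$ is split into consecutive chunks of length $k-1$, each chunk is encoded this way, and the output blocks are concatenated. The map is injective because deleting every $k$-th bit recovers the message. Each block carries $k-1$ information bits in $k$ channel bits, so the rate is $(k-1)/k$ exactly.

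The only step requiring an argument is the constraint. An $\mathrm{RLL}(0,k-1)$ code here means that no run of $k$ consecutive zeros occurs, which matches the paper's usage of $\mathrm{RLL}(0,\beta-1)$ forbidding $0^\beta$. In the concatenated output, position $jk$ holds a $1$ for every $j\ge1$. Hence any window of $k$ consecutive positions $\{s,s+1,\ldots,s+k-1\}$ contains a multiple of $k$, because $k$ consecutive integers cover all residues modulo $k$. Such a window therefore contains a $1$, so every $0$-run has length at most $k-1$.

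This bound holds across block boundaries as well: the argument concerns absolute positions, not individual blocks, so concatenation cannot create a longer run. The same reasoning shows that runs at the very start of the sequence have length at most $k-1$, and runs at the end are bounded trivially.

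I do not expect a genuine obstacle. The one point to check is that the paper's definition of $\mathrm{RLL}(0,k-1)$ agrees with the one used here (at most $k-1$ consecutive zeros), since that is what the later local-alignment lemmas need with $k=\beta$. If a different convention were intended, the same construction with $k$ shifted by one would apply. For completeness I would also note that decoding is linear-time, since it only removes the stuffed bits, although the statement itself does not require this.
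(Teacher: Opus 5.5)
Your proposal is correct and is exactly the paper's construction: append a $1$ after every $k-1$ information bits, so every length-$k$ block ends in $1$, no run of $k$ zeros can occur (your residue-mod-$k$ window argument makes the cross-boundary case explicit), and the rate is $(k-1)/k$. The only difference is that the paper also pads the final block with at most $k-2$ dummy bits when $k-1$ does not divide the message length, which you assume away without affecting the claim.
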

\begin{proof}
    See Appendix~\ref{appendix:rll-encoders}.
\end{proof}
\begin{lemma}\label{lemma:rll-beta-minus-two}
There exists a $\widetilde{\mathrm{RLL}}(0,k-2)$ fixed-length block code with rate $(k-2)/k$.
\end{lemma}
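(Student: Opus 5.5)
The idea is to mimic the construction behind Lemma~\ref{lemma:rll-beta-minus-one}, but with one more forced bit per block. Fix a block length $k\ge 3$. The encoder takes a message block $(y_1,\ldots,y_{k-2})\in\{0,1\}^{k-2}$ and outputs
\[
(y_1,\ldots,y_{k-2},1,1)\in\{0,1\}^k.
\]
Since the map is injective, it is a fixed-length block code of rate $(k-2)/k$. Decoding simply removes the last two bits, which is trivially efficient. (Depending on how $\widetilde{\mathrm{RLL}}$ is defined, I would instead place the two forced ones differently, for example one at the start and one at the end of the block.)

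The first thing to verify is the run-length property of a single block. The two trailing ones mean a run of zeros inside one block can only live among the first $k-2$ positions, so its length is at most $k-2$. Next I would handle concatenation. Every block ends in $1$, so a zero-run that straddles a boundary between two consecutive blocks is really contained in the later block's first $k-2$ positions. Hence any concatenation of codewords has maximal zero-run length at most $k-2$. Of the two forced ones, a single trailing $1$ already bounds runs by $k-2$ within and across blocks. The second one must therefore be what the stronger $\widetilde{\mathrm{RLL}}$ requirement uses.

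The main obstacle is pinning down what the tilde constraint adds beyond an ordinary $\mathrm{RLL}(0,k-2)$ constraint. I expect it to be used for the pilot construction, where marker boundaries must be identifiable. In that reading, the constraint is a guarantee that every codeword block both begins or ends with a $1$ and has zero-runs of length at most $k-2$, so that an inserted all-zero marker is always flanked by ones. I would first try to show that the layout $(1,y_1,\ldots,y_{k-2},1)$ gives exactly this. Every block starts and ends with a $1$, so any adjacent all-zero marker is delimited by ones, and interior zero-runs have length at most $k-2$. The rate $(k-2)/k$ is unchanged. The proof then mirrors that of Lemma~\ref{lemma:rll-beta-minus-one} in Appendix~\ref{appendix:rll-encoders}.
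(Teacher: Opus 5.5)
Your final layout $(1,y_1,\ldots,y_{k-2},1)$ is exactly the paper's construction ($\boldv^{(j)}\mapsto 1\boldv^{(j)}1$, rate $(k-2)/k$, zero runs confined to the $k-2$ middle positions, decoding strips the first and last bit of each block), so the proof is correct and takes the same approach as the paper. Commit to that layout and drop $(y_1,\ldots,y_{k-2},1,1)$: the leading $1$ is what guarantees that the bit right after each marker $0^{q_J\beta}$ is a $1$, which Lemma~\ref{lemma:extract-initial-debruijn-correct} needs, and it is what Algorithm~\ref{alg:extract-initial-debruijn} removes via the condition $w_j\notin\{0,\beta-1\}$.
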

\begin{proof}
    See Appendix~\ref{appendix:rll-encoders}.
\end{proof}
The difference between a $\mathrm{RLL}(0,k-1)$ fixed-length block code and a $\widetilde{\mathrm{RLL}}(0,k-2)$ fixed-length block code is that the $\mathrm{RLL}(0,k-1)$ encoder adds a $1$ after every $k-1$ input bits and the $\widetilde{\mathrm{RLL}}(0,k-2)$ encoder adds a $1$ at the beginning and the end of every input $k-2$ bits; both encoders are required in the sequel.

The encoding process begins with an inner erasure-correcting code.
At each level, once fragments are aligned, the remaining bits  in that level are considered as erasures, turning the torn paper channel into an erasure channel (see proofs in Section~\ref{Section:rate-and-reliability}).
To recover the missing bits, we protect the information assigned to each level using independent random binary linear codes. 
Decoding of such codes can be implemented in polynomial time using Gaussian elimination. 
We next define the parameters for these codes.

\begin{definition}\label{definition:random-linear-erasure-codes}
Fix sufficiently small constants $\delta,\eta>0$. 
Let $\beta=\beta(n)$ satisfy $\beta=\omega(1)$ and $\beta=o(\log n)$. 
For every $i\in[J]$, 
$
n_{\mathrm{er}}^i\triangleq\frac{\beta-1}{\beta}|\cA_i|$.
Let $r_i\triangleq\left(1-(\alpha(1+\delta)t_i+1)e^{-\alpha(1+\delta)t_i}+3\eta\right)n^i_{\mathrm{er}}$, choose $H_i\in\mathbb F_2^{r_i\times n_{\mathrm{er}}^i}$ with uniformly independent entries and define the level-$i$ random binary linear code by
$\cC_{\mathrm{er}}^i\triangleq\ker(H_i)=\{\boldu\in\mathbb F_2^{n_{\mathrm{er}}^i}:H_i\boldu=\mathbf 0\}.$

Independently of $H_i$ and of the choices made at the other levels, choose a random shift $\boldz^{\,i}\sim\operatorname{Unif}\!(\mathbb F_2^{n_{\mathrm{er}}^i})$, and
define the shifted level-$i$ code by
$\cC_{\mathrm{sh}}^i\triangleq\{\boldu\oplus\boldz^{\,i}:
\boldu\in\cC_{\mathrm{er}}^i\}$.
The random shift is a part of the code construction, and is known to the decoder. 
Applying the $\mathrm{RLL}(0,\beta-1)$ encoder from Lemma~\ref{lemma:rll-beta-minus-one} to every codeword in $\cC_{\mathrm{sh}}^i$ gives
$\bar{\cC}_{\mathrm{er}}^i\triangleq\{\operatorname{Enc}_{\mathrm{RLL}}(\boldv):\boldv\in\cC_{\mathrm{sh}}^i\}\subseteq\{0,1\}^{|\cA_i|}.$
We refer to $\cC_{\mathrm{er}}\triangleq\{\cC_{\mathrm{er}}^i:i\in[J]\}$, $\cC_{\mathrm{sh}}\triangleq\{\cC_{\mathrm{sh}}^i:i\in[J]\}$, and $\bar{\cC}_{\mathrm{er}}\triangleq\{\bar{\cC}_{\mathrm{er}}^i:i\in[J]\}$ as the level-wise meta-codes.
\end{definition}
The following lemma follows a standard argument, that the matrix $H_i$ achieves full row rank with high probability, thereby establishing the rate of $\cC_{\mathrm{er}}$.

\begin{lemma}\label{lemma:full rank random parity matrix}\cite[Lemmas~3]{liu2026improved}
Let $H_i\in\mathbb{F}_2^{r_i\times n^i_{\mathrm{er}}}$ be a random binary matrix
whose entries are chosen independently and uniformly from $\mathbb{F}_2$,
where $r_i\le n^i_{\mathrm{er}}$ and
$n^i_{\mathrm{er}}-r_i\to\infty$ as $n\to\infty$.
Then
\[
\Pr\bigl(\operatorname{rank}(H_i)<r_i\bigr)
<
2^{r_i-n^i_{\mathrm{er}}},
\]
and hence $\operatorname{rank}(H_i)=r_i$ with high probability.
Consequently, with high probability, the random binary linear code
$\cC^i_{\mathrm{er}}$ defined by the parity-check matrix $H_i$ has rate
$R^i_{\mathrm{er}}=1-\frac{r_i}{n^i_{\mathrm{er}}}$.
\\
In particular, if
$r_i=\left(1-(\alpha(1+\delta)t_i+1)e^{-\alpha(1+\delta)t_i}+3\eta\right)n^i_{\mathrm{er}}$
and
$(\alpha(1+\delta)t_i+1)e^{-\alpha(1+\delta)t_i}-3\eta>0$,
then
\[
n^i_{\mathrm{er}}-r_i
=
\left((\alpha(1+\delta)t_i+1)e^{-\alpha(1+\delta)t_i}-3\eta\right)
n^i_{\mathrm{er}}
\to\infty,
\]
and therefore
$R^i_{\mathrm{er}}=(\alpha(1+\delta)t_i+1)e^{-\alpha(1+\delta)t_i}-3\eta$
with high probability.
\end{lemma}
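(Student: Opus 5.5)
The bound follows from a row-by-row union bound. I would write $H_i$ in terms of its rows $\boldsymbol{h}_1,\ldots,\boldsymbol{h}_{r_i}\in\mathbb F_2^{n^i_{\mathrm{er}}}$. These rows are i.i.d.\ and uniform, since the entries are independent and uniform. The matrix has rank less than $r_i$ exactly when some row lies in the span of the rows before it, i.e.\ when $\boldsymbol{h}_j\in\operatorname{span}(\boldsymbol{h}_1,\ldots,\boldsymbol{h}_{j-1})$ for some $j\in[r_i]$.

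Condition on the first $j-1$ rows. Their span has at most $2^{j-1}$ elements. The row $\boldsymbol{h}_j$ is uniform over $2^{n^i_{\mathrm{er}}}$ vectors and independent of the earlier rows. So the conditional probability that it falls in this span is at most $2^{j-1-n^i_{\mathrm{er}}}$, and the same bound holds unconditionally.

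A union bound over $j$ then gives
\[
\Pr\bigl(\operatorname{rank}(H_i)<r_i\bigr)\le\sum_{j=1}^{r_i}2^{j-1-n^i_{\mathrm{er}}}=2^{-n^i_{\mathrm{er}}}\bigl(2^{r_i}-1\bigr)<2^{r_i-n^i_{\mathrm{er}}}.
\]
The last inequality is strict because of the $-1$. If $n^i_{\mathrm{er}}-r_i\to\infty$, this bound tends to $0$. On the event that $H_i$ has full rank, rank–nullity gives $\dim\ker(H_i)=n^i_{\mathrm{er}}-r_i$. Hence $|\cC^i_{\mathrm{er}}|=2^{n^i_{\mathrm{er}}-r_i}$, and the rate is $1-r_i/n^i_{\mathrm{er}}$.

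For the particular choice of $r_i$, I would substitute the formula from Definition~\ref{definition:random-linear-erasure-codes} directly. Write $c_i=(\alpha(1+\delta)t_i+1)e^{-\alpha(1+\delta)t_i}-3\eta$. This gives $n^i_{\mathrm{er}}-r_i=c_i\,n^i_{\mathrm{er}}$.

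Next I would check that $n^i_{\mathrm{er}}\to\infty$. By Lemma~\ref{lemma:periodic-coordinate-properties}\ref{lemma:D}, $|\cA_i|=n(1/t_{i-1}-1/t_i)$, which is a positive constant fraction of $n$ because $t_{i-1}<t_i$ are fixed. Since $\beta=\omega(1)$, the factor $(\beta-1)/\beta$ tends to $1$, so $n^i_{\mathrm{er}}=\Theta(n)$. Combined with the hypothesis $c_i>0$, this gives $n^i_{\mathrm{er}}-r_i\to\infty$, and the rate equals $c_i$ with high probability.

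I expect no real obstacle. The only technical point is a minor one: $r_i$ must be an integer, so the expression for it should be rounded. Rounding changes the rate by $O(1/n)$, which does not affect the asymptotic statements.
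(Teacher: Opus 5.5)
Your proof is correct. The row-by-row union bound $\sum_{j=1}^{r_i}2^{j-1-n^i_{\mathrm{er}}}<2^{r_i-n^i_{\mathrm{er}}}$, followed by rank--nullity and the check that $n^i_{\mathrm{er}}=\Theta(n)$ via Lemma~\ref{lemma:periodic-coordinate-properties}\ref{lemma:D}, is the standard argument; the paper does not reprove this lemma (it is cited from \cite[Lemma~3]{liu2026improved}), but it uses the same span-counting union bound, column by column, in Lemma~\ref{lemma:random linear code random erasures}, so your approach is essentially the paper's.
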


The codes $\{\cC_{\mathrm{er}}^i\}_{i=1}^{J}$ in Definition~\ref{definition:random-linear-erasure-codes} have the following properties.
When the message selects a codeword \(\boldu^{\,i}\in\cC_{\mathrm{er}}^i\), the encoder first forms \(\boldv^{\,i}\triangleq\boldu^{\,i}\oplus\boldz^{\,i}\) and then applies the \(\mathrm{RLL}(0,\beta-1)\) encoder. 
The shift makes bits of \(\boldv^{\,i}\) uniformly random, even when \(\boldu^{\,i}=\mathbf 0\), so that these bits can later be reused for recursive alignment; a similar shifting technique is used in~\cite{shomorony2021torn}. 
Since addition of \(\boldz^{\,i}\) is a bijection and the \(\mathrm{RLL}(0,\beta-1)\) encoder is injective, it follows that \(\left|\bar{\cC}_{\mathrm{er}}^i\right|=\left|\cC_{\mathrm{sh}}^i\right|=\left|\cC_{\mathrm{er}}^i\right|\). 
For all $i\in [J]$, as the block length for $\left|\bar{\cC}_{\mathrm{er}}^i\right|$ is $|\cA_i|$, the rate for $\left|\bar{\cC}_{\mathrm{er}}^i\right|$ goes to $(1-\frac{1}{\beta})R^i_{\mathrm{er}}$. 

We next construct the fixed pilot sequence.
By Definition~\ref{definition:nested-residue-sets}, a bit from the pilot sequence is placed every $T$ codeword bits. 
Equivalently, within every block of length $Q$, the pilot sequence occupies $q_J=Q/T$ residue classes in $\cR_J$.
When a fragment is de-interleaved into $T$ candidate subsequences, exactly one candidate subsequence is from the pilot sequences, while all the other candidates are from the information-bearing residue classes.
To make the candidate subsequence distinguishable, we use a marker of length $q_J\beta$ zeros: A run of \(q_J\beta\) zeros in an information-bearing candidate subsequence would create at least \(\beta\) consecutive zeros in one of the RLL encoded random binary linear codes.
The formal proof is given in Lemma~\ref{lemma:initial-local-alignment-correct} below.

Let \(b_n\triangleq(1+\delta/2)\log n\).
For the sake of simplicity, we consider \(b_n\), \(n/(Tb_n)\), and all
quantities in Definition~\ref{definition:random-linear-erasure-codes} as
integers.
We further assume that \(\beta\mid b_n\) and \(\beta Q\mid n\).
These standard integrality and divisibility assumptions can be removed
by padding \(o(n)\) bits, which would not change the asymptotic rate.
Let
\[
N_{q,n}\triangleq
\frac{n}{T}\left(1-\frac{q_J\beta}{b_n}\right)
\frac{\beta-2}{\beta}.
\]
Choose a cyclic binary De Bruijn sequence whose length is the smallest
power of two that is at least \(N_{q,n}\), and let \(\boldq\) consist
of \(N_{q,n}\) consecutive bits of this sequence.
The order of this De Bruijn sequence differs from
\(\log N_{q,n}\) (where $\log$ has base~$2$) by at most one. Therefore, all subsequent bounds
involving the De Bruijn order change only by an additive constant,
which does not affect the asymptotic analysis.

Apply the $\widetilde{\mathrm{RLL}}(0,\beta-2)$ encoder from Lemma~\ref{lemma:rll-beta-minus-two} to obtain \textit{modified De Bruijn sequence} $\bar{\boldq}$ of length $(n/T)(1-q_J\beta/b_n)$, as follows.

\begin{definition}\label{definition:modified-pilot}
For every $t\in\{0,1,\ldots,n/(Tb_n)-1\}$ and $j\in\{0,1,\ldots,b_n-1\}$, we define $\boldp=(\boldp[0],\boldp[1],\ldots,\boldp[n/T-1])$ by 
\[
\boldp[tb_n+j]\triangleq
\begin{cases}
0, & 0\leq j<q_J\beta,\\
\bar{\boldq}[t(b_n-q_J\beta)+(j-q_J\beta)], & q_J\beta\leq j<b_n.
\end{cases}
\]
\end{definition}

We now specify the placement of all level codewords and the pilot sequence into a codeword $\boldc=(\boldc[0],\boldc[1],\ldots,\boldc[n-1])$ in our code: 
\begin{enumerate}
 \item \emph{Encoding the message into level codewords.}
Divide the message $\boldm \in \mathbb{F}_2^{\sum_{i=1}^J n_{\mathrm{er}}^iR_{\mathrm{er}}^i}$ into \(J\) consecutive blocks
\[
\boldm
=
\boldm^{\,1}\mathbin{\|}\boldm^{\,2}
\mathbin{\|}\cdots\mathbin{\|}\boldm^{\,J},
\]
where, for every \(i\in[J]\),
\[
\boldm^{\,i}
\in
\mathbb F_2^{\,n_{\mathrm{er}}^iR_{\mathrm{er}}^i}.
\]
Encode each \(\boldm^{\,i}\) using the level-\(i\) random binary linear code
\(\cC_{\mathrm{er}}^i\) from
Definition~\ref{definition:random-linear-erasure-codes} to obtain
\(\boldu^{\,i}\in\cC_{\mathrm{er}}^i\).  Then apply the random shift
\(\boldz^{\,i}\) and define
\[
\boldv^{\,i}
\triangleq
\boldu^{\,i}\oplus\boldz^{\,i}
\in
\cC_{\mathrm{sh}}^i.
\]
\item \emph{Self-interleaving and placement of the level codewords.}
For every $i\in[J]$, write the set $\cB_i=\{b_{i,j}\}_{j=0}^{|\cB_i|-1}$ in increasing order, i.e., $b_{i,0}<b_{i,1}<\cdots<b_{i,|\cB_i|-1}$, where $|\cB_i|=q_{i-1}-q_i$. 
Let
\[
\bar{\bolds}^{\,i}
\triangleq
\operatorname{Enc}_{\mathrm{RLL}}(\boldv^{\,i})
=(\bar \bolds^{\,i}[0],\bar \bolds^{\,i}[1],\ldots,\bar \bolds^{\,i}[|\cA_i|-1])
\in\bar{\cC}_{\mathrm{er}}^i.
\]
Since $|\cA_i|=|\cB_i|(n/Q)$, partition $\bar{\bolds}^{\,i}$ into $|\cB_i|$ consecutive blocks of length $n/Q$:
\[
\bar{\bolds}^{\,i}=\bar{\bolds}^{\,i,0}\mathbin{\|}\bar{\bolds}^{\,i,1}\mathbin{\|}\cdots\mathbin{\|}\bar{\bolds}^{\,i,|\cB_i|-1}.
\]
For every $\ell\in\{0,1,\ldots,|\cB_i|-1\}$ and every $u\in\{0,1,\ldots,n/Q-1\}$, place the $u$-th bit of the $\ell$-th block at the position $uQ+b_{i,\ell}$, namely,
\[
\boldc[uQ+b_{i,\ell}]\triangleq\bar \bolds^{\,i,\ell}[u].
\]
Thus, the complete level-$i$ RLL codeword occupies exactly the entries indexed by~$\cA_i$.

\item \emph{Placement of the pilot.}
For every $v\in\{0,1,\ldots,n/T-1\}$, set $\boldc[vT]=\boldp[v]$ as in Definition~\ref{definition:modified-pilot}. Equivalently, writing $v=uq_J+\ell$ with $u\in\{0,1,\ldots,n/Q-1\}$ and $\ell\in\{0,1,\ldots,q_J-1\}$, set
\[
\boldc[uQ+\ell T]\triangleq \boldp[uq_J+\ell].
\]
Thus, the pilot sequence bits occupy positions $\{0,T,2T,\ldots,n-T\}$.
\end{enumerate}

By Lemma~\ref{lemma:periodic-coordinate-properties}, the sets $\cA_1,\ldots,\cA_J,\cS_J$ form a disjoint partition of $\{0,1,\ldots,n-1\}$, and hence the three steps assign exactly one bit to every position of $\boldc=(\boldc[0],\boldc[1],\ldots,\boldc[n-1])$.

\section{Levelized Decoding}\label{Section:levelized-decoding}
The decoder recursively transforms the torn-paper channel output into a sequence of erasure-decoding problems, beginning with the longest fragments and gradually incorporating shorter ones. 
At the first level \(J\), the decoder considers every sufficiently long fragment and de-interleaves it into \(T\) candidate subsequences. 
The unique subsequence containing the marker (a string of $q_J\beta$ zeros) is identified as the pilot subsequence. 
After removing the marker and the RLL-added $1$'s, the decoder extracts a sufficiently long subsequence of the original De Bruijn sequence. 
Since this subsequence is unique inside the De Bruijn sequence, it determines the fragment's global position.
The entire fragment can then be placed at its correct coordinates in the codeword.

The globally aligned fragments provide a partially observed version of the level-\(J\) codeword. 
Uncovered coordinates of the level-$J$ codeword are treated as erasures, which are corrected using the erasure correction capability of~$\bar{\cC}_{\mathrm{er}}^J$.
Once level-\(J\) has been decoded, the codeword bits in \(\cA_J\) become known in addition to the original pilot bits in \(\cS_J\).
Thus, the set of known bits expands to \(\cS_{J-1}=\cA_J\mathbin{\dot\cup}\cS_J\).

At every subsequent level \(i\), assuming that levels \(i+1,\ldots,J\) have already been decoded, all codeword bits indexed by \(\cS_i\) are known.
These recovered bits are reused as a larger pilot sequence for aligning fragments that may be too short to contain enough bits from  the fixed De Bruijn pilot sequence.
For each such fragment, the decoder tests all possible global
 positions and retains the fragment only when exactly one candidate position is consistent with every known bit of \(\cS_i\) appearing in the fragment. 
If more than one candidate position is consistent, the fragment is
declared ambiguous and discarded.
These aligned fragments yield a partially observed version of the level-\(i\) codeword placed on \(\cA_i\). 
The remaining uncovered bits are again treated as erasures and recovered using the erasure correction capability of~$\bar{\cC}_{\mathrm{er}}^i$.
After level \(i\) is decoded, its codeword bits become
known, and the set of known bits expands according to
\(\cS_{i-1}=\cA_i\mathbin{\dot\cup}\cS_i\). 
The decoder then repeats the same procedure at level \(i-1\), using the larger set \(\cS_{i-1}\) to align
progressively shorter fragments.

We now describe each stage of the levelized decoder in detail. 
We begin with level \(J\), where no previously decoded level codewords are available and the decoder relies on the fixed pilot sequence for local
alignment, as in~\cite{liu2026improved}.  
This provides the first step of the recursive decoding
procedure. 
We then describe the subsequent levels, where the bits recovered
at the preceding levels are recycled together with the fixed pilot sequence to
align progressively shorter fragments.  
The correctness and reliability of these alignment and erasure-decoding steps are established alongside the corresponding decoding procedures with technical details shown in Section~\ref{Section:rate-and-reliability}.

\begin{remark}\label{remark:local alignment}
    We remark that the decoding procedure roughly follows the \emph{local alignment} technique introduced in our earlier work~\cite{liu2026improved}, where the zero marker identifies the pilot sequence bits within one fragment, rather than relying exclusively on global statistics. 
    The feature enabled the rate improvement of~\cite{liu2026improved} over~\cite{shomorony2021torn}.
\end{remark}

\subsection{First level of decoding}
\label{subsection:initial-pilot-decoding}
Let \(\cF_J\) be the set of received fragments whose lengths are at least \((1+\delta)T\log n\).
\begin{remark}\label{remark:prefix of fragment}
    For every \(\boldf\in\cF_J\), only its first \((1+\delta)T\log n\) bits are used to determine its global location, hence to avoid clutter we abuse the notation and denote the \((1+\delta)T\log n\) prefix of a fragment by~$\boldf$.
Once the location for the prefix is known, the entire fragment can be placed globally.
\end{remark}

The local-alignment algorithm in~\cite[Algorithm~1]{liu2026improved}
de-interleaves a received fragment into candidate subsequences according to the interleaving algorithm and identifies the unique candidate subsequence that contains the marker as the pilot subsequence (this is \textit{local alignment}, see Remark~\ref{remark:local alignment}).
In the present construction, the pilot sequence bits appear exactly once every \(T\) codeword bits, so Algorithm~\ref{alg:initial-local-alignment} de-interleaves the fragment into \(T\) candidate subsequences by placing bits with the same index modulo \(T\) in the same subsequence. 
The candidate subsequence containing \(0^{q_J\beta}\) is identified as the pilot subsequence, and the correctness of
Algorithm~\ref{alg:initial-local-alignment} is proved in
Lemma~\ref{lemma:initial-local-alignment-correct}.
\begin{algorithm}[H]
\caption{Initial local alignment by marker detection}
\label{alg:initial-local-alignment}
\begin{algorithmic}[1]
\STATE \textbf{Input:} A fragment prefix
\(\boldf=(\boldf[0],\ldots,\boldf[(1+\delta)T\log n-1])\) of length
\((1+\delta)T\log n\).
\STATE \textbf{Output:} The index \(j_{\boldf}\in\bZ_T\) of the
pilot-containing de-interleaved subsequence, the pilot subsequence
\(\boldp_{\boldf}\) and de-interleaved subsequences
\(\boldy_{\boldf,r}\) for \(r\in\bZ_T\).
\STATE Initialize empty sequences
\(\boldw_{\boldf,0},\ldots,\boldw_{\boldf,T-1}\).
\FOR{\(j=0\) to \((1+\delta)T\log n-1\)}
    \STATE Set \(r\gets j\bmod T\).
    \STATE Append \(\boldf[j]\) to \(\boldw_{\boldf,r}\).
\ENDFOR
\STATE Initialize \(\cP_{\boldf}\gets\varnothing\).
\FOR{\(r=0\) to \(T-1\)}
    \IF{\(\boldw_{\boldf,r}\) contains a run \(0^{q_J\beta}\)}\label{line:marker test}
        \STATE Set
        \(\cP_{\boldf}\gets\cP_{\boldf}\cup\{r\}\).
    \ENDIF
\ENDFOR
\STATE Let \(j_{\boldf}\) be the unique element of \(\cP_{\boldf}\).
\STATE Set
\(\boldp_{\boldf}\gets\boldw_{\boldf,j_{\boldf}}\).
\FOR{\(r=0\) to \(T-1\)}
    \STATE Set
    \(\boldy_{\boldf,r}
    \gets\boldw_{\boldf,(j_{\boldf}+r)\bmod T}\).
\ENDFOR
\RETURN
\(j_{\boldf},\boldp_{\boldf},
\{\boldy_{\boldf,r}:r\in\bZ_T\}\).
\end{algorithmic}
\end{algorithm}

\begin{lemma}\label{lemma:initial-local-alignment-correct}
For every \(\boldf\in\cF_J\),
Algorithm~\ref{alg:initial-local-alignment} identifies the unique
de-interleaved subsequence that contains bits of the pilot sequence.
Consequently, \(\boldp_{\boldf}\) is exactly the pilot subsequence contained in \(\boldf\), and every \(\boldy_{\boldf,r}\) is the corresponding modulo-\(T\) de-interleaved subsequence of \(\boldf\).
\end{lemma}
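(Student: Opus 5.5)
The proof has two halves: existence and uniqueness. Existence: the pilot de-interleaved subsequence of \(\boldf\) contains a full marker \(0^{q_J\beta}\). Uniqueness: no other de-interleaved subsequence contains \(q_J\beta\) consecutive zeros. Once exactly one residue class \(r\) survives the test in line~\ref{line:marker test}, the set \(\cP_{\boldf}\) is a singleton. Then \(\boldp_{\boldf}=\boldw_{\boldf,j_{\boldf}}\) is exactly the set of pilot bits in \(\boldf\), and the rotated assignment \(\boldy_{\boldf,r}=\boldw_{\boldf,(j_{\boldf}+r)\bmod T}\) is just a relabelling of the modulo-\(T\) de-interleaving. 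The "consequently" part is therefore immediate.

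\textbf{Existence.} Suppose \(\boldf\) starts at global position \(s\). Since pilot bits sit exactly at positions \(\equiv 0 \pmod T\), the bits with \(j\equiv (-s)\bmod T\) in \(\boldf\) are consecutive bits of \(\boldp\). Their number is at least \(\lfloor (1+\delta)T\log n/T\rfloor=(1+\delta)\log n\). By Definition~\ref{definition:modified-pilot}, \(\boldp\) is periodic with period \(b_n=(1+\delta/2)\log n\), and every period begins with \(q_J\beta\) zeros. Any window of \(\boldp\) of length at least \(b_n+q_J\beta-1\) therefore contains a full marker. This holds because \((1+\delta)\log n-(1+\delta/2)\log n=(\delta/2)\log n\), which exceeds \(q_J\beta\) for large \(n\) since \(\beta=o(\log n)\) and \(q_J\) is constant.

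\textbf{Uniqueness.} This is the main step. Take a non-pilot residue class \(r\not\equiv 0\pmod T\) and consider \(q_J\beta\) consecutive bits of that de-interleaved subsequence. They lie at global positions \(x, x+T, \ldots, x+(q_J\beta-1)T\), all of which avoid \(\cR_J\) modulo \(Q\). Because \(T\mid Q\), these positions take at most \(Q/T=q_J\) distinct residues modulo \(Q\), each drawn from \(\bigcup_i\cB_i\). By pigeonhole, some residue \(b=b_{i,\ell}\) appears among them at least \(\beta\) times. Consecutive occurrences of a fixed residue along this arithmetic progression are spaced exactly \(Q\) apart. Hence we obtain \(\beta\) positions \(y, y+Q,\ldots,y+(\beta-1)Q\) with \(y\bmod Q=b_{i,\ell}\). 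By the placement rule \(\boldc[uQ+b_{i,\ell}]=\bar\bolds^{\,i,\ell}[u]\), these are \(\beta\) consecutive bits of the block \(\bar\bolds^{\,i,\ell}\), and hence \(\beta\) consecutive bits of the RLL codeword \(\bar\bolds^{\,i}\). These bits would all be zero, contradicting the \(\mathrm{RLL}(0,\beta-1)\) property of Lemma~\ref{lemma:rll-beta-minus-one}.

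\textbf{Care points.} First, the claim that the \(\beta\) occurrences are consecutive in \(u\) needs checking. This is where the argument hinges on the self-interleaving, which stores block \(\ell\) in a single residue class in order of \(u\). Second, the fragment does not wrap around, so \(u\) stays within \([0,n/Q)\). I expect the pigeonhole and consecutiveness argument to be the only delicate point; the rest is bookkeeping.
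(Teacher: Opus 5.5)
Your proof is correct and follows the paper's argument essentially step for step. Existence comes from the fact that any window of at least $b_n+q_J\beta-1$ pilot bits contains a full marker. Uniqueness comes from showing that $q_J\beta$ consecutive zeros in a non-pilot class would force $\beta$ zeros spaced $Q$ apart in a single residue class of $\cB_i$, that is, $\beta$ consecutive zeros of an $\mathrm{RLL}(0,\beta-1)$ codeword. Your pigeonhole step is a slightly weaker form of the paper's observation that each of the $q_J$ residues appears exactly $\beta$ times. Your explicit check that these bits are consecutive within the block $\bar{\bolds}^{\,i,\ell}$ makes precise a point the paper states only as ``a subsequence of $\bar{\bolds}^{\,i}$''.
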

\begin{proof}
See Lemma~\ref{lemma:initial-local-alignment-correct-appendix} in Appendix~\ref{appendix:rll-encoders}.
\end{proof}

After Algorithm~\ref{alg:initial-local-alignment}, the decoder has identified
the pilot subsequence \(\boldp_{\boldf}\) contained in \(\boldf\). 
Algorithm~\ref{alg:extract-initial-debruijn} removes the inserted markers and the two \(1\)'s added to each
length-\(\beta\) block by the
\(\widetilde{\mathrm{RLL}}(0,\beta-2)\) encoder. The resulting sequence is
a subsequence \(\boldq_{\boldf}\) of the original De Bruijn sequence \(\boldq\). 

\begin{algorithm}[H]
\caption{Extraction of a De Bruijn subsequence}
\label{alg:extract-initial-debruijn}
\begin{algorithmic}[1]
\STATE \textbf{Input:} The pilot subsequence
\(\boldp_{\boldf}=(\boldp_{\boldf}[0],\ldots,\boldp_{\boldf}[|\boldp_{\boldf}|-1])\).
\STATE \textbf{Output:} The recovered De Bruijn subsequence
\(\boldq_{\boldf}\).
\STATE Let \(\cU_{\boldf}\) be the set of start indices of all runs
\(0^{q_J\beta}\) in \(\boldp_{\boldf}\), and set
\(u^\star\gets\min\cU_{\boldf}\).
\STATE Initialize \(\boldq_{\boldf}\) as an empty sequence.
\FOR{\(j=0\) to \(|\boldp_{\boldf}|-1\)}
    \STATE Set \(v_j\gets(j-u^\star)\bmod b_n\) and
    \(w_j\gets(j-u^\star)\bmod\beta\)
    \IF{\(v_j\geq q_J\beta\) and \(w_j\notin\{0,\beta-1\}\)}\label{line:remove markers and rll added bits}
        \STATE Append \(\boldp_{\boldf}[j]\) to \(\boldq_{\boldf}\).
    \ENDIF
\ENDFOR
\RETURN \(\boldq_{\boldf}\).
\end{algorithmic}
\end{algorithm}
While constructing \(\boldq_{\boldf}\), the decoder also retains the
index of the fragment bit corresponding to its first retained De Bruijn
bit. This index is determined directly from \(j_{\boldf}\) returned by
Algorithm~\ref{alg:initial-local-alignment} and the marker/RLL positions identified in Algorithm~\ref{alg:extract-initial-debruijn}.
The following lemma is the counterpart of
\cite[Lemma~7]{liu2026improved}. It proves that  Algorithm~\ref{alg:extract-initial-debruijn}
correctly removes the markers and $1$'s added by \(\widetilde{\mathrm{RLL}}(0,\beta-2)\) encoder.

\begin{lemma}\label{lemma:extract-initial-debruijn-correct}
For every \(\boldf\in\cF_J\),
Algorithm~\ref{alg:extract-initial-debruijn} outputs 
subsequence \(\boldq_{\boldf}\) of the original De Bruijn sequence
\(\boldq\) contained in the pilot subsequence \(\boldp_{\boldf}\).
\end{lemma}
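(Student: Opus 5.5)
By Lemma~\ref{lemma:initial-local-alignment-correct}, \(\boldp_{\boldf}\) is exactly a contiguous substring \(\boldp[v_0:v_0+L-1]\) of the modified pilot \(\boldp\) of Definition~\ref{definition:modified-pilot}. The length satisfies \(L\ge(1+\delta)\log n-1\). So the proof reduces to a purely combinatorial statement about \(\boldp\): if we know one true marker start and index all bits relative to it, the filter in line~\ref{line:remove markers and rll added bits} keeps exactly the De Bruijn bits.

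\textbf{Step 1: a complete marker is present.} Since \(L\ge(1+\delta)\log n-1>2b_n\) for large \(n\), the window \(\boldp[v_0:v_0+L-1]\) contains at least one full block \(\boldp[tb_n:(t+1)b_n-1]\). In particular it contains the full run \(0^{q_J\beta}\) at position \(tb_n\), so \(\cU_{\boldf}\neq\varnothing\).

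\textbf{Step 2: every run \(0^{q_J\beta}\) in \(\boldp_{\boldf}\) starts exactly at a marker start.} This is the main obstacle. Outside the markers, \(\boldp\) consists of \(\widetilde{\mathrm{RLL}}(0,\beta-2)\) blocks of length \(\beta\), each beginning and ending with \(1\) (Lemma~\ref{lemma:rll-beta-minus-two}). Hence any zero run in the data region has length at most \(\beta-2<q_J\beta\). Each marker is the run \(0^{q_J\beta}\). It is preceded by the trailing \(1\) of the previous RLL block, because \(b_n-q_J\beta\) is a multiple of \(\beta\) (using \(\beta\mid b_n\)), so the data part ends on a block boundary. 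It is followed by the leading \(1\) of the next block. Therefore every maximal zero run in \(\boldp\) either has length at most \(\beta-2\) or is exactly a marker. Consequently a run \(0^{q_J\beta}\) inside the window must coincide with a full marker, and \(u^\star\equiv tb_n-v_0\) for some \(t\).

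One boundary case needs care: the window may begin partway through a marker, so \(\boldp_{\boldf}\) starts with a truncated run of zeros. That run has fewer than \(q_J\beta\) zeros, so it is not detected, and \(u^\star\) is the start of the first full marker. I will verify this explicitly.

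\textbf{Step 3: the filter selects exactly the De Bruijn bits.} Index bits relative to \(u^\star\), which is a true block start. Then for bit \(j\) of \(\boldp_{\boldf}\), \(v_j=(j-u^\star)\bmod b_n\) equals its offset within its length-\(b_n\) pilot block; this also holds for \(j<u^\star\) by periodicity of the modulus. So \(v_j<q_J\beta\) exactly identifies marker bits. For the remaining bits, the data region of each block starts at offset \(q_J\beta\), which is divisible by \(\beta\). Hence \(w_j=(j-u^\star)\bmod\beta\) is the offset inside the RLL block, and \(w_j\in\{0,\beta-1\}\) exactly identifies the two padding \(1\)'s.

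Stripping the markers and padding bits from \(\boldp\) yields \(\bar{\boldq}\), and then \(\boldq\), in order. The retained bits are therefore consecutive bits of \(\boldq\); partial RLL blocks at the window edges contribute only their true interior bits, since \(w_j\) is still correct there. Thus \(\boldq_{\boldf}\) is a contiguous subsequence of \(\boldq\) contained in \(\boldp_{\boldf}\).
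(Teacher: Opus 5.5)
Your Steps 2 and 3 are essentially the paper's proof. Every length-$q_J\beta$ zero run in $\boldp$ must coincide with an inserted marker, because each $\widetilde{\mathrm{RLL}}(0,\beta-2)$ block begins and ends with $1$ and the markers sit between complete blocks. Then $\beta\mid q_J\beta$ and $\beta\mid b_n$ make $v_j$ and $w_j$ the true in-block offsets. Your explicit handling of a window that starts inside a marker, and of indices $j<u^\star$, is a harmless refinement.

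The problem is Step 1. The inequality $(1+\delta)\log n-1>2b_n=(2+\delta)\log n$ is false for every $n$. In fact $|\boldp_{\boldf}|=(1+\delta)\log n<2b_n$, as the paper itself notes in the proof of Lemma~\ref{lemma:initial-debruijn-long-enough}. So the window need not contain a complete length-$b_n$ pilot block. If it starts at offset $o$ inside a block with $1\le o<\log n$, it ends before the next block is completed. The conclusion you actually need, $\cU_{\boldf}\neq\varnothing$, is still true, but for a different reason. You can observe that Algorithm~\ref{alg:initial-local-alignment} selected $\boldp_{\boldf}$ precisely because it contains a run $0^{q_J\beta}$ (Lemma~\ref{lemma:initial-local-alignment-correct}). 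Alternatively, use the correct bound $|\boldp_{\boldf}|\ge b_n+q_J\beta-1$, which holds for large $n$ since $q_J\beta=o(\log n)$; it guarantees a complete marker, not a complete block. With that repair the argument goes through.
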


\begin{proof}
We first show that \(\cU_{\boldf}\), the set of starting positions of all
runs \(0^{q_J\beta}\) in \(\boldp_{\boldf}\), is exactly the set of
starting positions of the complete markers contained in
\(\boldp_{\boldf}\).
Recall from Definition~\ref{definition:modified-pilot} that \(\bar{\boldq}\) is obtained by applying the
\(\widetilde{\mathrm{RLL}}(0,\beta-2)\) encoder to \(\boldq\). Every
length-\(\beta\) block of \(\bar{\boldq}\) begins and ends with \(1\), and
therefore \(\bar{\boldq}\) cannot contain \(q_J\beta\) consecutive zeros.
Moreover, the markers \(0^{q_J\beta}\) are inserted between complete
length-\(\beta\) blocks of \(\bar{\boldq}\). The bit immediately before
each marker is the final \(1\) of an RLL block, and the bit immediately after
each marker is the initial \(1\) of the next RLL block. Consequently, any
length-\(q_J\beta\) zero run in \(\boldp\) is one of the inserted markers:
a window lying entirely in \(\bar{\boldq}\) cannot be all zero, and a window
overlapping a marker without coinciding with it contains one of the adjacent
\(1\)'s. It follows that every index in \(\cU_{\boldf}\) is the start of an
inserted marker, and hence \(u^\star\) is the start of a  marker.

Next, we demonstrate that Line~\ref{line:remove markers and rll added bits} correctly filters out the inserted markers alongside all bits introduced by the RLL encoder.
Since the starting indices of all markers are separated by exact multiples of $b_n$, it follows that $v_j$ satisfies $v_j < q_J\beta$ if and only if the bit $\boldp_{\boldf}[j]$ is part of an inserted marker.
Consequently, enforcing the condition $v_j \geq q_J\beta$ isolates and removes exactly the marker bits.
It then remains to remove the two boundary $1$'s inserted into each information block by the $\widetilde{\mathrm{RLL}}(0,\beta-2)$ encoder. 
By construction, the parameters $q_J\beta$, $b_n$, and their difference $b_n - q_J\beta$ are all integer multiples of $\beta$. 
Since the RLL blocks are perfectly aligned with this period, the $1^{\text{st}}$ and the $\beta^{\text{th}}$ bits immediately following any marker correspond exactly to the $1$'s appended by the encoder. 
Therefore, applying the modulo constraint $w_j \notin \{0, \beta-1\}$ precisely targets and removes these RLL-added bits, yielding the recovered De Bruijn subsequence.
Therefore, Algorithm~\ref{alg:extract-initial-debruijn}
correctly outputs \(\boldq_{\boldf}\).
\end{proof}

The following lemma is the counterpart of
\cite[Lemma~8]{liu2026improved}. 
It states explicitly that the recovered subsequence \(\boldq_{\boldf}\)
occurs uniquely in \(\boldq\) and can therefore be used for global
alignment.

\begin{lemma}\label{lemma:initial-debruijn-long-enough}
For every \(\boldf\in\cF_J\), the output of
Algorithm~\ref{alg:extract-initial-debruijn} satisfies
\(|\boldq_{\boldf}|\geq\log N_{q,n}\) for all sufficiently large \(n\).
Consequently, \(\boldq_{\boldf}\) occurs at a unique position in
\(\boldq\) and can be used to determine the global position of
\(\boldf\).
\end{lemma}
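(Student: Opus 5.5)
The plan is a direct counting argument on the prefix \(\boldf\) of length \((1+\delta)T\log n\), followed by the uniqueness property of the De Bruijn sequence.

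\emph{Step 1: size of the pilot subsequence.} By Lemma~\ref{lemma:initial-local-alignment-correct}, \(\boldp_{\boldf}\) is exactly the de-interleaved pilot subsequence. Pilot bits occupy every \(T\)-th position of the codeword, so
\[
|\boldp_{\boldf}|\geq \left\lfloor (1+\delta)\log n\right\rfloor .
\]
This is a contiguous window of \(\boldp\).

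\emph{Step 2: count what Algorithm~\ref{alg:extract-initial-debruijn} removes.} The period of \(\boldp\) is \(b_n=(1+\delta/2)\log n\). Any window of length \(L=|\boldp_{\boldf}|\) meets at most \(\lceil L/b_n\rceil+1\) periods. I will treat the boundary carefully, since the first kept bit is defined relative to \(u^\star\), the first complete marker.

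\begin{itemize}
\item Bits before \(u^\star\) are not fully discarded by the modulo rule, but I will simply lower-bound by discarding everything before \(u^\star\). Since a complete marker must appear within the first \(b_n+q_J\beta\) bits of any window, this loses at most \(b_n+q_J\beta\) bits.
\item From \(u^\star\) onward, each full period of length \(b_n\) contributes exactly
\[
(b_n-q_J\beta)\frac{\beta-2}{\beta}
\]
De Bruijn bits.
\end{itemize}

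Hence the output \(\boldq_{\boldf}\) consists of consecutive bits of \(\boldq\), and
\[
|\boldq_{\boldf}| \geq \bigl(L-2b_n-2q_J\beta\bigr)\left(1-\frac{q_J\beta}{b_n}\right)\frac{\beta-2}{\beta}.
\]
The factor \(2\) absorbs the boundary losses on both ends.

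\emph{Step 3: compare with \(\log N_{q,n}\).} This is the main obstacle. A loss of a full \(b_n\approx\log n\) would consume nearly all of \(L\approx(1+\delta)\log n\), so the crude bound above is too weak. The fix is to exploit the structure more carefully. The bits of \(\bar{\boldq}\) lying before the first marker in the window form the tail of a full RLL-aligned block sequence, and the decoder can equally count them. More simply, Lemma~\ref{lemma:extract-initial-debruijn-correct} says the output is the full set of De Bruijn bits contained in \(\boldp_{\boldf}\), up to the partial RLL blocks at the two ends. So the loss is only
\[
\text{marker bits} + \text{RLL bits} + O(\beta),
\]
not a whole period.

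The number of marker bits in a window of length \(L\) is at most \((L/b_n+1)q_J\beta = O(\beta)=o(\log n)\), because \(L/b_n\le 2\). The RLL bits make up a \(2/\beta=o(1)\) fraction. Therefore
\[
|\boldq_{\boldf}|\geq (1+\delta)\log n\,(1-o(1)) - o(\log n).
\]
Meanwhile \(\log N_{q,n}\le \log n\), and the De Bruijn order is at most \(\log N_{q,n}+1\). So for large \(n\) the slack \(\delta\log n\) dominates.

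If Lemma~\ref{lemma:extract-initial-debruijn-correct}'s use of \(u^\star\) really discards the pre-marker bits, I would instead argue that the window, of length roughly \((1+\delta)\log n > b_n+q_J\beta\), contains a marker within its first \(b_n\) bits. I would then note that the retained stretch after \(u^\star\) still has length at least \(L-b_n\approx(\delta/2)\log n\). That bound is insufficient, which is why I expect the algorithm to keep bits before \(u^\star\) via the modulo rule, since \(v_j\) is computed mod \(b_n\) for negative offsets as well. I will verify that reading of the algorithm.

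\emph{Step 4: uniqueness.} The sequence \(\boldq_{\boldf}\) is a contiguous substring of \(\boldq\), which is a window of a De Bruijn sequence of order at most \(\log N_{q,n}+1\). Every substring of that length occurs at most once, so it occurs uniquely. Its offset, together with \(j_{\boldf}\) and the retained fragment index, determines the global position of \(\boldf\).
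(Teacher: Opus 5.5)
Your final argument (Steps~3--4) is essentially the paper's proof. The window \(\boldp_{\boldf}\) of length \((1+\delta)\log n<2b_n\) meets at most three markers, so only \(O(q_J\beta)\) marker bits and roughly a \(2/\beta\) fraction of RLL-added bits are removed. This leaves \((1+\delta)\log n-o(\log n)\) consecutive De Bruijn bits, which exceeds the De Bruijn order and therefore fixes a unique position.

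Your reading of Algorithm~\ref{alg:extract-initial-debruijn} is the correct one. The quantities \(v_j\) and \(w_j\) are residues of \(j-u^\star\) for every \(j\), including \(j<u^\star\), which is exactly what Lemma~\ref{lemma:extract-initial-debruijn-correct} relies on. You can therefore drop the vacuous Step~2 bound and the hedge about pre-marker bits.
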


\begin{proof}
First, observe that
\(|\boldq_{\boldf}|<|\boldp_{\boldf}|=(1+\delta)\log n<2b_n=(2+\delta)\log n\) (see Remark~\ref{remark:prefix of fragment}). 
As only one marker occurs every $b_n$ bits,
\(\boldp_{\boldf}\)  intersects at most three markers (one in the middle and two at the ends).
Thus, Algorithm~\ref{alg:extract-initial-debruijn} removes at most
\(3q_J\beta\) marker bits.
Among the remaining \((1+\delta)\log n-3q_J\beta\) bits from
\(\bar{\boldq}\), Algorithm~\ref{alg:extract-initial-debruijn} removes the $1$'s added by $\widetilde{\text{RLL}}(0,\beta-2)$ encoder. 
Thus
\begin{align*}
    |\boldq_{\boldf}|&\geq((\beta-2)/\beta)((1+\delta/2)\log n-3q_J\beta)-2\beta\\
&=(1+\delta/2)\log n-(2(1+\delta/2)/\beta)\log n-3q_J\beta+6q_J-2\beta\\
&\overset{(\dagger)}{=}(1+\delta/2)\log n-o(\log n),
\end{align*}
where~$(\dagger)$ follows since $\beta=\omega(1)$ and $\beta=o(\log n)$.
On the other hand,
$$\log N_{q,n}=\log ((n/T)(1-q_J\beta/b_n)(\beta-2)/\beta)
=\log n+O(1).$$
 We
obtain
$$|\boldq_{\boldf}|-\log N_{q,n}
=\delta\log n/2-o(\log n)\overset{(\star)}{>}0,$$
where~$(\star)$ holds for all sufficiently large \(n\).

By Lemma~\ref{lemma:extract-initial-debruijn-correct},
\(\boldq_{\boldf}\) is a subsequence of \(\boldq\). Since
\(\boldq\) consists a De Bruijn sequence of order \(\log N_{q,n}\), every
binary sequence of length \(\log N_{q,n}\) occurs at most once in
\(\boldq\) (Definition~\ref{Definition:De Bruijn sequence}). 
Therefore, the unique occurrence of \(\boldq_{\boldf}\) in \(\boldq\)
determines the global codeword position of its first retained De Bruijn
bit. 
Since the decoder also knows which position of \(\boldf\) produced
this bit, its true starting position is obtained by subtracting the
corresponding fragment index from this global position.
\end{proof}
Having established that every fragment of length at least $(1+\delta)T\log n$ can be uniquely aligned to its exact global location, we now transition from the first level to the erasure decoding, and align fragments of shorter length in subsequent levels.

\subsection{Level-wise inverse-RLL and erasure decoding}\label{subsection:level-erasure-decoding}
In Section~\ref{subsection:initial-pilot-decoding}, we adapted the local
alignment method of~\cite{liu2026improved} to identify the pilot
subsequence of every fragment of length at least $(1+\delta)T\log n$ and  determine the fragment's global position. 
After all such fragments have been globally aligned, the decoder obtains a partially observed codeword in which the covered coordinates are known and the uncovered coordinates are treated as erasures.
Let the partially observed codeword be \(\widehat{\boldc}^{(J)}\), which is an estimated codeword for $\boldc$ at level $J$.
At every subsequent level \(i\), the partially observed codeword \(\widehat{\boldc}^{(i)}\) is obtained after the fragments have been aligned.
The decoder must then extract the level-\(i\) bits from its estimated codeword \(\widehat{\boldc}^{(i)}\), reverse the RLL encoding, and recover its erased bits for level-\(i\).

The following algorithm is the multilevel counterpart of
\cite[Algorithm~3]{liu2026improved}.
In~\cite{liu2026improved}, each RLL-encoded information codeword occupies a single residue class and can therefore be extracted directly from that class. In the present construction, the level-\(i\) RLL-encoded codeword is self-interleaved across all residue classes in \(\cB_i\). 
The decoder must consequently extract the corresponding residue-class subsequences and concatenate them in the same order used by the
encoder. 
It then applies the RLL decoder and removes the random shift
\(\boldz^{\,i}\) introduced in Definition~\ref{definition:random-linear-erasure-codes}. 
The result is a partially observed codeword of \(\cC_{\mathrm{er}}^i\), whose unknown coordinates are recovered by decoding the random binary linear code. 
The correctness of Algorithm~\ref{alg:level-rll} is established in
Lemma~\ref{lemma:level-rll-correct} which follows.
\begin{algorithm}[H]
\caption{Extraction and inverse-RLL decoding of level $i$}
\label{alg:level-rll}
\begin{algorithmic}[1]
\STATE \textbf{Input:} A level index $i\in[J]$, a partially observed codeword $\widehat{\boldc}^{(i)}\in\{0,1,\star\}^n$, $\cB_i=\{b_{i,j}\}_{j=0}^{|\cB_i|-1}$ (with $b_{i,0}<b_{i,1}<\cdots<b_{i,|\cB_i|-1}$), and the shift $\boldz^{\,i}$.
\STATE \textbf{Output:} A partially observed codeword $\widehat{\boldu}^{\,i}\in\{0,1,\star\}^{n_{\mathrm{er}}^i}$ from $\cC_{\mathrm{er}}^i$.
\FOR{$\ell=0$ to $|\cB_i|-1$}\label{line:reverse rll for information}
    \STATE Set $\widehat{\bolds}^{\,i,\ell}\gets(\widehat \boldc^{(i)}[b_{i,\ell}],\widehat \boldc^{(i)}[Q+b_{i,\ell}],\ldots,\widehat \boldc^{(i)}[(n/Q-1)Q+b_{i,\ell}])$.
\ENDFOR\label{line:reverse rll end}
\STATE Concatenate $\widehat{\bar{\bolds}}^{\,i}\gets\widehat{\bolds}^{\,i,0}\mathbin{\|}\cdots\mathbin{\|}\widehat{\bolds}^{\,i,|\cB_i|-1}$.
\STATE Initialize an empty sequence $\widehat{\boldv}^{\,i}$.
\FOR{$a=0$ to $|\cA_i|-1$}
    \IF{$(a+1)\bmod\beta\neq0$}\label{line:remove added 1's}
        \STATE Append $\widehat{\bar \bolds}^{\,i}[a]$ to $\widehat{\boldv}^{\,i}$.
    \ENDIF
\ENDFOR
\STATE Set $\widehat{\boldu}^{\,i}\gets\widehat{\boldv}^{\,i}\oplus\boldz^{\,i}$, where $\star\oplus b\triangleq\star$.
\RETURN $\widehat{\boldu}^{\,i}$.
\end{algorithmic}
\end{algorithm}
\begin{lemma}\label{lemma:level-rll-correct}
 Algorithm~\ref{alg:level-rll} outputs $\widehat{\boldu}^{\,i}$, a partially observed version of the codeword $\boldu^{\,i}\in\cC_{\mathrm{er}}^i$.
\end{lemma}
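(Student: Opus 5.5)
The proof amounts to showing that each step of Algorithm~\ref{alg:level-rll} inverts one step of the encoder, coordinatewise. We also need a convention for erased symbols: a coordinate is either the correct bit or $\star$, and every step sends $\star$ only to $\star$. We interpret ``partially observed version'' to mean exactly this: every non-$\star$ entry of $\widehat{\boldu}^{\,i}$ equals the corresponding entry of $\boldu^{\,i}$. This presupposes that $\widehat{\boldc}^{(i)}$ itself has this property with respect to $\boldc$. That holds because fragments are placed only at their correct global positions (Lemma~\ref{lemma:initial-debruijn-long-enough} at level $J$, and the unique-consistency rule at later levels), and the channel does not alter bits.

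\textbf{Step 1: the loop in Lines~\ref{line:reverse rll for information}--\ref{line:reverse rll end}.} By placement step~2 of the encoder, $\boldc[uQ+b_{i,\ell}]=\bar\bolds^{\,i,\ell}[u]$ for all $u\in\{0,\ldots,n/Q-1\}$. Hence $\widehat\bolds^{\,i,\ell}[u]=\widehat\boldc^{(i)}[uQ+b_{i,\ell}]$ is either $\star$ or equal to $\bar\bolds^{\,i,\ell}[u]$. Both the encoder and the decoder index the blocks $\ell$ by the increasing order of $\cB_i$. Therefore the concatenation $\widehat{\bar\bolds}^{\,i}$ has length $|\cB_i|\cdot n/Q=|\cA_i|$ by Lemma~\ref{lemma:periodic-coordinate-properties}\ref{lemma:D}, and it agrees with $\bar\bolds^{\,i}$ on every unerased coordinate.

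\textbf{Step 2: the loop at Line~\ref{line:remove added 1's}.} Here we use the structure of the $\mathrm{RLL}(0,\beta-1)$ encoder of Lemma~\ref{lemma:rll-beta-minus-one}, whose proof is in the appendix. That encoder inserts a $1$ after every $\beta-1$ input bits, so the inserted bits sit exactly at the indices $a$ with $(a+1)\bmod\beta=0$. Deleting those indices therefore returns the input $\boldv^{\,i}$, with the non-inserted coordinates in their original order. Since $\beta Q\mid n$, the total length $|\cA_i|$ is divisible by $\beta$, and the output has length $\frac{\beta-1}{\beta}|\cA_i|=n_{\mathrm{er}}^i$. Deletion is a coordinate selection, so the erasure pattern is simply inherited.

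\textbf{Step 3: removing the shift.} Coordinatewise XOR with $\boldz^{\,i}$, together with the rule $\star\oplus b=\star$, maps each correct bit $\boldv^{\,i}[k]=\boldu^{\,i}[k]\oplus\boldz^{\,i}[k]$ back to $\boldu^{\,i}[k]$ and preserves erasures. This completes the proof.

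The only delicate point is Step~2: the deletion rule must match exactly the positions where the encoder inserts its $1$'s. This requires the RLL block boundaries to coincide with multiples of $\beta$ in the concatenated string $\bar\bolds^{\,i}$, which holds because the encoder is applied to all of $\boldv^{\,i}$ before self-interleaving. So the concatenation order fixed in Step~1 is what makes Step~2 valid.
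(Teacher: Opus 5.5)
Your proof is correct and takes essentially the same route as the paper: invert the self-interleaved placement block by block, then delete the RLL-inserted $1$'s at the indices $a$ with $(a+1)\bmod\beta=0$; you also spell out the final unshift by $\boldz^{\,i}$ and the length bookkeeping, which the paper leaves implicit. One caveat: your argument that $\widehat{\boldc}^{(i)}$ is correct on its unerased coordinates for $i<J$ relies on levels $i+1,\ldots,J$ having been decoded correctly (the event $\mathsf{E}_{>i}$), which is exactly the setting in which the paper later invokes this lemma.
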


\begin{proof}
During encoding, the shifted codeword $\boldu^{\,i}\oplus\boldz^{\,i}$ is mapped to the RLL-constrained codeword \(\bar{\bolds}^{\,i}\in\bar{\cC}_{\mathrm{er}}^i\).
This codeword is divided into the consecutive blocks $\bar{\bolds}^{\,i,0},\ldots,\bar{\bolds}^{\,i,|\cB_i|-1}$, and block $\ell$ is placed at positions $uQ+b_{i,\ell}$. 
This is a bijection, and Lines~\ref{line:reverse rll for information}-\ref{line:reverse rll end} of Algorithm~\ref{alg:level-rll} invert this placement and reconstruct the partially observed codeword $\widehat{\bar{\bolds}}^{\,i}$ in its original order.
Then, we remove the  $1$'s added by the $\mathrm{RLL}(0,\beta-1)$ encoder
which appends one $1$ after every $\beta-1$ bits.
Since the encoder appends exactly one \(1\) after every \(\beta-1\) bits, these padding bits appear at indices \(\beta-1, 2\beta-1, \ldots\) within the reconstructed sequence.
Line~\ref{line:remove added 1's} removes these specific positions, leaving a partially observed version of the shifted codeword \(\boldu^{\,i}\oplus\boldz^{\,i}\), which completes the proof.
\end{proof}
Lemma~\ref{lemma:level-rll-correct} establishes that
Algorithm~\ref{alg:level-rll} correctly outputs
\(\widehat{\boldu}^{\,i}\), a partially observed version of the codeword
\(\boldu^{\,i}\in\cC_{\mathrm{er}}^i\).
The remaining task is to show that the erasure pattern in \(\widehat{\boldu}^{\,i}\) lies,
with high probability, within the correction capability of the level-\(i\)
random binary linear code defined in
Definition~\ref{definition:random-linear-erasure-codes}. 
Since this probabilistic analysis contains heavily technical details, we defer
it to Section~\ref{subsection:erasure-reliability}.

Proceeding under the guarantee that these erasures are corrected successfully with high probability, we now turn to the central iterative mechanism of our scheme: recursively using fully recovered level-\(i\) bits to assist in the alignment of progressively shorter fragments.

\subsection{Alignment with recycled pilot bits}\label{subsection:recycled-pilot-decoding}
Section~\ref{subsection:level-erasure-decoding} details the specific extraction and erasure correction executed within a single level. 
Once the fragments used at level \(i\) have been placed at their
correct global positions, Algorithm~\ref{alg:level-rll} reverses the
self-interleaving and the RLL encoding, and then removes the  shift
\(\boldz^{\,i}\). 
This produces a partially observed version of the
level-\(i\) codeword \(\boldu^{\,i}\in\cC_{\mathrm{er}}^i\). 
The remaining missing bits are treated as erasures and recovered using the corresponding random binary linear code.
We now explain how the bits recovered in this
step are reused to align shorter fragments at the next level.

The base case of the recursion is level \(J\), which is handled in
Section~\ref{subsection:initial-pilot-decoding}. 
At this level, the decoder uses the fixed pilot sequence on \(\cS_J\) to align every fragment of length at least \((1+\delta)T\log n\). 
More precisely, local alignment identifies the de-interleaved subsequence, which belongs to the pilot sequence, and the extracted De Bruijn
subsequence determines the fragment's global position. 
The aligned fragments produce a partially observed version of the level-\(J\) codeword, which is then
recovered by decoding the level-\(J\) codeword. 
Consequently, the decoder learns the codeword bits on \(\cA_J\), and the set of aligned bits expands from \(\cS_J\) to \(\cS_{J-1}=\cA_J\mathbin{\dot\cup}\cS_J\).

We now describe the recursive step. Fix \(i\in[J-1]\) and suppose that
levels \(i+1,\ldots,J\) have been decoded correctly. 
The decoder then knows the fixed pilot bits on \(\cS_J\), together with the codeword bits on \(\cA_{i+1},\ldots,\cA_J\) recovered from the previously decoded levels.
Hence, the decoder knows every codeword bit indexed by \(\cS_i=\cA_{i+1}\mathbin{\dot\cup}\cdots\mathbin{\dot\cup}\cA_J\mathbin{\dot\cup}\cS_J\). 
In other words, we have that \(\cS_i=\cA_{i+1}\mathbin{\dot\cup}\cS_{i+1}\), and hence, decoding level
\(i+1\) enlarges the set of aligned bits from \(\cS_{i+1}\) to \(\cS_i\). 
These newly recovered bits are then reused as additional pilot bits when aligning fragments at level \(i\), as explained next.

At level \(i\), the decoder considers every fragment of length at least
\((1+\delta)t_i\log n\) and uses its first \((1+\delta)t_i\log n\) bits to test all possible global starting positions (in the spirit of Remark~\ref{remark:prefix of fragment}). 
A candidate position is called \textit{consistent} if every fragment bit
that would fall on a known position in \(\cS_i\) agrees with the
corresponding known codeword bit. 
Since the torn-paper channel only cuts and reorders the codeword and introduces no bit errors, it follows that the true starting position of the fragment is always consistent.
If there is exactly one consistent position, it must be the true starting position, and the decoder places the entire fragment there. 
If there is more than one consistent position, the decoder cannot determine the true placement uniquely; the fragment is then declared ambiguous and discarded.

After fragments that can be uniquely aligned at level \(i\) have been placed, the decoder obtains a partially observed version of the level-\(i\) word carried
on \(\cA_i\).
It then applies Algorithm~\ref{alg:level-rll} and the level-\(i\) decoder to recover \(\boldu^{\,i}\in\cC_{\mathrm{er}}^i\). 
This reveals the codeword bits on \(\cA_i\), so the known set expands according to \(\cS_{i-1}=\cA_i\mathbin{\dot\cup}\cS_i\). 
The same procedure is then repeated at level \(i-1\), where the set \(\cS_{i-1}\) allows the decoder to process fragments as short as \((1+\delta)t_{i-1}\log n\).
Section~\ref{subsection:coverage-and-ambiguity} proves that the total
length of the ambiguous fragments discarded during this recursion will be $o(n)$.
Thus, the presence of ambiguous fragments introduces only a negligible number of additional erasures.
The fact that these erasures can be corrected by the binary linear codes, whose parameters were set in Definition~\ref{definition:random-linear-erasure-codes}, is proved in Section~\ref{subsection:erasure-reliability}.

\begin{algorithm}[H]
\caption{Level-\(i\) alignment using recycled pilot bits}
\label{alg:recycled-pilot-alignment}
\begin{algorithmic}[1]
\STATE \textbf{Input:} The level index~$i$, a fragment
\(\boldf=(\boldf[0],\ldots,\boldf[\ell-1])\) with
\(\ell\geq(1+\delta)t_i\log n\), the (already recovered) set \(\cS_i\), and the
bit values \(\{\boldc[x]\in\{0,1\}:x\in\cS_i\}\). 
\STATE \textbf{Output:} A unique starting position \(s_{\boldf}\), or an
ambiguity declaration.
\STATE Initialize \(\cM_i(\boldf)\gets\varnothing\).
\FOR{\(a=0\) to \(n-(1+\delta)t_i\log n\)}
    \STATE Set \(\textsc{consistent}\gets\textsc{true}\).
    \FOR{\(j=0\) to \((1+\delta)t_i\log n-1\)}
        \IF{\(a+j\in\cS_i\) and \(\boldf[j]\neq \boldc[a+j]\)}
            \STATE Set \(\textsc{consistent}\gets\textsc{false}\) and break.
        \ENDIF
    \ENDFOR
    \IF{\(\textsc{consistent}=\textsc{true}\)}
        \STATE Set
        \(\cM_i(\boldf)\gets\cM_i(\boldf)\cup\{a\}\).
    \ENDIF
\ENDFOR
\IF{\(\cM_i(\boldf)=\{s_{\boldf}\}\)}
    \STATE Place \(\boldf[j]\) at global position \(s_{\boldf}+j\) for every
    \(j\in\{0,1,\cdots,|\boldf|-1\}\), and return \(s_{\boldf}\).
\ELSE
    \STATE Declare \(\boldf\) ambiguous and discard it at level \(i\).
\ENDIF
\end{algorithmic}
\end{algorithm}

The recursion continues through levels \(i-1,i-2,\ldots,1\).
At the final level, the decoder processes every fragment of length at least
\((1+\delta)t_1\log n\), retaining those whose global starting positions are
uniquely determined. After level \(1\) has been decoded, the known set becomes
\(\cS_0=\cA_1\mathbin{\dot\cup}\cdots\mathbin{\dot\cup}\cA_J
\mathbin{\dot\cup}\cS_J=\{0,1,\ldots,n-1\}\). Thus, every level codeword
\(\boldu^{\,1},\ldots,\boldu^{\,J}\) has been recovered, and their
information bits can be combined to reconstruct the original message.
Section~\ref{Section:rate-and-reliability} analyzes the conditions by which the decoding process is guaranteed to succeed, and computes the resulting rate.


\section{Rate Calculation and Reliable Decoding}\label{Section:rate-and-reliability}
In this section, we provide  proofs and calculate the resulting rate of our code construction. 
The pilot-recycling technique from Section~\ref{Section:proposed-work encoding} is discussed in Section~\ref{subsection:coverage-and-ambiguity}, the required properties of random binary linear codes are discussed in Section~\ref{subsection:erasure-reliability}, and the resulting rate is calculated in Section~\ref{subsection:rate-calculation}.

\subsection{Pilot recycling ambiguity}\label{subsection:coverage-and-ambiguity}
In this subsection, we prove that the fragments discarded as ambiguous by
Algorithm~\ref{alg:recycled-pilot-alignment}, which compares a candidate
fragment in level~$i$ with bits in $\cS_i$, have total length $o(n)$. 
The fact that there is at least one consistent position in Algorithm~\ref{alg:recycled-pilot-alignment} is immediate since the torn-paper channel introduces no bit-level errors.
The proof has three steps: 
\begin{enumerate}
    \item For every~$i\in[J]$  and every level-\(i\) fragment of length at least
    \((1+\delta)t_i\log n\), we provide a lower bound on the number of bits in the fragment which originate from the already-recovered bits~$\cS_i$ (Lemma~\ref{lemma:usable-level-signature}).
    \item A fragment of length at least
    \((1+\delta)t_i\log n\) is ambiguous if and only if, in addition to its true global location, at least one more false global location also passes the consistency test against~$\cS_i$.
    We prove that this event has vanishing probability, and therefore the probability that a fragment is declared ambiguous is vanishing as well (Lemma~\ref{lemma:recycled-fragment-ambiguity}).
    \item Finally, we show that the event that a fragment is declared ambiguous occurs rarely enough, so that the \textit{total} length of all ambiguous fragments discarded by the decoder is \(o(n)\) (Lemma~\ref{lemma:vanishing-ambiguous-mass}).
\end{enumerate}


\begin{lemma}\label{lemma:usable-level-signature}
Fix $i\in[J-1]$, assume that levels $i+1,\ldots,J$ have been decoded correctly, i.e., every bit in $\cS_i$ is known to the decoder.
Then, for all sufficiently large $n$, every fragment of length at least
$(1+\delta)t_i\log n$ contains at least $(1+\delta/2)(t_i/T)\log n$ bits from the fixed De Bruijn sequence and at least $(1+\delta/2)(1-t_i/T)\log n$ bits from the already decoded level codewords $\bar{\bolds}^{\,i+1},\ldots,\bar{\bolds}^{\,J}$, where
$\bar{\bolds}^{\,j}\in\bar{\cC}_{\mathrm{er}}^j$ for all~$j\in\{i+1,\ldots,J\}$.
\end{lemma}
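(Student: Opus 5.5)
The plan is a direct counting argument over the periodic structure from Lemma~\ref{lemma:periodic-coordinate-properties}, followed by a correction for marker bits and RLL-added bits.

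\textbf{Step 1: counting positions of $\cS_J$ and $\cS_i\setminus\cS_J$.} Let $L\triangleq(1+\delta)t_i\log n$ and consider the first $L$ bits of a fragment, occupying $L$ consecutive global positions. Since $\cS_J=\{0,T,2T,\ldots\}$, any window of $L$ consecutive positions contains at least $\lfloor L/T\rfloor\ge (1+\delta)(t_i/T)\log n-1$ positions of $\cS_J$. The set $\cS_i$ is $Q$-periodic with density $|\cR_i|/Q=1/t_i$ by~\ref{lemma:B}. Hence a window of length $L$ contains at least $\lfloor L/Q\rfloor q_i\ge L/t_i-q_i$ positions of $\cS_i$.

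Subtracting the two counts, using~\ref{lemma:C} so that $\cS_i\setminus\cS_J=\cA_{i+1}\dot\cup\cdots\dot\cup\cA_J$, shows that the window meets the decoded level positions in at least
\[
L\Bigl(\tfrac1{t_i}-\tfrac1T\Bigr)-q_i-q_J-1=(1+\delta)\Bigl(1-\tfrac{t_i}{T}\Bigr)\log n-O(1)
\]
positions. These positions carry bits of $\bar{\bolds}^{\,i+1},\ldots,\bar{\bolds}^{\,J}$ by the placement step of the encoder. Since $Q,q_i$ are constants, both counts exceed the targets $(1+\delta/2)(t_i/T)\log n$ and $(1+\delta/2)(1-t_i/T)\log n$ by a margin of $(\delta/2)(\cdot)\log n-O(1)$, which is positive for large $n$ because $t_i<T$ for $i\le J-1$ and $t_i/T\ge 1/T>0$.

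\textbf{Step 2: interpreting ``bits from the fixed De Bruijn sequence''.} This is the step I expect to be the main obstacle. If the statement means actual De Bruijn bits rather than pilot positions, the marker and $\widetilde{\mathrm{RLL}}$ overhead must be removed. I would reuse the accounting in the proof of Lemma~\ref{lemma:initial-debruijn-long-enough}. The pilot stretch has length about $(1+\delta)(t_i/T)\log n<2b_n$ when $t_i\le T$, so at most three markers intersect it, removing at most $3q_J\beta=o(\log n)$ bits. The RLL overhead then removes a fraction $2/\beta=o(1)$ of the remainder, plus boundary terms $O(\beta)$. This gives at least $(1+\delta)(t_i/T)\log n-o(\log n)$ De Bruijn bits, and since $\beta=o(\log n)$ the $\delta/2$ slack absorbs the loss.

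If the statement instead counts the RLL-encoded level words $\bar{\bolds}^{\,j}$, no correction is needed on the level side. Should the padding $1$'s also be excluded there, the loss is again only a $1/\beta$ fraction and is handled the same way.

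Finally, I would note that the hypothesis that levels $i+1,\ldots,J$ were decoded is used only to identify these positions as known bits; the counting itself is deterministic and holds for every fragment, with no probabilistic argument needed.
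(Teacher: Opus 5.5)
Your proposal is correct and follows essentially the paper's route. You count pilot positions (one every $T$) and decoded-level positions in $\cS_i\setminus\cS_J=\cA_{i+1}\mathbin{\dot\cup}\cdots\mathbin{\dot\cup}\cA_J$, where the paper instead counts the $q_i-q_J=Q(1/t_i-1/T)$ such residues per complete length-$Q$ block directly. You then discard the $o(\log n)$ marker and $\widetilde{\mathrm{RLL}}(0,\beta-2)$ bits on the pilot side and the $\mathrm{RLL}(0,\beta-1)$ $1$'s on the level side; the paper does exclude the latter, matching your second reading.

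One small fix: for a window $W$ of $L$ consecutive positions, lower-bounding $|\cS_i\cap W|-|\cS_J\cap W|$ requires the upper bound $|\cS_J\cap W|\le\lceil L/T\rceil$, not the lower bound $\lfloor L/T\rfloor$ you stated. The final expression you wrote remains valid.
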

\begin{remark}\label{remark:special bits in S_j}
For the latter count $(1+\delta/2)(1-t_i/T)\log n$ in Lemma~\ref{lemma:usable-level-signature}, we only count the $\beta-1$ positions in each
length-$\beta$ RLL block that come from the shifted codeword in
$\cC_{\mathrm{sh}}^j$, $j\in\{i+1,\ldots,J\}$, and do not count the fixed $1$ inserted by the $\mathrm{RLL}(0,\beta-1)$ encoder.
Furthermore, it follows from Lemma~\ref{lemma:usable-level-signature} that every fragment of length at least $(1+\delta)t_i\log n$ contains at least $(1+\delta/2)\log n$ known bits of the two types"
\end{remark}

\begin{proof}[Proof of Lemma~\ref{lemma:usable-level-signature}]
We prove the statement using induction from level \(J\) down to level \(1\).

 \paragraph*{Base case at level \(J\)} 
This case has already been analyzed in Lemma~\ref{lemma:initial-debruijn-long-enough}.  
In particular, the proof of Lemma~\ref{lemma:initial-debruijn-long-enough} shows that, after removing the marker bits and the two \(1\)'s added in every length-\(\beta\) block by the
\(\widetilde{\mathrm{RLL}}(0,\beta-2)\) encoder, every fragment in
\(\cF_J\) contains at least
$(1+\delta)\log n-o(\log n)$ bits from the original De Bruijn sequence. 
Hence, for all sufficiently large \(n\), it contains at least
$(1+\delta/2)\log n=(1+\delta/2)\frac{t_J}{T}\log n$ such bits, which concludes the base case.  

\paragraph*{Inductive step}
Fix \(i\in[J-1]\), and assume that levels \(i+1,\ldots,J\) have been
decoded correctly.  
In particular, before level \(i+1\) was decoded, all
bits in \(\cS_{i+1}\) were already known, and decoding level \(i+1\)
recovers all bits in \(\cA_{i+1}\). 
By Definition~\ref{definition: Si and Ai},
$\cS_i=\cA_{i+1}\mathbin{\dot\cup}\cS_{i+1},$
and therefore every bit in \(\cS_i\) is known before the decoder begins level~\(i\);  
by recursion, it follows that
$\cS_i=\cA_{i+1}\mathbin{\dot\cup}\cA_{i+2}\mathbin{\dot\cup}\cdots\mathbin{\dot\cup}\cA_J\mathbin{\dot\cup}\cS_J$.
Thus, the known bits at level \(i\) consist of the fixed De Bruijn sequence bits in \(\cS_J\) together with the bits recovered from levels \(i+1,\ldots,J\).

We prove the inductive statement by considering bits from the fixed De Bruijn sequence bits and bits from the already decoded level codewords separately. 
By Definition~\ref{definition:modified-pilot}, one pilot bit is placed every \(T\) codeword positions.  
Hence, a fragment of length at least
\((1+\delta)t_i\log n\) contains at least $(1+\delta)\frac{t_i}{T}\log n$ bits from the pilot sequence.  
Similar to the analysis of Lemma~\ref{lemma:initial-debruijn-long-enough}, removing the marker bits and the two \(1\)'s added by the
\(\widetilde{\mathrm{RLL}}(0,\beta-2)\) encoder removes only
\(o(\log n)\) bits.  
Therefore, the number of bits from the original
De Bruijn sequence in a fragment of length \((1+\delta)t_i\log n\) is at least $(1+\delta)\frac{t_i}{T}\log n-o(\log n),$ which is at least $(1+\delta/2)\frac{t_i}{T}\log n$ for all sufficiently large \(n\).

Next, we count the bits from the already decoded level codewords.
By the induction hypothesis, the bits from levels \(i+1,\ldots,J\) codewords are already known. 
Hence, at level \(i\) we wish to count the number of bits that a fragment of length at least
\((1+\delta)t_i\log n\) contains in 
$\cA_{i+1}\mathbin{\dot\cup}\cA_{i+2}
\mathbin{\dot\cup}\cdots\mathbin{\dot\cup}\cA_J.$
By Lemma~\ref{lemma:periodic-coordinate-properties}, in every block of~\(Q\) consecutive bits, the level-\(j\) codeword occupies
$|\cB_j|=Q\left(\frac{1}{t_{j-1}}-\frac{1}{t_j}\right)$
bits. 
Therefore, the already decoded level codewords
\(i+1,\ldots,J\) together occupy
\begin{align*}
\sum_{j=i+1}^{J}|\cB_j|
=
Q\sum_{j=i+1}^{J}
\left(\frac{1}{t_{j-1}}-\frac{1}{t_j}\right) 
=
Q\left(\frac{1}{t_i}-\frac{1}{T}\right)
\label{eq:recycled-bits-per-Q-block}
\end{align*}
bits in every length-\(Q\) block (recall that~\(t_J=T\)).
Hence, since a fragment of
length \((1+\delta)t_i\log n\) contains at least
$\lceil(1+\delta)t_i\log n/Q\rceil-2$ complete blocks of length~$Q$ (omitting potentially incomplete blocks at the suffix and prefix), it follows that a fragment of
length \((1+\delta)t_i\log n\) contains at least  $$\left(\lceil(1+\delta)t_i\log n/Q\rceil-2\right)\cdot Q\left(\frac{1}{t_i}-\frac{1}{T}\right)\ge(1+\delta)t_i\log n\left(\frac{1}{t_i}-\frac{1}{T}\right)-2Q\left(\frac{1}{t_i}-\frac{1}{T}\right)$$ bits from  $\cA_{i+1}\mathbin{\dot\cup}\cA_{i+2}
\mathbin{\dot\cup}\cdots\mathbin{\dot\cup}\cA_J$.
The bits in $\cA_{i+1}\mathbin{\dot\cup}\cA_{i+2}
\mathbin{\dot\cup}\cdots\mathbin{\dot\cup}\cA_J$, however, contain~$1$'s added by the \(\mathrm{RLL}(0,\beta-1)\) encoder. 
Therefore, since every length-\(\beta\) block of a level codeword
contains one \(1\) added by the
\(\mathrm{RLL}(0,\beta-1)\) encoder, it follows that after omitting those \(\mathrm{RLL}(0,\beta-1)\) bits from $\cA_{i+1}\mathbin{\dot\cup}\cA_{i+2}
\mathbin{\dot\cup}\cdots\mathbin{\dot\cup}\cA_J$ we are left with
\[
\left(1-\frac{1}{\beta}\right)
\left[
(1+\delta)t_i\log n
\left(\frac{1}{t_i}-\frac{1}{T}\right)
\right]
-2Q\beta=
(1+\delta)
\left(1-\frac{t_i}{T}\right)\log n
-
\frac{(1+\delta)(1-t_i/T)\log n}{\beta}-2Q\beta
\]
bits from~$\bar{\bolds}^{\,i+1},\ldots,\bar{\bolds}^{\,J}$.
Since \(Q\) is fixed, \(\beta=\omega(1)\), and
\(\beta=o(\log n)\), it follows that \(\frac{(1+\delta)t_i\log n}{\beta}+2Q\beta=o(\log n)\).
Therefore, the number of bits contained in fragment of
length \((1+\delta)t_i\log n\) from the already decoded level
codewords is
$(1+\delta)\left(1-\frac{t_i}{T}\right)\log n-o(\log n),$
and hence, for all sufficiently large \(n\), it is at least
$(1+\delta/2)\left(1-\frac{t_i}{T}\right)\log n$.

Adding bits from the fixed De Bruijn sequence and the already decoded level codewords gives us $$(1+\delta/2)\frac{t_i}{T}\log n+(1+\delta/2)\left(1-\frac{t_i}{T}\right)\log n=(1+\delta/2)\log n,$$ which concludes the proof.
\end{proof}
Lemma~\ref{lemma:usable-level-signature} calculates the number of already known bits that a fragment of length $(1+\delta)t_i\log n$ contains.
Next, we show that the probability of such fragment being ambiguous, and hence discarded in Algorithm~\ref{alg:recycled-pilot-alignment}, is low.
In what follows, for~$i\in[J-1]$, we say that a placement of a fragment against a partially decoded codeword is $i$-\textit{consistent} (consistent, for short) if every fragment bit agrees with every aligned bit in~$\cS_i$. 

\begin{lemma}\label{lemma:recycled-fragment-ambiguity}
For every fixed \(i\in[J-1]\), there exists a constant \(\rho_i>0\) such that for every fragment $\boldf=(\boldf[0],\boldf[1],\ldots,\boldf[(1+\delta)t_i\log n-1])$ of length \((1+\delta)t_i\log n\), the probability that there exist two or more $i$-consistent placements 
is at most $n^{-\rho_i+o(1)}$.
In particular, the constant \(\rho_i\) may be chosen to satisfy
\[
0<\rho_i<
\min\left\{
\frac{\delta}{2},
\left(1+\frac{\delta}{2}\right)\left(1-\frac{t_i}{T}\right)
\right\}.
\]
\end{lemma}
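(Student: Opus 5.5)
The plan is a union bound over false candidate placements. Fix the fragment $\boldf$, whose true starting position is some $s$. The true placement is always $i$-consistent, so ambiguity means that some $a\neq s$, $a\in\{0,\ldots,n-(1+\delta)t_i\log n\}$, is also $i$-consistent. This gives
\[
\Pr(\text{ambiguous})\le\sum_{a\neq s}\Pr(a\text{ is }i\text{-consistent}),
\]
with at most $n$ terms. It therefore suffices to show that every false placement $a$ is consistent with probability at most $n^{-1-\rho_i+o(1)}$, or at least that the sum of these probabilities is $n^{-\rho_i+o(1)}$. The randomness comes from the independent uniform shifts $\boldz^{\,i+1},\ldots,\boldz^{\,J}$. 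Each bit of $\cC^j_{\mathrm{sh}}$ at a fixed coordinate is uniform, and distinct coordinates are jointly independent, because $\boldz^{\,j}$ is uniform and independent of $\boldu^{\,j}$. Here I treat the message and the codebook as fixed.

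The false placements split into two cases by the offset $d=a-s$.

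\emph{Case 1: $d\not\equiv 0\pmod T$.} The pilot positions of $\boldf$ under the true placement land, under placement $a$, on positions that are not pilot positions. Consider the pilot-only comparison on the coordinates where $a$ puts a pilot bit, i.e.\ $a+j\in\cS_J$. At those $j$, the fragment bit $\boldf[j]=\boldc[s+j]$ lies in some level codeword or in the unknown levels $\cA_1,\ldots,\cA_i$. I restrict attention to the $j$ for which $\boldc[s+j]$ is a shifted-codeword bit of an already decoded level, i.e.\ not an RLL-inserted $1$. Matching each such bit against a deterministic pilot bit has probability $1/2$, independently across bits. Lemma~\ref{lemma:usable-level-signature} gives at least $(1+\delta/2)(t_i/T)\log n$ De Bruijn positions in any window, though these are positions in $\boldf$ rather than in the shifted window, so I need a cleaner count. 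Instead I use the symmetric count: the positions $j$ with $a+j\in\cS_i$ number at least $(1+\delta/2)\log n$ (Remark~\ref{remark:special bits in S_j}), and I pair each with $\boldc[s+j]$. Whenever exactly one of $\boldc[a+j]$, $\boldc[s+j]$ is a random shifted bit, or both are random but at distinct coordinates, the match event has probability $1/2$.

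\emph{Case 2: $d\equiv 0\pmod T$.} Here pilot aligns with pilot, and the De Bruijn bits give the cleanest bound. Roughly $(1+\delta/2)(t_i/T)\log n$ De Bruijn bits are compared with a shifted copy of the De Bruijn sequence. This alone is not enough for uniqueness when $t_i<T$. So I additionally use the $(1+\delta/2)(1-t_i/T)\log n$ decoded level bits. Under a shift by $d\neq0$, these are compared either with pilot bits or with level bits at different coordinates, which again gives probability $1/2$ per bit, except for dependencies.

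The main obstacle is controlling dependencies when comparing $\boldc[s+j]$ with $\boldc[a+j]$ where both are random. These chains can close into cycles, for example when $s+j=a+j'$. I would handle this as in Observation~\ref{Lemma: probability of match in ilan's method}: view the constraints $\boldc[s+j]=\boldc[a+j]$ as edges of a graph on coordinates. Since $d\neq0$, this graph is a union of paths, so a linear number of the edges (at least half after dropping every other edge) are independent, each holding with probability $1/2$.

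For the exponents: for $a$ far from $s$, i.e.\ $|d|\ge(1+\delta)t_i\log n$, the windows are disjoint. All $(1+\delta/2)\log n$ known-bit comparisons then hold with probability at most $2^{-(1+\delta/2)\log n}=n^{-1-\delta/2}$, and summing over $n$ values of $a$ gives $n^{-\delta/2}$. For the $O(\log n)$ overlapping offsets, I use only the decoded level bits, which can lose a factor $2$ from the path argument. This uses the count $(1+\delta/2)(1-t_i/T)\log n$ against only $O(\log n)$ terms, giving $n^{-(1+\delta/2)(1-t_i/T)/c+o(1)}$. Taking $\rho_i$ below both exponents yields the stated bound $n^{-\rho_i+o(1)}$. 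I expect the overlapping-offset dependency bookkeeping to be the delicate step, since the constant matching the stated $\min$ exactly may require a sharper counting than halving.
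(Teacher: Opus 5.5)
\textbf{There is a genuine gap in how you handle the far (disjoint) offsets.}

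You bound the consistency probability of each disjoint false placement by $2^{-(1+\delta/2)\log n}$. This treats every known-bit comparison as a fair coin flip, including the $k_p=(1+\delta/2)(t_i/T)\log n$ De Bruijn comparisons. That fails for the roughly $n/T$ offsets with $d=a-s\equiv 0\pmod T$:
\begin{itemize}
\item For these offsets, pilot faces pilot. Every De Bruijn, marker and RLL comparison is therefore deterministic and contributes no factor $1/2$.
\item The only randomness left is in the $k_r=(1+\delta/2)(1-t_i/T)\log n$ level-bit comparisons. Summing $2^{-k_r}$ over $\Theta(n)$ such offsets gives $n^{1-(1+\delta/2)(1-t_i/T)+o(1)}$, which does not vanish for small $\delta$.
\item Worse, when $(1+\delta/2)t_i/T<1$, a typical length-$k_p$ window of $\boldq$ recurs polynomially many times with the same marker/RLL phase. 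So polynomially many offsets pass the pilot comparison with certainty. For these, your per-placement target $n^{-1-\rho_i+o(1)}$ cannot hold, since their consistency probability is about $2^{-k_r}$.
\end{itemize}
In Case~2 you note that the De Bruijn bits are ``not enough'', but you never say how these offsets are counted. Your final exponent calculation then silently reverts to the coin-flip bound. There is also a smaller issue in Case~1: you need the shifts $\boldz^{\,1},\ldots,\boldz^{\,i}$ to be random too, because a fragment bit facing a pilot bit may come from an undecoded level.

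\textbf{The missing idea is a deterministic count of the placements that survive the De Bruijn comparison; this is what the paper uses.}
\begin{itemize}
\item Given $\boldf$, the bits that a candidate placement reads as De Bruijn bits are fixed by two choices: the residue mod $T$ and the phase within a length-$b_n$ pilot block. This yields only $n^{o(1)}$ possible readings (Lemma~\ref{lemma:number-of-debruijn-subsequences-in-fragment}).
\item Each reading occurs in $\boldq$ at most $2^{\max\{\log N_{q,n}-k_p,0\}}$ times. Hence at most $n^{o(1)}2^{\max\{\log N_{q,n}-k_p,0\}}$ disjoint placements survive.
\item For each survivor, condition on $\boldf$ and on all shift coordinates except the $k_r$ known-level coordinates in the false window. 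This supplies the factor $2^{-k_r}$.
\item The total is $n^{-\min\{\delta/2,(1+\delta/2)(1-t_i/T)\}+o(1)}$. The bound holds uniformly in $d$, so no residue split is needed.
\end{itemize}

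\textbf{Overlapping offsets and the factor $2$.} You can avoid the halving loss. Orient each equality edge from $s+j$ to the known coordinate $a+j$. Every head is then a distinct shifted level bit. Given its initial endpoint, a path with $k$ edges therefore holds with probability exactly $2^{-k}$. This recovers the full stated range of $\rho_i$.
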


\begin{proof}
Let the true starting position for a fragment~$\boldf$ of length $(1+\delta)t_i\log n$ be \(s\in\{0,1,\ldots,n-1\}.\)
By Lemma~\ref{lemma:usable-level-signature}, every candidate placement of
length \((1+\delta)t_i\log n\) contains at least
\(k_{p}=(1+\delta/2)(t_i/T)\log n\) fixed De Bruijn bits and
at least
\(k_{r}=(1+\delta/2)(1-t_i/T)\log n\) bits from $\bar{\bolds}^{\,i+1},\ldots,\bar{\bolds}^{\,J}$; see Remark~\ref{remark:special bits in S_j}.

Suppose that there exists at least one false placement \(a \neq s\) that is also consistent.
It suffices to bound the probability of the union of these events over all \(a \neq s\).
For a false placement to be consistent, every fragment bit mapped to one of the known positions (i.e.,~$\cS_i$ minus the RLL bits and marker bits, see Remark~\ref{remark:special bits in S_j}) must agree with the known bit at that position. 
We first observe the randomness of the level-codeword bits used in these
comparisons. Fix the message and all parity-check matrices. For every
level \(h\in[J]\), we have
\(\boldv^{\,h}=\boldu^{\,h}\oplus\boldz^{\,h}\), where
\(\boldz^{\,h}\) is uniform over
\(\mathbb F_2^{n_{\mathrm{er}}^h}\), and the shifts are independent
across different levels. Therefore, the coordinates of
\(\boldv^{\,1},\ldots,\boldv^{\,J}\) are mutually independent
\(\operatorname{Bernoulli}(1/2)\) random variables. Moreover, by
Remark~\ref{remark:special bits in S_j}, each of the~\(k_r\) positions
counted in Lemma~\ref{lemma:usable-level-signature} contains a bit from one of the shifted level codewords, rather than a marker bit or
a~\(1\) added by the RLL encoder. Such distinct codeword positions
correspond to distinct coordinates of the shifted level codewords.

Our proof strategy is as follows. 
For a false placement disjoint from the true placement, matching with $k_{p}$ fixed De Bruijn bits reduces the number of possible false positions, while matching with the $k_{r}$ bits from the already decoded level codewords reduces the probability of that placement.
For a false placement overlapping the true placement---of which there are at most \(3(1+\delta)t_i\log n\)---$k_{r}$ bits from the already decoded level codewords alone are used to prove that such false placement is unlikely to happen.

We first consider false placements disjoint from the true placement, namely
\( |a-s|\geq(1+\delta)t_i\log n \).
By Lemma~\ref{lemma:usable-level-signature}, the fragment contains at least
\(k_p\) bits from the fixed De Bruijn sequence.
The decoder does not need to know which bits of the fragment are from the
fixed De Bruijn sequence.
Rather, once a candidate \(a\) is chosen, the alignment determines which
fragment indices
\(\boldf[j]\), \(j\in\{0,1,\ldots,(1+\delta)t_i\log n-1\}\),
would be placed at the fixed De Bruijn sequence positions.
For the candidate to be consistent, all of these observed bits
\(\boldf[j]\) must agree with the corresponding known fixed De Bruijn bits.

Furthermore, since the De Bruijn sequence is of order
\(\log N_{q,n}\), any fixed binary sequence of length \(k_p\) occurs
in \(\boldq\) at most
\(2^{\max\{\log N_{q,n}-k_p,0\}}\) times.
Indeed, if \(k_p\leq\log N_{q,n}\), then the length-\(k_p\) sequence
has at most \(2^{\log N_{q,n}-k_p}\) possible extensions to a
length-\(\log N_{q,n}\) sequence, and each such extension occurs once
in the De Bruijn sequence. If \(k_p>\log N_{q,n}\), then the sequence
can occur at most once, since its first \(\log N_{q,n}\) bits already
determine its starting position uniquely. These two cases are expressed
together by the exponent
\(\max\{\log N_{q,n}-k_p,0\}\).

Now fix one of these remaining disjoint false placements and select
\(k_r\) of its comparisons against known bits from the already decoded
level codewords. Since the false placement is disjoint from the true
placement, the codeword positions used as the known bits in these
\(k_r\) comparisons lie outside the positions occupied by the fragment
under its true placement. Hence, by the one-to-one correspondence
between these codeword positions and coordinates of the shifted level
codewords, none of the corresponding random-shift coordinates occurs
among the observed fragment bits.

Condition on the message, all parity-check matrices, the channel cuts,
the entire fragment \(\boldf\), and all random-shift coordinates except
the \(k_r\) distinct coordinates corresponding to these known
level-codeword bits. Under this conditioning, whether the fixed
candidate passes the De Bruijn comparisons is already determined.
The \(k_r\) remaining level-codeword bits are still mutually independent
\(\operatorname{Bernoulli}(1/2)\) random variables, while the
corresponding fragment bits are fixed. Therefore,
\[
\Pr\left(
\text{all \(k_r\) fragment bits aligned against level-codeword bits are consistent}
\right)
\leq 2^{-k_r}.
\]
Since this bound holds after conditioning on the fragment and on the
outcome of the De Bruijn comparisons, it also holds without this
conditioning.

According to
Lemma~\ref{lemma:number-of-debruijn-subsequences-in-fragment} in Appendix~\ref{appendix:rate-reliability-estimates}, there are
at most \(n^{o(1)}\) candidate length-\(k_p\) De Bruijn subsequences in
a segment of length \((1+\delta)t_i\log n\).
Applying a union bound over all disjoint false placements that survive
the De Bruijn comparisons gives
\[
\Pr\left(
\text{there exists a consistent disjoint false placement}
\right)
\leq
n^{o(1)}
2^{\max\{\log N_{q,n}-k_p,0\}-k_r}.
\]
We now consider the two cases in the maximum.
On the one hand, if \(k_p\leq\log N_{q,n}\), then
\[
\max\{\log N_{q,n}-k_p,0\}-k_r
=
\log N_{q,n}-k_p-k_r.
\]
Since
\(k_p+k_r=(1+\delta/2)\log n\) by Lemma~\ref{lemma:usable-level-signature} and since \(\log N_{q,n}=\log n+O(1)\), it follows that
\[
\log N_{q,n}-k_p-k_r
\leq
-\frac{\delta}{2}\log n+O(1).
\]
Hence, in this case,
\[
\Pr\left(
\text{there exists a consistent disjoint false placement}
\right)
\leq
n^{-\frac{\delta}{2}+o(1)}.
\]
On the other hand, if \(k_p>\log N_{q,n}\), then the maximum equals \(0\), and therefore
\[
\Pr\left(
\text{there exists a consistent disjoint false placement}
\right)
\leq
n^{o(1)}2^{-k_r}
=
n^{-\left(1+\frac{\delta}{2}\right)
\left(1-\frac{t_i}{T}\right)+o(1)}.
\]
Thus, in both cases,
\begin{equation}\label{equation:disjoint false placement}
    \Pr\left(
\text{there exists a consistent disjoint false placement}
\right)
\leq
n^{-\min\left\{
\frac{\delta}{2},
\left(1+\frac{\delta}{2}\right)
\left(1-\frac{t_i}{T}\right)
\right\}+o(1)}.
\end{equation}

We next consider a fixed false placement \(a\neq s\) that overlaps the
true placement, and let \(d\triangleq a-s\). Each of the~\(k_r\)
comparisons against previously decoded level-codeword bits induces an
equality constraint
\[
\boldc[s+j]=\boldc[s+j+d]
\]
between two codeword coordinates whose indices differ by \(d\). By
Remark~\ref{remark:special bits in S_j}, the coordinate
\(\boldc[s+j+d]\) in each such comparison is an original bit from one
of the shifted level codewords, rather than a marker bit or a \(1\)
added by the RLL encoder.
Construct a graph whose vertices are the codeword coordinates involved
in these comparisons and whose edges represent these equality
constraints, with each edge directed from \(s+j\) to \(s+j+d\).
Since every edge has displacement \(d\), each connected component lies
within a single residue class modulo \(|d|\) and is a subgraph of a
path. Hence, the comparison graph is a forest, or more specifically, a
collection of paths.

Consider one connected component containing \(k\) edges. With the above
orientation, every vertex in the component except for one endpoint is
the head of exactly one edge. Therefore, these \(k\) vertices are all
positions containing original shifted level-codeword bits. Moreover,
they correspond to \(k\) distinct coordinates of the random shifts and
are therefore mutually independent
\(\operatorname{Bernoulli}(1/2)\) random variables. The remaining
endpoint may be either a shifted level-codeword bit or a deterministic
pilot or an RLL bit, which does not affect the argument.
Conditioning on the value of this remaining endpoint, the \(k\)
equality constraints along the path successively require the \(k\)
independent \(\operatorname{Bernoulli}(1/2)\) bits to take prescribed
values. Hence, all \(k\) constraints in this component are satisfied
with probability at most \(2^{-k}\). Since different connected
components involve distinct shifted-codeword coordinates, multiplying
over all components gives
\[
\Pr\bigl(
\text{a fixed overlapping false placement is consistent}
\bigr)
\leq
2^{-k_r}.
\]

A union bound over the \(O(\log n)\) possible overlapping false placements gives
\begin{equation}\label{equation:overlap false placement}
    \Pr\left(
\text{there exists a consistent overlapping false placement}
\right)\leq
3(1+\delta)t_i\log n\,2^{-k_{r}}=
n^{-
\left(1+\frac{\delta}{2}\right)
\left(1-\frac{t_i}{T}\right)+o(1)}.
\end{equation}

By adding~\eqref{equation:disjoint false placement} and~\eqref{equation:overlap false placement} we obtain the following via union bound:
\[
\Pr\bigl(\text{there exists at least two $i$-consistent placements})
\leq
n^{-\rho_i+o(1)}
\]
for every constant
\[
0<\rho_i<
\min\left\{
\frac{\delta}{2},
\left(1+\frac{\delta}{2}\right)
\left(1-\frac{t_i}{T}\right)
\right\}.\qedhere
\]
\end{proof}

The previous lemma bounds the ambiguity probability for one possible
starting position of a fragment.  
The next lemma converts this bound into the total length of all ambiguous fragments; this quantity is required for the erasure analysis
of Section~\ref{subsection:erasure-reliability}.

\begin{lemma}\label{lemma:vanishing-ambiguous-mass}
Let $A_{i,n}^{\star}$ be the total length of all fragments of length
at least $(1+\delta)t_i\log n$ that would be declared ambiguous if the
alignment test used the true underlying bits on $\cS_i$. 
Then, for every $\varepsilon > 0$ we have that
$$
\lim_{n\to\infty} \Pr\left(\frac{A_{i,n}^{\star}}{n} > \varepsilon\right) = 0,
$$
where the probability is taken over the random shift $\boldz^i$ in Definition~\ref{definition:random-linear-erasure-codes}.
\end{lemma}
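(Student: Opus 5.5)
Our plan is to bound the expectation $\mathbb{E}[A_{i,n}^{\star}]$ by $o(n)$ and then conclude with Markov's inequality: $\Pr(A_{i,n}^{\star}/n>\varepsilon)\le \mathbb{E}[A_{i,n}^{\star}]/(\varepsilon n)\to 0$.

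\textbf{Writing $A_{i,n}^{\star}$ as a sum over starting positions.} For each position $s\in\{0,\ldots,n-1\}$, let $L_s$ be the length of the fragment that starts at $s$, with $L_s=0$ if no fragment starts there. Let $E_s$ be the event that this fragment has length at least $\ell_i\triangleq(1+\delta)t_i\log n$ and is ambiguous. Then
\[
A_{i,n}^{\star}=\sum_{s} L_s\mathbbm{1}_{E_s}.
\]
Ambiguity is decided only by the first $\ell_i$ bits, i.e., by the length-$\ell_i$ window starting at $s$, compared against the true bits on $\cS_i$. So $E_s\subseteq\{\text{fragment starts at } s,\ L_s\ge \ell_i\}\cap G_s$, where $G_s$ is the event that the window at $s$ has at least two $i$-consistent placements. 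The event $G_s$ depends only on the codeword, and hence only on the shifts, not on the cuts. Under the joint law of shifts and cuts, the two events are therefore independent, and
\[
\mathbb{E}[L_s\mathbbm{1}_{E_s}]\le \mathbb{E}[L_s]\Pr(G_s).
\]
The lemma states the probability over the shift; averaging over cuts is harmless, because Markov's inequality applied to the joint law and then Fubini/Markov again gives convergence in probability. Alternatively, we can condition on the cuts and note that $A_{i,n}^{\star}\le \sum_s L_s\mathbbm{1}_{G_s}$ with $L_s$ fixed.

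\textbf{Applying Lemma~\ref{lemma:recycled-fragment-ambiguity}.} That lemma gives $\Pr(G_s)\le n^{-\rho_i+o(1)}$ uniformly in $s$. We need $\rho_i>0$, which holds because $t_i<T$ for $i\in[J-1]$. Conditioning on the cuts,
\[
\mathbb{E}_{\boldz}\bigl[A_{i,n}^{\star}\bigr]\le \sum_s L_s\, n^{-\rho_i+o(1)}\le n\cdot n^{-\rho_i+o(1)}=o(n),
\]
since the fragment lengths sum to $n$. Markov's inequality over the shift then gives $\Pr_{\boldz}(A_{i,n}^{\star}/n>\varepsilon)\le n^{-\rho_i+o(1)}/\varepsilon\to0$. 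This holds for every realization of the cuts, and so also after averaging over them.

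\textbf{Main obstacle.} The delicate point is that Lemma~\ref{lemma:recycled-fragment-ambiguity} must apply to a window chosen by the (possibly dependent) cut process and uniformly over all $s$. Conditioning on the cuts resolves this: the window is then a fixed set of coordinates, and the proof of Lemma~\ref{lemma:recycled-fragment-ambiguity} uses only the randomness of the shift coordinates. We also need the level-$i$ comparisons to use only $\boldz^{j}$ for $j>i$ together with the fixed pilot. Since the statement says the probability is over $\boldz^i$, we would read it as taken over all shifts relevant to $\cS_i$, or equivalently over the shifts conditioned on earlier levels. Independence across levels ensures that $\Pr(G_s)$ is unaffected by this conditioning.
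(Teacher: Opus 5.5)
Your proposal is correct and is essentially the paper's proof. Both arguments bound $A_{i,n}^{\star}$ by a sum over fragment starting positions of fragment length times the indicator that the length-$(1+\delta)t_i\log n$ window there admits a second consistent placement, condition on the cuts so that Lemma~\ref{lemma:recycled-fragment-ambiguity} applies uniformly in the start position, use $\sum_\ell N_\ell=n$ to get $\mathbb{E}[A_{i,n}^{\star}]\le n^{1-\rho_i+o(1)}$, and finish with Markov's inequality. Your remark that the randomness actually used comes from the shifts $\boldz^{\,j}$, $j>i$, rather than $\boldz^{\,i}$ alone, is a fair reading of the statement (the paper's proof just writes $\mathbb{E}_{\mathrm{shifts}}$).
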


\begin{proof}
For every possible starting position
\(s\in\{0,1,\ldots,n-1\}\), let \(B_s\in\{0,1\}\) be a random variable which indicates that the first
\((1+\delta)t_i\log n\) codeword bits beginning at \(s\) admit at least
one additional consistent placement when the true codeword bits on
\(\cS_i\) are used in the alignment test.  If fewer than
\((1+\delta)t_i\log n\) codeword bits remain after position \(s\), we set
\(B_s=0\).  
Then, it follows from Lemma~\ref{lemma:recycled-fragment-ambiguity} that
\[
\mathbb E_{\mathrm{shifts}}[B_s]
=
\Pr(B_s=1)
\leq
n^{-\rho_i+o(1)}
\]
for every~$s$.

The level-\(i\) alignment test uses only the first
\((1+\delta)t_i\log n\) bits of every sufficiently long fragment.
Consequently, if a fragment beginning at \(s\) has length at least
\((1+\delta)t_i\log n\) and is ambiguous under the consistency
test, then these first \((1+\delta)t_i\log n\) bits admit at least one
additional consistent placement, and hence \(B_s=1\).

Recall from Section~\ref{section:torn-paper-channel} that the torn-paper
channel divides the codeword into \(K\) fragments, and that \(N_\ell\) is the
length of the \(\ell\)-th fragment, where the final fragment is truncated
at the end of the codeword.
Further, let \(S_\ell\) be the starting position of~$N_\ell$.
More precisely, \(S_1=0\) and
\(S_\ell=\sum_{j=1}^{\ell-1}N_j\) for every
\(\ell\in\{2,\ldots,K\}\).
Since the fragments form a disjoint partition
of the entire length-\(n\) codeword, their lengths satisfy
$
\sum_{\ell=1}^{K}N_\ell=n.
$

If the \(\ell\)-th fragment is included in \(A_{i,n}^{\star}\), then it
has length at least \((1+\delta)t_i\log n\) and it is ambiguous under the
consistency test, and therefore \(B_{S_\ell}=1\).  Its contribution
to \(A_{i,n}^{\star}\) is its entire length \(N_\ell\), and hence
\[
A_{i,n}^{\star}
\leq
\sum_{\ell=1}^{K}B_{S_\ell}N_\ell.
\]

We now explain how the expectation of~$A_{i,n}^{\star}$ is calculated.  
First fix all cut locations made by the torn-paper channel.  Once these cut locations are fixed, the
number of fragments \(K\), their starting positions
\(S_1,\ldots,S_K\), and their lengths \(N_1,\ldots,N_K\) are fixed.
The variables \(B_{S_1},\ldots,B_{S_K}\) remain random since they depend
on the random shifts~$\boldz_i$.  Since the random shifts  are independent of the
channel cuts, fixing the cuts does not change the uniform bound
\(\mathbb E_{\mathrm{shifts}}[B_{S_\ell}]
\leq n^{-\rho_i+o(1)}\).

We therefore first take expectation over the random shifts  for fixed cut
locations and then average over the random channel cuts.  By the law of
total expectation and linearity of expectation,
\[
\begin{aligned}
\mathbb E[A_{i,n}^{\star}]
&=
\mathbb E_{\mathrm{cuts}}\!\left[
\mathbb E_{\mathrm{shifts}}\!\left[
A_{i,n}^{\star}
\,\middle|\,
\text{fixed cut locations}
\right]
\right]\\
&\leq
\mathbb E_{\mathrm{cuts}}\!\left[
\mathbb E_{\mathrm{shifts}}\!\left[
\sum_{\ell=1}^{K}B_{S_\ell}N_\ell
\,\middle|\,
\text{fixed cut locations}
\right]
\right]\\
&\overset{(\dagger)}{=}
\mathbb E_{\mathrm{cuts}}\!\left[
\sum_{\ell=1}^{K}
N_\ell\,
\mathbb E_{\mathrm{shifts}}[B_{S_\ell}]
\right]\\
&\leq
\mathbb E_{\mathrm{cuts}}\!\left[
n^{-\rho_i+o(1)}
\sum_{\ell=1}^{K}N_\ell
\right]=
n^{1-\rho_i+o(1)}
=
o(n),
\end{aligned}
\]
where in  $(\dagger)$ the fragment lengths \(N_\ell\) are taken outside the
expectation over the random shifts  since they are already fixed once the
channel cuts are fixed.  The last equality follows since
\(\sum_{\ell=1}^{K}N_\ell=n\) and since \(\rho_i>0\).
Finally, for every fixed \(\varepsilon>0\), Markov's inequality gives
\[
\Pr\left(
\frac{A_{i,n}^{\star}}{n}>\varepsilon
\right)
\leq
\frac{\mathbb E[A_{i,n}^{\star}]}{\varepsilon n}
\leq
\frac{n^{-\rho_i+o(1)}}{\varepsilon}
\overset{n\to\infty}{\longrightarrow} 0.\qedhere
\]
\end{proof}

\subsection{Induced erasure patterns and random linear decoding}
\label{subsection:erasure-reliability}
We now complete the decoding-reliability analysis by showing that the partially observed level codewords obtained in Section~\ref{subsection:level-erasure-decoding} contain sufficiently few erasures for the corresponding random binary linear codes to recover all missing bits. 
Fix a level \(i\in[J]\).  By
Lemma~\ref{lemma:level-rll-correct}, the output
\(\widehat{\boldu}^{\,i}\in\{0,1,\star\}^{n_{\mathrm{er}}^i}\) of
Algorithm~\ref{alg:level-rll} is a partially observed codeword of
\(\cC_{\mathrm{er}}^i\).  
Let
\(\cE_i\subseteq\{0,1,\ldots,n_{\mathrm{er}}^i-1\}\) denote the positions
of its erased bits. 
 We recover these bits using the standard
erasure-decoding procedure for binary linear codes.  Since
\(\cC_{\mathrm{er}}^i=\ker(H_i)\), where
\(H_i\in\mathbb F_2^{r_i\times n_{\mathrm{er}}^i}\) is the parity-check
matrix from Definition~\ref{definition:random-linear-erasure-codes},
decoding reduces to solving a linear system over \(\mathbb F_2\) for the
erased bits using Gaussian elimination.
Let \((H_i)_{\cE_i}\) denote the submatrix of \(H_i\) formed by the
columns indexed by the erased-coordinate set \(\cE_i\).  The erased bits
in \(\widehat{\boldu}^{\,i}\) are recovered uniquely if and only if these
columns are linearly independent, or equivalently, if
\(\rank((H_i)_{\cE_i})=|\cE_i|\).  The probability that a uniformly
random parity-check matrix satisfies this condition for a fixed erasure
pattern is analyzed in
Lemma~\ref{lemma:random linear code random erasures}, given shortly.

Even though the torn paper channel is inherently memoryless, in the sense that different cut locations are independent, the erasure channel that is induced by our code construction is \emph{not} memoryless.
Since the torn paper channel cuts the codeword into pieces, the discarded fragments typically
induce bursts of consecutive erasures. 
Consequently, standard coding tools and concentration bounds for the
memoryless binary erasure channel (BEC) cannot be applied directly. 
We therefore do not model these erasures as independent binary erasures. Instead, following the approach used for the random linear code analysis in~\cite{liu2026improved}, we first prove that the induced erasure pattern belongs with high probability to a family of \textit{good erasure patterns at level $i$}, and then prove that a uniformly random parity-check matrix corrects every fixed pattern in this family with high probability.

For a fixed $\eta>0$ and each level $i\in[J]$, define the
\textit{set of good erasure patterns at level $i$}, denoted by $\cT_i$,
as the collection of all subsets of $[n_{\mathrm{er}}^i]$ whose size
is at most $(1-R_{\mathrm{er}}^i-\eta)n_{\mathrm{er}}^i$; that is,
\begin{align}\label{equation: definition of mathcal T}
     \cT_i
     = \binom{[n_{\mathrm{er}}^i]}
     {\leq (1-R_{\mathrm{er}}^i-\eta)n_{\mathrm{er}}^i}.
\end{align}
The following known lemma relies on independence properties of random binary vectors.
\begin{lemma}\label{lemma:random linear code random erasures}
Fix a level $i\in[J]$. Let $\cC_{\mathrm{er}}^i$ be the random binary
linear code from Definition~\ref{definition:random-linear-erasure-codes},
with parity-check matrix
$H_i\in\mathbb{F}_2^{r_i\times n_{\mathrm{er}}^i}$, where
$r_i=(1-R_{\mathrm{er}}^i)n_{\mathrm{er}}^i$.
For every fixed erasure pattern $\cE_i\in\cT_i$, we have
$
\Pr_{H_i}\bigl(\rank((H_i)_{\cE_i})\neq|\cE_i|\bigr)
\leq 2^{-\eta n_{\mathrm{er}}^i}.
$
\end{lemma}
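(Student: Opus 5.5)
We will prove the bound by the standard column-by-column rank argument for uniformly random matrices. Write $e\triangleq|\cE_i|$ and enumerate the erased coordinates as $\cE_i=\{c_1,\ldots,c_e\}$. Because the entries of $H_i$ are i.i.d.\ uniform over $\mathbb F_2$, the columns $\boldh_1,\ldots,\boldh_e$ of $(H_i)_{\cE_i}$ are independent and each is uniform over $\mathbb F_2^{r_i}$. This holds no matter how $\cE_i$ is chosen, since $\cE_i$ is fixed before $H_i$ is drawn. The event $\rank((H_i)_{\cE_i})\neq e$ is exactly the event that some column $\boldh_j$ lies in the span of the preceding columns $\boldh_1,\ldots,\boldh_{j-1}$.

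First, for each $j\in[e]$, we condition on $\boldh_1,\ldots,\boldh_{j-1}$. Their span is a subspace of dimension at most $j-1$, so it contains at most $2^{j-1}$ vectors. The column $\boldh_j$ is independent of the earlier columns and uniform, so it falls into this span with probability at most $2^{j-1}/2^{r_i}$. A union bound over $j$ then gives
\[
\Pr_{H_i}\bigl(\rank((H_i)_{\cE_i})\neq e\bigr)\le\sum_{j=1}^{e}2^{j-1-r_i}<2^{e-r_i}.
\]

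Second, we use membership in $\cT_i$. From \eqref{equation: definition of mathcal T}, $e\le(1-R^i_{\mathrm{er}}-\eta)n^i_{\mathrm{er}}$, and by hypothesis $r_i=(1-R^i_{\mathrm{er}})n^i_{\mathrm{er}}$. Hence $e-r_i\le-\eta n^i_{\mathrm{er}}$, which yields the claimed bound $2^{-\eta n^i_{\mathrm{er}}}$.

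This is essentially the same computation as in Lemma~\ref{lemma:full rank random parity matrix}, applied to the $r_i\times e$ submatrix, so we expect no real obstacle. The one point to handle carefully is that $\cE_i$ must be fixed (that is, independent of $H_i$) for the columns to remain uniform and independent. The lemma assumes this, and it is valid in the construction because the erasures are determined by the channel cuts and the random shifts, both of which are independent of $H_i$. We also note that the parameter choice in Definition~\ref{definition:random-linear-erasure-codes} has $r_i$ equal to $(1-R^i_{\mathrm{er}})n^i_{\mathrm{er}}$ exactly, by Lemma~\ref{lemma:full rank random parity matrix}. If rounding made $r_i$ slightly larger, the bound would only improve.
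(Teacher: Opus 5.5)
Your proof is correct and matches the paper's argument: the paper also bounds the probability that each successive column of $(H_i)_{\cE_i}$ falls into the span of the earlier ones by $2^{a-r_i}$, sums these to get $2^{e_i-r_i}$, and then uses $e_i\le(1-R^i_{\mathrm{er}}-\eta)n^i_{\mathrm{er}}$ from membership in $\cT_i$. The only difference is that you bound the span by $2^{j-1}$ vectors unconditionally, where the paper conditions on the earlier columns being linearly independent; this makes your union-bound step slightly cleaner.
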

\begin{proof}
Fix $\cE_i\in\cT_i$ and consider it as a deterministic erasure pattern.
Let $e_i=|\cE_i|$. The code $\cC_{\mathrm{er}}^i$ successfully decodes
$\cE_i$ if and only if the columns of $(H_i)_{\cE_i}$ are linearly
independent, or equivalently, if $\rank((H_i)_{\cE_i})=e_i$.
For any integer~$1\le a\le r_i$, conditioned on the first $a$ columns being linearly independent, 
the probability that the next random column belongs to their span is $2^{a-r_i}$.
Hence, a union bound gives
$
\Pr_{H_i}\bigl(\rank((H_i)_{\cE_i})\neq e_i\bigr)
\leq \sum_{a=0}^{e_i-1}2^{a-r_i}<2^{e_i-r_i}.
$
Since $\cE_i\in\cT_i$, we have
$e_i\leq(1-R_{\mathrm{er}}^i-\eta)n_{\mathrm{er}}^i$, whereas
$r_i=(1-R_{\mathrm{er}}^i)n_{\mathrm{er}}^i$. Therefore
$e_i-r_i\leq-\eta n_{\mathrm{er}}^i$, which proves the stated bound.
For every fixed level $i$, $n_{\mathrm{er}}^i$ grows linearly with $n$;
therefore, the failure probability tends to zero exponentially as
$n\to\infty$.
\end{proof}
For the following lemma, recall that $\cE_i$ is the
random set of erased coordinates in the partially observed level-$i$
codeword $\widehat{\boldu}^{\,i}$ returned by
Algorithm~\ref{alg:level-rll} for each level $i\in[J]$, and
recall that $R_{\mathrm{er}}^i$ and
$n_{\mathrm{er}}^i$ are the rate and block length of the
random binary linear code $\cC_{\mathrm{er}}^i$ in
Definition~\ref{definition:random-linear-erasure-codes}, respectively. 
\begin{lemma}\label{lemma:bound on number of erasures}
Consider the decoding procedure after fragment alignment and
inverse RLL decoding. 
For a fixed
message, let $\mathsf E_{>i}$ be the event that levels
$i+1,\ldots,J$ are decoded correctly.
Then, for every $i\in[J]$,
$
\Pr\bigl(\mathsf E_{>i}\cap\{\cE_i\notin\cT_i\}\bigr)
\xrightarrow[n\to\infty]{}0.
$
\end{lemma}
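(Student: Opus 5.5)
On the event $\mathsf E_{>i}$ every bit of $\cS_i$ is known, and the decoder's consistency test at level $i$ coincides with the idealized test of Lemma~\ref{lemma:vanishing-ambiguous-mass}. The plan is therefore to bound $|\cE_i|$ deterministically by two terms. The first is the number of level-$i$ coordinates covered by fragments shorter than $(1+\delta)t_i\log n$ (for $i=J$, shorter than $(1+\delta)T\log n$). The second is the number of level-$i$ coordinates covered by ambiguous fragments. For $i=J$ there are no ambiguous fragments, since Lemmas~\ref{lemma:initial-local-alignment-correct}--\ref{lemma:initial-debruijn-long-enough} guarantee unique alignment. We then show each term is small with high probability.

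\textbf{Step 1: reduction to codeword coordinates.} By Lemma~\ref{lemma:level-rll-correct}, each erased coordinate of $\widehat{\boldu}^{\,i}$ comes from an erased position of $\cA_i$, and the map is injective. Hence $|\cE_i|$ is at most the number of positions in $\cA_i$ covered neither by a short fragment nor by an ambiguous one. Fragments are unions of consecutive positions, and $\cA_i$ has density $|\cB_i|/Q$ inside every length-$Q$ block (Lemma~\ref{lemma:periodic-coordinate-properties}\ref{lemma:D}). So a fragment of length $N$ covers at most $N|\cB_i|/Q+|\cB_i|$ positions of $\cA_i$. The additive term, summed over all $K$ fragments, contributes $O(K)=O(np_n)=O(n/\log n)=o(n)$ with high probability.

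\textbf{Step 2: the short fragments.} Let $L_i$ be the total length of fragments shorter than $\ell_i=(1+\delta)t_i\log n$. Since $\sum_\ell N_\ell=n$, the useful mass is the total length of fragments of length at least $\ell_i$. For $p_n=\alpha/\log n$, the expected number of fragments of length at least $\ell_i$ is $\approx np_n(1-p_n)^{\ell_i}$, and their mean length is about $\ell_i+1/p_n$. This gives expected usable fraction $(1-p_n)^{\ell_i}(1+p_n\ell_i)\to(\alpha(1+\delta)t_i+1)e^{-\alpha(1+\delta)t_i}$. Hence $\mathbb E[L_i]/n\to1-(\alpha(1+\delta)t_i+1)e^{-\alpha(1+\delta)t_i}$. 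For concentration we use the Chebyshev/second-moment argument over the i.i.d.\ geometric lengths as in~\cite{liu2026improved}. The number of cuts concentrates, and each fragment has length $O(\log^2 n)$ with high probability, so the variance of $L_i$ is $O(n\,\mathrm{polylog}\, n)=o(n^2)$. Thus $L_i/n\le 1-(\alpha(1+\delta)t_i+1)e^{-\alpha(1+\delta)t_i}+\eta/2$ with high probability.

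\textbf{Step 3: ambiguous fragments and assembly.} On $\mathsf E_{>i}$ the ambiguous mass equals $A^\star_{i,n}$, and Lemma~\ref{lemma:vanishing-ambiguous-mass} gives $A^\star_{i,n}\le\eta n\cdot|\cB_i|/(2Q)$ with high probability. Combining the steps, the erased positions number at most $(|\cB_i|/Q)(L_i+A^\star_{i,n})+o(n)$. Dividing by $|\cA_i|=n|\cB_i|/Q$, the erased fraction of $\cA_i$ is at most $1-(\alpha(1+\delta)t_i+1)e^{-\alpha(1+\delta)t_i}+\eta$. Removing the RLL positions maps this to coordinates of $\widehat{\boldu}^{\,i}$. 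Since $n^i_{\mathrm{er}}=(1-1/\beta)|\cA_i|$, the fraction changes by at most $1/\beta=o(1)$. The result is $|\cE_i|\le(1-R^i_{\mathrm{er}}-2\eta+o(1))n^i_{\mathrm{er}}\le(1-R^i_{\mathrm{er}}-\eta)n^i_{\mathrm{er}}$, using $R^i_{\mathrm{er}}$ from Lemma~\ref{lemma:full rank random parity matrix}. Hence $\cE_i\in\cT_i$ outside an event of vanishing probability. Intersecting with $\mathsf E_{>i}$ only shrinks that event.

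\textbf{Main obstacle.} The main obstacle is Step 1's uniformity claim: short fragments cover $\cA_i$ in proportion to their length. The fragment starts are independent of the residue structure, and the per-fragment error is $O(Q)$, with $Q$ a constant. Still, the bookkeeping must be done pathwise, not in expectation, which the $O(K)$ bound handles. A secondary subtlety is that Lemma~\ref{lemma:vanishing-ambiguous-mass} is stated for the idealized test. On $\mathsf E_{>i}$ the decoder's known bits equal the true bits, so the two tests coincide exactly.
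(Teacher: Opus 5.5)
Your proposal is correct and follows the paper's proof. Both arguments bound the erased fraction by three terms: the short-fragment mass $1-V_{(1+\delta)t_i}$, controlled by the coverage limit; the ambiguous mass $A^\star_{i,n}/n$, controlled by Lemma~\ref{lemma:vanishing-ambiguous-mass}; and an $O(K)$ fragment-boundary term, which is $o(n)$ because $K=O(n/\log n)$ with high probability. The remaining differences are bookkeeping: you count $\cA_i$ positions per length-$Q$ window and absorb the RLL positions through the factor $\beta/(\beta-1)$, whereas the paper counts $\boldv^{\,i}$ bits directly per length-$Q\beta$ window; you also note explicitly that level $J$ has no ambiguous fragments, which the paper leaves implicit.
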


\begin{proof}
Let $\cG_i^c \subseteq \{0,1,\ldots,n-1\}$ denote the set of coordinates that are unavailable to the decoder after the level-$i$ alignment procedure.
This set consists of all fragments of length strictly less than $(1+\delta)t_i\log n$, which are not considered at level~$i$, together with the fragments of length at least $(1+\delta)t_i\log n$ that undergo the consistency test and are declared ambiguous, and hence discarded at level~$i$. 

For a constant \(\gamma\), the \textit{coverage} \(V_\gamma\)~\cite[Def.~1]{shomorony2021torn} is the fraction of output bits that reside in fragments of length at least \(\gamma\log n\), i.e.,
\begin{align}\label{Definition: coverage of fragment}
    V_\gamma=\frac{1}{n}\sum_{j=1}^{K}N_j\mathbf{1}\!\left\{N_j\geq\gamma\log n\right\},
\end{align}
where \(\mathbf{1}\!\left\{N_j\geq\gamma\log n\right\}\) is the indicator of the event \(N_j\geq\gamma\log n\). 
In particular, \(V_{(1+\delta)t_i}\) denotes the fraction of codeword coordinates contained in fragments of length at least \((1+\delta)t_i\log n\). 
Then, we have that
$$
|\cG_i^c| \leq n(1 - V_{(1+\delta)t_i}) + A_{i,n}^{\star},
$$
where~\(A_{i,n}^{\star}\) is the total length of the ambiguous fragments under the consistency test, as in Lemma~\ref{lemma:vanishing-ambiguous-mass}.

Recall from Lemma~\ref{lemma:periodic-coordinate-properties} and Definition~\ref{definition:random-linear-erasure-codes} that $|\cA_i|=\frac{n}{Q}|\cB_i|$ and $n_{\mathrm{er}}^i=\frac{\beta-1}{\beta}|\cA_i|,$ and thus $\frac{(\beta-1)|\cB_i|}{Q\beta}=\frac{n_{\mathrm{er}}^i}{n}.$
Now consider one discarded fragment, which may either have length less than
\((1+\delta)t_i\log n\) or have length at least this threshold but be
declared ambiguous.  
Recall that, within every \(Q\beta\) consecutive
codeword positions, each residue \(b\in\cB_i\) appears exactly \(\beta\)
times.  
Among these \(\beta\) positions,
exactly \(\beta-1\) contain bits from
\(\boldv^{\,i}=\boldu^{\,i}\oplus\boldz^{\,i}\) while the remaining
position contains the \(1\) added by the
\(\mathrm{RLL}(0,\beta-1)\) encoder.  
Hence, every complete block of
\(Q\beta\) consecutive codeword positions contains exactly
\((\beta-1)|\cB_i|\) level-\(i\) bits.  Since
\[
\frac{(\beta-1)|\cB_i|}{Q\beta}
=
\frac{n_{\mathrm{er}}^i}{n},
\]
the fraction of level-\(i\) bits in every complete length-\(Q\beta\)
block is exactly \(n_{\mathrm{er}}^i/n\).

For a discarded fragment, all complete length-\(Q\beta\) blocks contained
in the fragment therefore contribute exactly this fraction of level-\(i\)
bits.  At most one incomplete block can occur at the beginning of the
fragment and at most one incomplete block can occur at the end.  Together,
these two incomplete blocks contain fewer than \(2Q\beta\) codeword
bits.  
Since \(\cG_i^c\) is the union of at most \(K\) discarded fragments,
summing this bound over all discarded fragments gives,
\[
|\cE_i|
\leq
\frac{n_{\mathrm{er}}^i}{n}|\cG_i^c|
+
2Q\beta K.
\]
Dividing by \(n_{\mathrm{er}}^i\), we obtain
\begin{equation}
\frac{|\cE_i|}{n_{\mathrm{er}}^i}
\leq
1-V_{(1+\delta)t_i}
+
\frac{A_{i,n}^{\star}}{n}
+
\frac{2Q\beta K}{n_{\mathrm{er}}^i}.
\label{eq:level-i-erased-fraction-bound}
\end{equation}

We first bound the final term in
\eqref{eq:level-i-erased-fraction-bound}.  By
Lemma~\ref{lemma:number-of-fragments} in Appendix~\ref{appendix:rate-reliability-estimates}, there exists a constant \(c>0\)
such that
\[
\Pr\left(
K>\frac{cn}{\log n}
\right)
\longrightarrow0.
\]
Whenever \(K\leq cn/\log n\), Definition~\ref{definition:random-linear-erasure-codes}
and Lemma~\ref{lemma:periodic-coordinate-properties} give
\[
\begin{aligned}
\frac{2Q\beta K}{n_{\mathrm{er}}^i}
\leq
\frac{2Q\beta(cn/\log n)}
{\frac{\beta-1}{\beta}\frac{n}{Q}|\cB_i|}
=
\frac{2cQ^2\beta^2}
{(\beta-1)|\cB_i|\log n}.
\end{aligned}
\]
Since \(Q\), \(c\), and \(|\cB_i|\) are fixed constants and
\(\beta=o(\log n)\), it follows that
$\frac{2Q\beta K}{n_{\mathrm{er}}^i}=o(1)$ for sufficiently large $n$, i.e. $\Pr(\frac{2Q\beta K}{n_{\mathrm{er}}^i}<\eta)\xrightarrow[]{}1$ for any fixed $\eta$ and for sufficiently large $n$.

We next consider the first two terms in \eqref{eq:level-i-erased-fraction-bound}.
Lemma~\ref{lemma:long-fragment-coverage} in Appendix~\ref{appendix:rate-reliability-estimates} shows that the coverage fraction
\(V_{(1+\delta)t_i}\) converges in probability to
$
\bigl(\alpha(1+\delta)t_i+1\bigr)
e^{-\alpha(1+\delta)t_i}
$ as~$n\to\infty$.
Therefore, for any fixed \(\eta>0\), with probability which tends to one as $n$ tends to infinity,
\[
V_{(1+\delta)t_i}
\geq
\bigl(\alpha(1+\delta)t_i+1\bigr)
e^{-\alpha(1+\delta)t_i}
-\frac{\eta}{2}.
\]
Moreover, Lemma~\ref{lemma:vanishing-ambiguous-mass} shows that, when
levels \(i+1,\ldots,J\) are decoded correctly, with probability which tends to one as $n$ tends to infinity we have that
$
\frac{A_{i,n}^{\star}}{n}
\leq
\frac{\eta}{2}.
$
Whenever these two inequalities hold simultaneously, we have
\[
\begin{aligned}
1-V_{(1+\delta)t_i}
+\frac{A_{i,n}^{\star}}{n}
&\leq
1-
\left[
\bigl(\alpha(1+\delta)t_i+1\bigr)
e^{-\alpha(1+\delta)t_i}
-\frac{\eta}{2}
\right]
+\frac{\eta}{2}\\
&=
1-
\bigl(\alpha(1+\delta)t_i+1\bigr)
e^{-\alpha(1+\delta)t_i}
+\eta.
\end{aligned}
\]
Therefore, we have shown that the first two terms of~\eqref{eq:level-i-erased-fraction-bound} on the right-hand side are at most
\[
1-
\bigl(\alpha(1+\delta)t_i+1\bigr)
e^{-\alpha(1+\delta)t_i}
+\eta
\]
with probability tending to one, while the final term is at most
\(\eta\) with probability which tends to one as $n\to\infty$.  Therefore, if
\[
\frac{|\cE_i|}{n_{\mathrm{er}}^i}
>
1-
\bigl(\alpha(1+\delta)t_i+1\bigr)
e^{-\alpha(1+\delta)t_i}
+2\eta,
\]
then at least one of these two bounds must fail.  By the union bound, the
probability of this event tends to zero.  Hence,
\[
\Pr\left(
\mathsf E_{>i}
\cap
\left\{
\frac{|\cE_i|}{n_{\mathrm{er}}^i}
>
1-
\bigl(\alpha(1+\delta)t_i+1\bigr)
e^{-\alpha(1+\delta)t_i}
+2\eta
\right\}
\right)
\longrightarrow0.
\]

Now, by Definition~\ref{definition:random-linear-erasure-codes} we have that
\begin{align*}
    R_{\mathrm{er}}^i&=
\bigl(\alpha(1+\delta)t_i+1\bigr)e^{-\alpha(1+\delta)t_i}-3\eta,\\
1-R_{\mathrm{er}}^i-\eta
&=
1-
\bigl(\alpha(1+\delta)t_i+1\bigr)
e^{-\alpha(1+\delta)t_i}
+
2\eta,
\end{align*}
and hence,
\[
\Pr\left(
\mathsf E_{>i}
\cap
\left\{
|\cE_i|>
(1-R_{\mathrm{er}}^i-\eta)n_{\mathrm{er}}^i
\right\}
\right)
\longrightarrow0.
\]
By the definition of \(\cT_i\) in
\eqref{equation: definition of mathcal T}, this is equivalent to
\[
\Pr\left(
\mathsf E_{>i}
\cap
\{\cE_i\notin\cT_i\}
\right)
\longrightarrow0,
\]
which concludes the proof.
\end{proof}
To show that the random binary linear code successfully corrects the
erasures at every level, we analyze the probability that all deeper
levels have been decoded correctly but level \(i\) fails. For a fixed
parity-check matrix \(H_i\), let
\begin{equation}\label{equation:P_e^{(i)})}
    P_e^{(i)}(H_i)\triangleq\text{Pr}(\mathsf E_{>i}\cap\{\text{level \(i\) fails to decode correctly}\}),
\end{equation}

where the probability is taken over the uniformly chosen message, the
torn-paper channel, the random shifts , and all parity-check matrices other
than \(H_i\).
We evaluate
\(\mathbb E_{H_i}[P_e^{(i)}(H_i)]\), where \(H_i\) is chosen uniformly
at random.
The main idea is to divide the level-\(i\) decoding failure according to
whether its erased-coordinate set \(\cE_i\) belongs to the good family
\(\cT_i\).
When \(\cE_i\in\cT_i\), Lemma~\ref{lemma:random linear code random erasures}
shows that a uniformly random \(H_i\) fails to correct this fixed pattern
with probability at most \(2^{-\eta n_{\mathrm{er}}^i}\). On the other
hand, the preceding good-erasure-pattern analysis shows that
\(\mathsf E_{>i}\) occurs while \(\cE_i\notin\cT_i\) with probability
tending to zero by Lemma~\ref{lemma:bound on number of erasures}. 
Combining these two bounds yields the following result.

\begin{lemma}\label{lemma:shift-independence}
Fix a level \(i\in[J]\) and a message. Let
\(\boldu^{\,i}=\boldu^{\,i}(H_i)\in\cC_{\mathrm{er}}^i\) be the
level-\(i\) codeword selected by the message, and recall that
\(\boldv^{\,i}=\boldu^{\,i}\oplus\boldz^{\,i}\), where
\(\boldz^{\,i}\) is uniform over
\(\mathbb F_2^{n_{\mathrm{er}}^i}\) and independent of \(H_i\).
Then \(\boldv^{\,i}\) is uniform over
\(\mathbb F_2^{n_{\mathrm{er}}^i}\) and is independent of \(H_i\).
Moreover, before level-\(i\) erasure decoding, the pair
\((\mathsf E_{>i},\cE_i)\) depends on \(H_i\) only through
\(\boldv^{\,i}\). Therefore,
\((\mathsf E_{>i},\cE_i)\) is independent of \(H_i\).
\end{lemma}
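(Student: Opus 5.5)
The plan has two parts: a one-time-pad argument for the first claim, and a functional-dependence argument for the second.

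\emph{Uniformity and independence of \(\boldv^{\,i}\).} Fix the message. For every fixed \(H_i=H\), the codeword \(\boldu^{\,i}(H)\) is a deterministic vector in \(\mathbb F_2^{n_{\mathrm{er}}^i}\). Since \(\boldz^{\,i}\) is uniform and independent of \(H_i\), we get for every \(\boldw\)
\[
\Pr\bigl(\boldv^{\,i}=\boldw\mid H_i=H\bigr)=\Pr\bigl(\boldz^{\,i}=\boldw\oplus\boldu^{\,i}(H)\bigr)=2^{-n_{\mathrm{er}}^i}.
\]
This conditional law does not depend on \(H\). Hence \(\boldv^{\,i}\) is uniform and independent of \(H_i\).

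The same computation works jointly with all the other randomness. Let \(\Theta\) collect the channel cuts, the matrices \(H_h\) and shifts \(\boldz^{\,h}\) for \(h\neq i\), and the pilot. All of these are independent of \((H_i,\boldz^{\,i})\). Conditioning on \(H_i=H\) and \(\Theta=\theta\) gives the same uniform law for \(\boldv^{\,i}\). Therefore \((\boldv^{\,i},\Theta)\) is independent of \(H_i\), with \(\boldv^{\,i}\) uniform and independent of \(\Theta\).

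\emph{Dependence of \((\mathsf E_{>i},\cE_i)\) only through \(\boldv^{\,i}\).} I would show that this pair is a deterministic function of \((\boldv^{\,i},\Theta)\). The transmitted codeword \(\boldc\) is determined by the pilot and by \(\operatorname{Enc}_{\mathrm{RLL}}(\boldv^{\,h})\) for all \(h\), placed on \(\cA_h\). For \(h\neq i\), \(\boldv^{\,h}\) is a function of \(\Theta\) and the message. So \(\boldc\) is a function of \((\boldv^{\,i},\Theta)\), and hence so is the fragment multiset.

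Next I would trace the decoder up to the start of level-\(i\) erasure decoding. It runs the initial alignment (Algorithms~\ref{alg:initial-local-alignment}--\ref{alg:extract-initial-debruijn}) and then decodes levels \(J,\ldots,i+1\) using only \(H_h,\boldz^{\,h}\) for \(h>i\). It then runs Algorithm~\ref{alg:recycled-pilot-alignment} at level \(i\) with the bits on \(\cS_i\), and finally Algorithm~\ref{alg:level-rll}. The shift \(\boldz^{\,i}\) is used only in the last XOR of Algorithm~\ref{alg:level-rll}, which does not change which coordinates are \(\star\). The matrix \(H_i\) is never used before level-\(i\) erasure decoding.

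Consequently, the event \(\mathsf E_{>i}\) (correct decoding of levels \(i+1,\ldots,J\)) is a function of \((\boldv^{\,i},\Theta)\). The erased set \(\cE_i\) is also such a function, since it is determined by the aligned and unambiguous fragments mapped through the fixed placement and RLL indexing. Combined with the first part, the independence of \((\boldv^{\,i},\Theta)\) from \(H_i\) gives the independence of \((\mathsf E_{>i},\cE_i)\) from \(H_i\).

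\emph{Main obstacle.} The step needing the most care is checking that no earlier decoding stage touches \(H_i\) or the unshifted \(\boldu^{\,i}\) except through \(\boldc\). In particular, one must confirm that the ambiguity decisions at levels \(h\ge i\) depend only on the known bits of \(\boldc\). At levels above \(i\), \(\cA_i\) is not yet in the known set, and at level \(i\) itself the known set \(\cS_i\) excludes \(\cA_i\). The bits on \(\cA_i\) affect these decisions only through the fragment contents, that is, through \(\boldv^{\,i}\). Once this functional dependence is written out, the rest is immediate.
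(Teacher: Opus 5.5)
Your proposal is correct and follows essentially the same route as the paper: a one-time-pad argument showing that \(\boldv^{\,i}\) is uniform whatever the value of \(H_i\), followed by the observation that everything up to level-\(i\) erasure decoding is a deterministic function of \(\boldv^{\,i}\) and the remaining independent randomness. The only difference is packaging. You collect the other randomness into \(\Theta\) and prove directly that \((\boldv^{\,i},\Theta)\) is jointly independent of \(H_i\), whereas the paper conditions on that randomness and then removes the conditioning; the two arguments are equivalent.
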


\begin{proof}
Fix the message and condition on the channel cuts, as well as on all parity-check
matrices \(H_j\) with \(j\neq i\), and on all random shifts \(\boldz^{\,j}\) with \(j\neq i\). 
For every fixed \(H_i\), the vector
\(\boldu^{\,i}(H_i)\) is fixed. Since \(\boldz^{\,i}\) is uniform and
independent of \(H_i\), the vector
\(\boldv^{\,i}=\boldu^{\,i}(H_i)\oplus\boldz^{\,i}\) is uniform over
\(\mathbb F_2^{n_{\mathrm{er}}^i}\), regardless of the value of \(H_i\).
Hence, \(\boldv^{\,i}\) is independent of \(H_i\).

For the ``moreover'' part, before level-\(i\) erasure decoding,
the matrix \(H_i\) is not used directly by the decoder. Its only effect
on the codeword is through the level-\(i\) codeword
\(\boldu^{\,i}(H_i)\), which, after applying the shift~$\boldz^i$, appears only
through \(\boldv^{\,i}\). 
The RLL encoding and placement of these bits
are deterministic with respect to \(\boldv^{\,i}\).
Therefore, under this conditioning, both \(\mathsf E_{>i}\) and
\(\cE_i\) depend on \(H_i\) only through \(\boldv^{\,i}\).
Since \(\boldv^{\,i}\) is independent of \(H_i\), the pair
\((\mathsf E_{>i},\cE_i)\) is independent of \(H_i\).
Removing the conditioning preserves this independence.
\end{proof}

\begin{lemma}\label{lemma:random_matrix_capacity}
For every fixed level \(i\in[J]\) we have $\lim_{n\to\infty}\mathbb E_{H_i}\!P_e^{(i)}(H_i)=0$.
\end{lemma}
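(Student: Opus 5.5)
The plan is to split the failure event according to whether the erasure pattern is good. Given $\mathsf E_{>i}$, level $i$ fails only if $\rank((H_i)_{\cE_i})\neq|\cE_i|$. Here we use that, under $\mathsf E_{>i}$, every aligned fragment is placed correctly: a unique consistent placement is the true one, and at level $J$ this holds by Lemmas~\ref{lemma:initial-local-alignment-correct}--\ref{lemma:initial-debruijn-long-enough}. Lemma~\ref{lemma:level-rll-correct} then ensures that the non-erased coordinates of $\widehat{\boldu}^{\,i}$ are correct, so Gaussian elimination returns $\boldu^{\,i}$ whenever the erased columns are independent. Hence
\[
P_e^{(i)}(H_i)\le \Pr\bigl(\mathsf E_{>i}\cap\{\cE_i\notin\cT_i\}\bigr)+\Pr\bigl(\mathsf E_{>i}\cap\{\cE_i\in\cT_i\}\cap\{\rank((H_i)_{\cE_i})\neq|\cE_i|\}\bigr),
\]
where both probabilities are over everything except $H_i$, with $H_i$ held fixed.

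Next we take $\mathbb E_{H_i}$ of both terms. By Lemma~\ref{lemma:shift-independence}, the pair $(\mathsf E_{>i},\cE_i)$ is independent of $H_i$. The first term therefore does not depend on $H_i$ in distribution, and after averaging it equals the quantity that Lemma~\ref{lemma:bound on number of erasures} shows tends to $0$.

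For the second term we again use independence, which lets us condition on the realized pattern $\cE_i=E$:
\[
\mathbb E_{H_i}\Pr(\cdots)=\sum_{E\in\cT_i}\Pr\bigl(\mathsf E_{>i},\cE_i=E\bigr)\,\Pr_{H_i}\bigl(\rank((H_i)_E)\neq|E|\bigr)\le 2^{-\eta n^i_{\mathrm{er}}},
\]
where the last inequality is Lemma~\ref{lemma:random linear code random erasures}. Since $n^i_{\mathrm{er}}=\Theta(n)$, this bound vanishes. Combining the two terms gives $\mathbb E_{H_i}P_e^{(i)}(H_i)\to0$.

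The main obstacle is the independence step. It relies on Lemma~\ref{lemma:shift-independence} being applicable jointly over the message, the channel and the other matrices, with $\mathsf E_{>i}$ and $\cE_i$ depending on $H_i$ only through $\boldv^{\,i}$. The delicate part is that the alignment at levels above $i$ uses bits of $\boldv^{\,i}$, which sit in the fragments being aligned. That dependence is still only through $\boldv^{\,i}$, and $\boldv^{\,i}$ is uniform and independent of $H_i$, so the argument goes through. Apart from this, we only need to note that the rate statement of Lemma~\ref{lemma:full rank random parity matrix} is not required here, because the bound in Lemma~\ref{lemma:random linear code random erasures} holds for any $H_i$ with $r_i$ rows.
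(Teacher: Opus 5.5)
Your proposal is correct and follows the paper's proof essentially step for step. Under $\mathsf E_{>i}$ you split the level-$i$ failure into $\{\cE_i\notin\cT_i\}$, handled by Lemma~\ref{lemma:bound on number of erasures}, and $\{\cE_i\in\cT_i\}$ with $\rank((H_i)_{\cE_i})\neq|\cE_i|$, which you factor via Lemma~\ref{lemma:shift-independence} into $\sum_{E\in\cT_i}\Pr(\mathsf E_{>i},\cE_i=E)\Pr_{H_i}(\rank((H_i)_E)\neq|E|)\le 2^{-\eta n^i_{\mathrm{er}}}$ using Lemma~\ref{lemma:random linear code random erasures}. Your extra remarks (that retained fragments are placed correctly under $\mathsf E_{>i}$, including at level $J$, and that full row rank of $H_i$ is not needed) only make explicit what the paper uses implicitly.
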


\begin{proof}
Suppose that \(\mathsf E_{>i}\) occurs, meaning that levels
\(i+1,\ldots,J\) have been decoded correctly, and the decoder has the
correct codeword bits on \(\cS_i\). Hence the fragments retained at
level \(i\) are placed correctly, and
Algorithm~\ref{alg:level-rll} returns a partially observed version of the
correct codeword \(\boldu^{\,i}\in\cC_{\mathrm{er}}^i\), with erased
coordinates \(\cE_i\).
Therefore, level~\(i\) can fail only in one of two cases. Either
\(\cE_i\notin\cT_i\), or \(\cE_i\in\cT_i\) but the columns of
\((H_i)_{\cE_i}\) are linearly dependent.

By Lemma~\ref{lemma:shift-independence}, \(\mathsf E_{>i}\) and \(\cE_i\) do not depend on \(H_i\).
Consequently, for every fixed \(\cE\in\cT_i\), we may apply the fixed-erasure-pattern bound of
Lemma~\ref{lemma:random linear code random erasures}.
Hence, it follows that
\[
\begin{aligned}
\mathbb E_{H_i}\!\left[P_e^{(i)}(H_i)\right]
&\leq
\Pr\left(
\mathsf E_{>i}
\cap
\{\cE_i\notin\cT_i\}
\right)+
\sum_{\cE\in\cT_i}
\Pr\left(
\mathsf E_{>i}
\cap
\{\cE_i=\cE\}
\right)
\Pr_{H_i}\left(
\rank((H_i)_{\cE})\neq|\cE|
\right).
\end{aligned}
\]
Furthermore, $\Pr_{H_i}\left(
\rank((H_i)_{\cE})\neq|\cE|
\right)
\leq
2^{-\eta n_{\mathrm{er}}^i}$ for every fixed \(\cE\in\cT_i\) by Lemma~\ref{lemma:random linear code random erasures}, and therefore,
\[
\begin{aligned}
\mathbb E_{H_i}\!\left[P_e^{(i)}(H_i)\right]
&\leq
\Pr\left(
\mathsf E_{>i}
\cap
\{\cE_i\notin\cT_i\}
\right)+
2^{-\eta n_{\mathrm{er}}^i}
\sum_{\cE\in\cT_i}
\Pr\left(
\mathsf E_{>i}
\cap
\{\cE_i=\cE\}
\right)\\
&\leq
\Pr\left(
\mathsf E_{>i}
\cap
\{\cE_i\notin\cT_i\}
\right)
+
2^{-\eta n_{\mathrm{er}}^i}.
\end{aligned}
\]
The first term tends to zero as $n\to\infty$ by Lemma~\ref{lemma:bound on number of erasures}. Since \(\eta>0\) is fixed and \(n_{\mathrm{er}}^i\) grows
linearly with \(n\), the second term tends to zero exponentially, and consequently, $\lim_{n\to\infty}\mathbb E_{H_i}\!\left[P_e^{(i)}(H_i)\right]=0$.
\end{proof}
Having established vanishing decoding error at every fixed level, we
now account for all \(J\) levels simultaneously.  Recall that the decoder
processes the levels successively in the order
\(J,J-1,\ldots,1\).  Therefore, if the decoder fails, there
must be a first level \(i\) that fails, while all previous levels
\(i+1,\ldots,J\) have been decoded correctly.  We first use this
observation to show that the expected total decoding error of the random
multilevel construction vanishes, and then apply Markov's inequality to
the total error probability.
Let
\[
\boldH\triangleq(H_1,\ldots,H_J)
\qquad\text{and}\qquad
\boldz\triangleq
(\boldz^{\,1},\ldots,\boldz^{\,J})
\]
denote the collections of parity-check matrices and random shifts,
respectively.  For fixed \(\boldH\) and \(\boldz\), let
\(P_e^{\mathrm{total}}(\boldH,\boldz)\) denote the decoding error
probability of the complete multilevel decoder, averaged over a uniformly
chosen message and the torn-paper channel.

\begin{corollary}\label{corollary:total_decoding_error}
The expected total decoding error probability of the random multilevel
construction vanishes; that is,
\[
\lim_{n\to\infty}
\mathbb E_{\boldH,\boldz}
\left[
P_e^{\mathrm{total}}(\boldH,\boldz)
\right]
=
0.
\]
\end{corollary}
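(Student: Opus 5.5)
The plan is to decompose the total failure event according to the first level, in the decoding order $J,J-1,\ldots,1$, at which decoding fails, and then to invoke Lemma~\ref{lemma:random_matrix_capacity} once for each level. Fix $\boldH$ and $\boldz$. Decoding fails exactly when some level decodes incorrectly. Since the levels are processed in decreasing order, a failure has a first failing level $i$, and at that point levels $i+1,\ldots,J$ were all decoded correctly, i.e., $\mathsf E_{>i}$ holds, with $\mathsf E_{>J}$ trivially true. Conversely, if every level decodes correctly, then $\boldu^{\,1},\ldots,\boldu^{\,J}$ are recovered. Each map $\boldm^{\,i}\mapsto\boldu^{\,i}$ is injective because, with high probability, $\rank(H_i)=r_i$ by Lemma~\ref{lemma:full rank random parity matrix}, and the message $\boldm=\boldm^{\,1}\|\cdots\|\boldm^{\,J}$ is then reconstructed. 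Averaging over a uniform message and the channel therefore gives
\[
P_e^{\mathrm{total}}(\boldH,\boldz)\le\sum_{i=1}^{J}\Pr\bigl(\mathsf E_{>i}\cap\{\text{level $i$ fails}\}\,\big|\,\boldH,\boldz\bigr).
\]

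Next I take the expectation over $\boldH$ and $\boldz$. By the tower property, each summand averaged over $\boldz$ and over $H_j$ for $j\ne i$ is exactly $P_e^{(i)}(H_i)$ as defined in~\eqref{equation:P_e^{(i)})}, since that probability is already averaged over the message, the channel, the shifts, and the other parity-check matrices. Hence
\[
\mathbb E_{\boldH,\boldz}\bigl[P_e^{\mathrm{total}}(\boldH,\boldz)\bigr]\le\sum_{i=1}^{J}\mathbb E_{H_i}\bigl[P_e^{(i)}(H_i)\bigr].
\]
Since $J=m(T-1)$ is a fixed constant independent of $n$, and each term tends to zero by Lemma~\ref{lemma:random_matrix_capacity}, the sum tends to zero as well.

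The main obstacle is making sure the first-failing-level decomposition matches the conditioning used in Lemma~\ref{lemma:random_matrix_capacity}. The difficulty is that the event "level $i$ fails" must be the decoder's actual behavior. In particular, when $\mathsf E_{>i}$ holds, the known bits on $\cS_i$ must be correct, so that alignment is correct and Lemma~\ref{lemma:bound on number of erasures} applies; this is already built into the definition of $P_e^{(i)}$. A smaller issue is the full-rank event for $H_i$, which affects whether the message is uniquely encoded. I would handle it by adding $\sum_i\Pr(\rank(H_i)<r_i)\le\sum_i 2^{r_i-n_{\mathrm{er}}^i}\to0$ to the bound. Alternatively, one can note that decoding correctness is measured on the codeword, so this term only adds a vanishing contribution in expectation.
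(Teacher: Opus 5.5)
Your proposal is correct and follows the paper's proof essentially verbatim. You decompose failure by the first failing level in the order $J,\ldots,1$, apply the union bound, identify each term with $\mathbb E_{H_i}[P_e^{(i)}(H_i)]$, and invoke Lemma~\ref{lemma:random_matrix_capacity} over the constant number $J$ of levels. Your extra full-rank term is harmless but not needed: the paper measures error on the codeword, and a rank-deficient $H_i$ only enlarges $\ker(H_i)$, so an injective message map into it still exists.
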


\begin{proof}
Take probability jointly over the uniformly chosen message, the torn-paper channel, the random shifts, and the random parity-check matrices.
If the decoder fails, then there exists a first
level \(i\in[J]\) at which decoding fails.  Since the decoder proceeds in
the order \(J,J-1,\ldots,1\), all levels \(i+1,\ldots,J\) must have
been decoded correctly before this first failure occurs.  Therefore,
\[
\{\text{overall decoding failure}\}
\subseteq
\bigcup_{i=1}^{J}
\left(
\mathsf E_{>i}
\cap
\{\text{level \(i\) fails to decode}\}
\right).
\]
By the union bound,
\[
\Pr(\text{overall decoding failure})
\leq
\sum_{i=1}^{J}
\Pr\left(
\mathsf E_{>i}
\cap
\{\text{level \(i\) fails to decode}\}
\right).
\]
By the definition of \(P_e^{(i)}(H_i)\) in~\eqref{equation:P_e^{(i)})},
\[
\Pr\left(
\mathsf E_{>i}
\cap
\{\text{level \(i\) fails to decode}\}
\right)
=
\mathbb E_{H_i}\!\left[P_e^{(i)}(H_i)\right].
\]
Hence,
\begin{align}\label{equation:ExpectationFinal}
\mathbb E_{\boldH,\boldz}
\left[
P_e^{\mathrm{total}}(\boldH,\boldz)
\right]
\leq
\sum_{i=1}^{J}
\mathbb E_{H_i}\!\left[P_e^{(i)}(H_i)\right].
\end{align}
Lemma~\ref{lemma:random_matrix_capacity} shows that $\mathbb E_{H_i}\!P_e^{(i)}(H_i)\to0$ as~$n\to\infty$ for every fixed
\(i\in[J]\), and hence the r.h.s of~\eqref{equation:ExpectationFinal} tends to zero as a finite sum, which concludes the proof.
\end{proof}

The following lemma applies Markov's inequality to the total decoding
error probability and shows that a randomly chosen multilevel code design
achieves vanishing decoding error with high probability.

\begin{lemma}\label{lemma:high_probability_existence}
There exists a nonnegative sequence $a_n$, which tends to zero as~$n$ tends to infinity, such that
\[
\Pr_{\boldH,\boldz}\left(
P_e^{\mathrm{total}}(\boldH,\boldz)\leq a_n
\right)
\xrightarrow[n\to\infty]{}1.
\]
\end{lemma}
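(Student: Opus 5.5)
The plan is to apply Markov's inequality to the nonnegative random variable \(P_e^{\mathrm{total}}(\boldH,\boldz)\), with the threshold chosen from Corollary~\ref{corollary:total_decoding_error}. Write
\[
\epsilon_n\triangleq\mathbb E_{\boldH,\boldz}\left[P_e^{\mathrm{total}}(\boldH,\boldz)\right].
\]
By Corollary~\ref{corollary:total_decoding_error}, \(\epsilon_n\to0\) as \(n\to\infty\), and \(\epsilon_n\ge0\) since it is the expectation of a probability.

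Next, define \(a_n\triangleq\sqrt{\epsilon_n}\). If \(\epsilon_n=0\) for some \(n\), set \(a_n\triangleq 1/n\) instead, so that Markov's inequality can be applied with a strictly positive threshold. This sequence is nonnegative and tends to zero. Markov's inequality then gives
\[
\Pr_{\boldH,\boldz}\left(P_e^{\mathrm{total}}(\boldH,\boldz)>a_n\right)\le\frac{\epsilon_n}{a_n}.
\]
The right-hand side equals \(\sqrt{\epsilon_n}\) when \(\epsilon_n>0\), and equals \(0\) otherwise. In both cases it tends to zero. Taking complements yields
\[
\Pr_{\boldH,\boldz}\left(P_e^{\mathrm{total}}(\boldH,\boldz)\le a_n\right)\ge 1-\sqrt{\epsilon_n}\to1,
\]
which is the claim.

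The only point that needs care is measurability and interpretation. \(P_e^{\mathrm{total}}(\boldH,\boldz)\) is a deterministic function of the code design \((\boldH,\boldz)\), obtained by averaging over the message and the channel. Its expectation over \((\boldH,\boldz)\) is therefore exactly the joint error probability bounded in the proof of Corollary~\ref{corollary:total_decoding_error}, via the tower property. There is no real obstacle here; the substantive work was already done in Lemma~\ref{lemma:random_matrix_capacity} and the corollary, and the choice \(a_n=\sqrt{\epsilon_n}\) is simply the standard way to balance the threshold against the Markov bound.
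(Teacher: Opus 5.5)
Your proposal is correct and follows essentially the same route as the paper: Markov's inequality applied to $P_e^{\mathrm{total}}(\boldH,\boldz)$ with threshold of order $\sqrt{\epsilon_n}$, where $\epsilon_n \to 0$ by Corollary~\ref{corollary:total_decoding_error}. The only difference is cosmetic: the paper takes $a_n=\sqrt{\epsilon_n}+1/n$, which keeps the threshold strictly positive and avoids your case split on $\epsilon_n=0$.
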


\begin{proof}
Let
\[
\epsilon_n
\triangleq
\mathbb E_{\boldH,\boldz}
\left[
P_e^{\mathrm{total}}(\boldH,\boldz)
\right].
\]
By Corollary~\ref{corollary:total_decoding_error},
\(\epsilon_n\to0\) as $n\to\infty$. Set
\[
a_n\triangleq\sqrt{\epsilon_n}+\frac{1}{n}.
\]
Then \(a_n>0\) and \(a_n\to0\) as $n\to\infty$. Markov's inequality gives
\[
\Pr_{\boldH,\boldz}\left(
P_e^{\mathrm{total}}(\boldH,\boldz)>a_n
\right)
\leq
\frac{\epsilon_n}{a_n}
\leq
\sqrt{\epsilon_n}
\xrightarrow[n\to\infty]{}0,
\]
or equivalently,
\[
\Pr_{\boldH,\boldz}\left(
P_e^{\mathrm{total}}(\boldH,\boldz)\leq a_n
\right)
\xrightarrow[n\to\infty]{}1.\qedhere
\]
\end{proof}
\subsection{Rate calculation}\label{subsection:rate-calculation}

We now calculate the rate of the construction.  The fixed pilot sequence
carries no message bits, so all message information is contained in the
level codewords.  Fix a choice of the parity-check matrices for which
every \(H_i\) has full row rank.  By
Lemma~\ref{lemma:full rank random parity matrix} and
Definition~\ref{definition:random-linear-erasure-codes}, for every
\(i\in[J]\) we have
$
\dim(\cC_{\mathrm{er}}^i)
=
n_{\mathrm{er}}^i-r_i
=
n_{\mathrm{er}}^iR_{\mathrm{er}}^i
$, and hence
$
|\cC_{\mathrm{er}}^i|
=
2^{n_{\mathrm{er}}^iR_{\mathrm{er}}^i}.
$
The random shift by \(\boldz^{\,i}\) is a bijection, and the
\(\mathrm{RLL}(0,\beta-1)\) encoder is injective.  
Therefore, neither
operation changes the number of level-\(i\) codewords, namely
$
|\bar{\cC}_{\mathrm{er}}^i|
=
|\cC_{\mathrm{sh}}^i|
=
|\cC_{\mathrm{er}}^i|
=
2^{n_{\mathrm{er}}^iR_{\mathrm{er}}^i}.
$
Recall from Definition~\ref{definition:random-linear-erasure-codes} that
\(n_{\mathrm{er}}^i=(1-1/\beta)|\cA_i|\).  It follows that
\[
|\bar{\cC}_{\mathrm{er}}^i|
=
2^{(1-1/\beta)|\cA_i|R_{\mathrm{er}}^i}.
\]

By Lemma~\ref{lemma:periodic-coordinate-properties}, the sets
\(\cA_1,\ldots,\cA_J,\cS_J\) form a disjoint partition of the codeword
positions.  The pilot sequence occupying \(\cS_J\) is fixed, while the
level-\(i\) RLL codeword occupies exactly \(\cA_i\).  Thus, independent
choices of one codeword from each
\(\bar{\cC}_{\mathrm{er}}^i\) produce distinct length-\(n\) codewords,
and consequently
$
|\cC|
=
\prod_{i=1}^{J}|\bar{\cC}_{\mathrm{er}}^i|.
$
Lemma~\ref{lemma:periodic-coordinate-properties} also gives
$
\frac{|\cA_i|}{n}
=
\frac{1}{t_{i-1}}-\frac{1}{t_i}.
$
Therefore, the rate of the resulting code is
\begin{align}
R(\cC)
=
\frac{1}{n}\log |\cC| 
=
\left(1-\frac{1}{\beta}\right)
\sum_{i=1}^{J}
\left(
\frac{1}{t_{i-1}}-\frac{1}{t_i}
\right)
R_{\mathrm{er}}^i.
\label{eq:multilevel-code-rate}
\end{align}

We next lower-bound the weighted sum in
\eqref{eq:multilevel-code-rate} for fixed construction parameters.

\begin{theorem}
\label{theorem:finite-parameter-rate}
The rate of the torn-paper code $\cC$ in Section~\ref{Section:proposed-work encoding} satisfies
\begin{align*}
R(\cC)
\geq
\left(1-\frac{1}{\beta}\right)
\Bigg[
e^{-\alpha(1+\delta)}
-\frac{e^{-\alpha(1+\delta)T}}{T} -
\bigl(1+\alpha(1+\delta)\bigr)
e^{-\alpha(1+\delta)}
\frac{1}{m}
-3\eta\left(1-\frac{1}{T}\right)
\Bigg].
\label{eq:finite-parameter-rate}
\end{align*}

\end{theorem}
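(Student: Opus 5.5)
The plan is to start from the exact rate expression \eqref{eq:multilevel-code-rate} and lower-bound the weighted sum by comparing it with an integral that can be evaluated in closed form. Write $c\triangleq 1+\delta$ and $g(t)\triangleq(\alpha c t+1)e^{-\alpha c t}$. By Lemma~\ref{lemma:full rank random parity matrix} and Definition~\ref{definition:random-linear-erasure-codes}, for the chosen full-rank $H_i$ we have $R_{\mathrm{er}}^i=g(t_i)-3\eta$. Substituting this into \eqref{eq:multilevel-code-rate} splits the bracket into two parts. The first is $\sum_{i=1}^J(1/t_{i-1}-1/t_i)\,g(t_i)$. The second is $-3\eta\sum_{i=1}^J(1/t_{i-1}-1/t_i)$, which telescopes to $-3\eta(1/t_0-1/t_J)=-3\eta(1-1/T)$. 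This produces the last term of the statement directly.

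For the main sum, I will use two elementary facts. First, $1/t_{i-1}-1/t_i=\int_{t_{i-1}}^{t_i}t^{-2}\,dt$. Second, $\frac{d}{dt}\bigl(-e^{-\alpha c t}/t\bigr)=e^{-\alpha c t}(\alpha c t+1)/t^2=g(t)/t^2$. Integrating the second identity gives
\[
\int_1^T\frac{g(t)}{t^2}\,dt=e^{-\alpha c}-\frac{e^{-\alpha c T}}{T},
\]
which is exactly the leading part of the claimed bound. It therefore remains to control the discretization error
\[
\sum_{i=1}^J\int_{t_{i-1}}^{t_i}\frac{g(t)-g(t_i)}{t^2}\,dt .
\]

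The key observation is that $g$ is nonincreasing on $[1,T]$, since $g'(t)=-\alpha^2c^2t\,e^{-\alpha c t}\le 0$. On each interval $[t_{i-1},t_i]$ this gives $0\le g(t)-g(t_i)\le g(t_{i-1})-g(t_i)$. Moreover $\int_{t_{i-1}}^{t_i}t^{-2}\,dt\le t_i-t_{i-1}=1/m$, because $t\ge 1$ throughout. Hence the error is at most
\[
\frac1m\sum_{i=1}^J\bigl(g(t_{i-1})-g(t_i)\bigr)=\frac{g(1)-g(T)}{m}\le\frac{(1+\alpha c)e^{-\alpha c}}{m}.
\]
Combining the integral value, this error bound, and the $\eta$ term, and then multiplying by the factor $(1-1/\beta)$, yields the stated inequality.

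The only step needing care is the direction of the Riemann-sum comparison. Because $g$ is decreasing, right-endpoint evaluation underestimates the integral, which is the favorable direction for a lower bound. The loss is then exactly the telescoping monotone-variation bound above, so I expect no real obstacle beyond verifying the sign of $g'$ and the antiderivative identity.
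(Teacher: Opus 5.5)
Your proposal is correct and follows essentially the same route as the paper. Both substitute $R_{\mathrm{er}}^i=g(t_i)-3\eta$, telescope the $\eta$ term to $-3\eta(1-1/T)$, and compare the weighted sum with $\int_1^T g(t)t^{-2}\,dt=e^{-\alpha(1+\delta)}-e^{-\alpha(1+\delta)T}/T$ via a telescoping monotonicity argument that costs at most $(1+\alpha(1+\delta))e^{-\alpha(1+\delta)}/m$. The difference is only in bookkeeping: you write the weight exactly as $\int_{t_{i-1}}^{t_i}t^{-2}\,dt$ and telescope the variation of $g$, which needs only that $g$ is nonincreasing and $t\ge 1$, whereas the paper sandwiches both the summand and the integral between $h(t_i)/m$ and $h(t_{i-1})/m$ with $h=g/t^2$ and telescopes the variation of $h$.
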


\begin{proof}
By Definition~\ref{definition:random-linear-erasure-codes} and
Lemma~\ref{lemma:full rank random parity matrix}, we have
\[
R_{\mathrm{er}}^i
=
\bigl(1+\alpha(1+\delta)t_i\bigr)
e^{-\alpha(1+\delta)t_i}
-3\eta.
\]
Substituting this expression into
\eqref{eq:multilevel-code-rate}, we obtain
\[
\begin{aligned}
\sum_{i=1}^{J}
\left(
\frac{1}{t_{i-1}}-\frac{1}{t_i}
\right)R_{\mathrm{er}}^i
=
\sum_{i=1}^{J}
\left(
\frac{1}{t_{i-1}}-\frac{1}{t_i}
\right)
\bigl(1+\alpha(1+\delta)t_i\bigr)
e^{-\alpha(1+\delta)t_i}
-3\eta
\sum_{i=1}^{J}
\left(
\frac{1}{t_{i-1}}-\frac{1}{t_i}
\right).
\end{aligned}
\]
Since
\[
\sum_{i=1}^{J}
\left(
\frac{1}{t_{i-1}}-\frac{1}{t_i}
\right)
=
\frac{1}{t_0}-\frac{1}{t_J}
\overset{\text{Def.}\ref{definition:alignment-scales}}{=}
1-\frac{1}{T},
\]
it follows that
\begin{equation}\label{equation:expression of rc}
\sum_{i=1}^{J}
\left(
\frac{1}{t_{i-1}}-\frac{1}{t_i}
\right)R_{\mathrm{er}}^i
=
\sum_{i=1}^{J}
\left(
\frac{1}{t_{i-1}}-\frac{1}{t_i}
\right)
\bigl(1+\alpha(1+\delta)t_i\bigr)
e^{-\alpha(1+\delta)t_i}
-3\eta\left(1-\frac{1}{T}\right),
\end{equation}
and it remains to lower-bound
\[
\sum_{i=1}^{J}
\left(
\frac{1}{t_{i-1}}-\frac{1}{t_i}
\right)
\bigl(1+\alpha(1+\delta)t_i\bigr)
e^{-\alpha(1+\delta)t_i}.
\]
Define
$
g_\delta(t)
\triangleq
\bigl(1+\alpha(1+\delta)t\bigr)
e^{-\alpha(1+\delta)t}
$
and
$
h_\delta(t)
\triangleq
\frac{g_\delta(t)}{t^2}.
$
Notice that both functions are decreasing for \(t>0\) and that
$
-\frac{d}{dt}
\left(
\frac{e^{-\alpha(1+\delta)t}}{t}
\right)
=
h_\delta(t)
$.
Recall that \(t_i-t_{i-1}=1/m\) by Definition~\ref{definition:alignment-scales}, and hence
\[
\begin{aligned}
\left(
\frac{1}{t_{i-1}}-\frac{1}{t_i}
\right)g_\delta(t_i)
=
\frac{t_i-t_{i-1}}{t_{i-1}t_i}g_\delta(t_i)=
\frac{1}{m}\frac{g_\delta(t_i)}{t_{i-1}t_i}\overset{(\star)}{\ge}\frac{1}{m}\frac{g_\delta(t_i)}{t_i^2},
\end{aligned}
\]
where~$(\star)$ follow since \(t_{i-1}\leq t_i\).
Moreover, since \(g_\delta(t)\) is decreasing, it follows that
\[
\frac{g_\delta(t_i)}{t_{i-1}t_i}
\leq
\frac{g_\delta(t_{i-1})}{t_{i-1}^2},
\]
and hence
\[
\frac{1}{m}h_\delta(t_i)
\leq
\left(
\frac{1}{t_{i-1}}-\frac{1}{t_i}
\right)
g_\delta(t_i)
\leq
\frac{1}{m}h_\delta(t_{i-1}).
\]
Since \(h_\delta\) is decreasing it follows that $h_\delta(t_i)\leq h_\delta(t)\leq h_\delta(t_{i-1})$ for every \(t\in[t_{i-1},t_i]\), and by integrating over \([t_{i-1},t_i]\) and using \(t_i-t_{i-1}=1/m\), we obtain
\[
\frac{1}{m}h_\delta(t_i)
\leq
\int_{t_{i-1}}^{t_i}h_\delta(t)\,dt
\leq
\frac{1}{m}h_\delta(t_{i-1}).
\]
Therefore, the difference
between the weighted sum and the corresponding integral is at most
\[
\begin{aligned}
\left|
\sum_{i=1}^{J}
\left(
\frac{1}{t_{i-1}}-\frac{1}{t_i}
\right)
g_\delta(t_i)
-
\int_1^T h_\delta(t)\,dt
\right|\leq
\frac{1}{m}
\sum_{i=1}^{J}
\bigl(h_\delta(t_{i-1})-h_\delta(t_i)\bigr)=
\frac{1}{m}
\bigl(h_\delta(1)-h_\delta(T)\bigr)\leq
\frac{1}{m}
h_\delta(1).
\end{aligned}
\]
Since
$
h_\delta(1)
=
\bigl(1+\alpha(1+\delta)\bigr)e^{-\alpha(1+\delta)},
$
and since
\begin{align*}
\int_1^T h_\delta(t)\,dt
=
\left[
-\frac{e^{-\alpha(1+\delta)t}}{t}
\right]_{1}^{T}=
e^{-\alpha(1+\delta)}
-
\frac{e^{-\alpha(1+\delta)T}}{T}.
\end{align*}
it follows that
\begin{equation}\label{eq:integral}
    \sum_{i=1}^{J}
\left(
\frac{1}{t_{i-1}}-\frac{1}{t_i}
\right)
g_\delta(t_i)\geq
e^{-\alpha(1+\delta)}
-
\frac{e^{-\alpha(1+\delta)T}}{T}
-
\bigl(1+\alpha(1+\delta)\bigr)e^{-\alpha(1+\delta)}
\frac{1}{m}.
\end{equation}
Now recall from~\eqref{eq:multilevel-code-rate} and~\eqref{equation:expression of rc} that
\[
\begin{aligned}
R(\cC)
&=
\left(1-\frac{1}{\beta}\right)
\left[
\sum_{i=1}^{J}
\left(
\frac{1}{t_{i-1}}-\frac{1}{t_i}
\right)
g_\delta(t_i)-3\eta\left(1-\frac{1}{T}\right)
\right].
\end{aligned}
\]
Utilizing~\eqref{eq:integral}, we have
\[
\begin{aligned}
R(\cC)
\geq
\left(1-\frac{1}{\beta}\right)
\Bigg[
e^{-\alpha(1+\delta)}
-
\frac{e^{-\alpha(1+\delta)T}}{T}
-
\bigl(1+\alpha(1+\delta)\bigr)e^{-\alpha(1+\delta)}
\frac{1}{m}
-
3\eta\left(1-\frac{1}{T}\right)
\Bigg].
\end{aligned}
\]
Therefore, by first choosing \(\delta>0\) sufficiently small and \(T\)
sufficiently large, and then choosing \(m\) and \(\beta\) sufficiently
large and \(\eta>0\) sufficiently small, the right-hand side can be made
arbitrarily close to \(e^{-\alpha}\).
Hence, for every \(\varepsilon>0\), the parameters can be chosen so that $R(\cC)\geq e^{-\alpha}-\varepsilon$.
\end{proof}

The preceding theorem shows that the rate of our torn-paper code can be made arbitrarily close to the capacity. 
We now choose these parameters and combine the rate bound with the decoding-reliability results proved in the previous subsection.

\begin{theorem}\label{theorem:capacity-achieving-family}
For every \(\varepsilon>0\), the parameters of the construction can be
chosen such that there exists a sequence \(a_n\to0\)
satisfying
\[
\Pr_{\boldH,\boldz}\left(
P_e(\cC_n(\boldH,\boldz))\leq a_n
\ \text{and}\
R(\cC_n(\boldH,\boldz))\geq e^{-\alpha}-\varepsilon
\right)
\xrightarrow[n\to\infty]{}1.
\]
Consequently, there exists a sequence of binary codes
\(\{\cC_n\}_{n\geq1}\) such that
\[
P_e(\cC_n)\xrightarrow[n\to\infty]{}0
\qquad\text{and}\qquad
\liminf_{n\to\infty}R(\cC_n)
\geq e^{-\alpha}-\varepsilon.
\]
Therefore, every rate strictly below \(e^{-\alpha}\) is achievable.
\end{theorem}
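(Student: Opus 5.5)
The argument will combine three earlier results: Theorem~\ref{theorem:finite-parameter-rate} for the rate, Lemma~\ref{lemma:high_probability_existence} for reliability, and Lemma~\ref{lemma:full rank random parity matrix} for the event that the rate formula is actually attained. The parameters must be fixed \emph{before} letting $n\to\infty$, since all the probabilistic statements are for fixed $\delta,T,m,\eta$.

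\textbf{Choosing the parameters.} Given $\varepsilon>0$, I would pick $\delta>0$ small enough that $e^{-\alpha(1+\delta)}\ge e^{-\alpha}-\varepsilon/5$. Next I would choose $T$ large so that $e^{-\alpha(1+\delta)T}/T\le\varepsilon/5$, and $m$ large so that $(1+\alpha(1+\delta))e^{-\alpha(1+\delta)}/m\le\varepsilon/5$. Then I would take $\eta$ small so that $3\eta\le\varepsilon/5$. I also need the positivity condition of Lemma~\ref{lemma:full rank random parity matrix}, namely $(\alpha(1+\delta)t_i+1)e^{-\alpha(1+\delta)t_i}>3\eta$ for all $i\le J$. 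This holds once $\eta$ is small, because $J$ is finite and every $t_i\le T$ is fixed. The remaining factor is handled by letting $\beta=\beta(n)=\omega(1)$ with $\beta=o(\log n)$; then $(1-1/\beta)\to1$, so for large $n$ the bracket loses at most another $\varepsilon/5$. With these choices, Theorem~\ref{theorem:finite-parameter-rate} gives $R(\cC_n)\ge e^{-\alpha}-\varepsilon$ for all large $n$, on the event that every $H_i$ has full row rank.

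\textbf{Combining rate and reliability.} The rate bound holds only on the event $F_n=\{\rank(H_i)=r_i\ \forall i\}$. By Lemma~\ref{lemma:full rank random parity matrix} and a union bound over the $J$ levels (finitely many), $\Pr(F_n^c)\le\sum_i 2^{r_i-n_{\mathrm{er}}^i}\to0$. Lemma~\ref{lemma:high_probability_existence} gives $\Pr(P_e^{\mathrm{total}}\le a_n)\to1$ with $a_n\to0$. Intersecting the two events, with one more union bound, yields the first displayed claim.

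One point needs care: the reliability analysis (Corollary~\ref{corollary:total_decoding_error}) averages over random $H_i$, including rank-deficient ones. I also need $P_e$ in the statement, defined as an average over codewords of $\cC_n$, to coincide with $P_e^{\mathrm{total}}$, defined as an average over uniform messages. On $F_n$ the encoding map from messages to codewords is a bijection, so the two averages agree there.

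\textbf{Concluding existence.} Since the probability of the good event tends to one, it is positive for all large $n$, so for each such $n$ one can fix a realization $(\boldH,\boldz)$ in it and define $\cC_n$ accordingly. For small $n$ any code can be used. This gives $P_e(\cC_n)\le a_n\to0$ and $\liminf R(\cC_n)\ge e^{-\alpha}-\varepsilon$. Because $\varepsilon$ was arbitrary, every rate below $e^{-\alpha}$ is achievable.

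\textbf{Main obstacle.} The step needing the most care is the order of quantifiers: the constants hidden in $o(1)$ terms and the convergence rates in Lemma~\ref{lemma:high_probability_existence} depend on $(\delta,T,m,\eta)$. I would stress that these are fixed first and only $n$ (together with $\beta(n)$) then varies. No uniformity in the parameters is needed.
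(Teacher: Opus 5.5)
Your proposal is correct and follows essentially the same route as the paper. Both arguments fix $\delta,T,m,\eta$ first; the paper uses $\varepsilon/8$ budgets and checks $R_{\mathrm{er}}^i>0$ via monotonicity of $(1+\alpha(1+\delta)t)e^{-\alpha(1+\delta)t}$ at $t=T$. Both then get the rate from Theorem~\ref{theorem:finite-parameter-rate} on the full-rank event (via Lemma~\ref{lemma:full rank random parity matrix} and a union bound over the $J$ levels), absorb the factor $1-1/\beta\to1$, combine with Lemma~\ref{lemma:high_probability_existence} by a union bound, and fix a realization. Your explicit remark that $P_e$ and $P_e^{\mathrm{total}}$ coincide on the full-rank event, where messages map bijectively to codewords, is a small rigor point that the paper leaves implicit.
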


\begin{proof}
Fix \(\varepsilon>0\) and choose:
\begin{itemize}
    \item \(\delta>0\) sufficiently small so that
    \(e^{-\alpha(1+\delta)}\geq e^{-\alpha}-\frac{\varepsilon}{8}\);
    \item an integer \(T>1\) sufficiently large so that
    \(\frac{e^{-\alpha(1+\delta)T}}{T}<\frac{\varepsilon}{8}\);
    \item an integer \(m\) sufficiently large so that
    \(\frac{(1+\alpha(1+\delta))e^{-\alpha(1+\delta)}}{m}
    <\frac{\varepsilon}{8}\); and
    \item \(\eta>0\) sufficiently small so that
    \(3\eta\left(1-\frac{1}{T}\right)<\frac{\varepsilon}{8}\)
    and
    $
    3\eta<
    \bigl(1+\alpha(1+\delta)T\bigr)
    e^{-\alpha(1+\delta)T}$; 
    since
    \(R_{\mathrm{er}}^i=
    \bigl(1+\alpha(1+\delta)t_i\bigr)
    e^{-\alpha(1+\delta)t_i}-3\eta\)
    and \(t_i\leq T\), the function
    \(\bigl(1+\alpha(1+\delta)t\bigr)e^{-\alpha(1+\delta)t}\)
    is nonincreasing in \(t\). Therefore,
    \[
    \bigl(1+\alpha(1+\delta)t_i\bigr)
    e^{-\alpha(1+\delta)t_i}
    \geq
    \bigl(1+\alpha(1+\delta)T\bigr)
    e^{-\alpha(1+\delta)T}
    >
    3\eta,
    \]
    which guarantees \(R_{\mathrm{er}}^i>0\) for every \(i\in[J]\).
\end{itemize}

By Lemma~\ref{lemma:full rank random parity matrix}, for every
\(i\in[J]\), the probability that \(H_i\) does not have full row rank
tends to zero as \(n\to\infty\).
Since \(J=m(T-1)\) is fixed after the
above parameters are chosen, a union bound gives
\[
\Pr_{\boldH}\left(
\operatorname{rank}(H_i)=r_i
\text{ for every }i\in[J]
\right)
\to1.
\]

Whenever all \(H_i\)'s have full row rank,
Theorem~\ref{theorem:finite-parameter-rate}, together with the above
choices of \(\delta,T,m,\eta\), gives
\[
R(\cC_n)
\geq
\left(1-\frac{1}{\beta}\right)
\left(e^{-\alpha}-\frac{\varepsilon}{2}\right).
\]
Since \(\beta=\omega(1)\), we have \(1-1/\beta\to1\). Therefore, for
all sufficiently large \(n\),
\[
R(\cC_n)\geq e^{-\alpha}-\varepsilon.
\]
The random shifts do not affect the rate of the code. Hence,
\[
\Pr_{\boldH,\boldz}\left(
R(\cC_n(\boldH,\boldz))
\geq e^{-\alpha}-\varepsilon
\right)
\to1.
\]

Moreover, by Lemma~\ref{lemma:high_probability_existence}, there exists
a deterministic sequence \(a_n\to0\) such that
\[
\Pr_{\boldH,\boldz}\left(
P_e(\cC_n(\boldH,\boldz))\leq a_n
\right)
\to1.
\]
Since both events have probability tending to one, a union bound gives
\begin{equation}\label{eq:event}
    \Pr_{\boldH,\boldz}\left(
P_e(\cC_n(\boldH,\boldz))\leq a_n
\ \text{and}\
R(\cC_n(\boldH,\boldz))
\geq e^{-\alpha}-\varepsilon
\right)
\to1.
\end{equation}

Consequently, for all sufficiently large \(n\), this event~\eqref{eq:event} is nonempty.
For each such \(n\), fix one choice of \((\boldH,\boldz)\) in the event,
and let \(\cC_n\) be the resulting deterministic code. Then
\[
P_e(\cC_n)\leq a_n\to0,
\]
and
\[
\liminf_{n\to\infty}R(\cC_n)
\geq
e^{-\alpha}-\varepsilon.
\]
Hence, for every \(\varepsilon>0\), there exists a deterministic sequence
of codes with vanishing average decoding error probability and asymptotic
rate at least \(e^{-\alpha}-\varepsilon\). Therefore, every rate strictly
below \(e^{-\alpha}\) is achievable.
\end{proof}

\begin{proposition}\label{proposition:polynomial-time-complexity}
Fix \(\varepsilon>0\) and choose the construction parameters as in
Theorem~\ref{theorem:capacity-achieving-family}. For every such code
attaining rate \(e^{-\alpha}-\varepsilon\), the encoder and decoder can
be implemented in \(O(n^3)\) time.
\end{proposition}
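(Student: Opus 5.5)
The plan is to account for the cost of every stage of the encoder and decoder separately. The parameters $\delta,T,m,\eta$ are constants once $\varepsilon$ is fixed, so $J=m(T-1)$ and $Q$ are constants. The total cost is therefore at most $J$ times the worst per-level cost plus the initial alignment, and it suffices to bound each stage by $O(n^3)$.

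\textbf{Encoder.} For each level $i$, I first compute a generator matrix of $\ker(H_i)$ by Gaussian elimination on the $r_i\times n_{\mathrm{er}}^i$ matrix, which costs $O(n^3)$. This is done once per code and can even be counted as preprocessing. Encoding $\boldm^{\,i}$ is then a matrix--vector product, costing $O(n^2)$. The shift by $\boldz^{\,i}$, the $\mathrm{RLL}(0,\beta-1)$ encoder of Lemma~\ref{lemma:rll-beta-minus-one}, the self-interleaving placement on $\cA_i$, and the pilot placement are all linear-time maps. The pilot itself requires a De Bruijn sequence of length $N_{q,n}$, which can be generated in $O(n)$ time by a standard construction (e.g., via an LFSR/necklace algorithm), followed by the $\widetilde{\mathrm{RLL}}(0,\beta-2)$ encoder of Lemma~\ref{lemma:rll-beta-minus-two} and the marker insertion of Definition~\ref{definition:modified-pilot}, both linear time.

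\textbf{Decoder.} I bound the stages in the order the decoder runs them.
\begin{itemize}
\item Level $J$: for each of the at most $n$ fragments, Algorithms~\ref{alg:initial-local-alignment} and~\ref{alg:extract-initial-debruijn} run in $O(\log n)$ time. Locating $\boldq_{\boldf}$ in $\boldq$ can be done with a precomputed table indexed by length-$\log N_{q,n}$ windows, of size $O(n)$. Even a naive scan costs $O(n\log n)$ per fragment, so this stage costs $O(n^2\log n)$ in the worst case.
\item Levels $i<J$: Algorithm~\ref{alg:recycled-pilot-alignment} tests $O(n)$ candidate positions, each in $O((1+\delta)t_i\log n)$ time, so a single fragment costs $O(n\log n)$. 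Over at most $K\le n$ fragments this gives $O(n^2\log n)$ per level.
\item Extraction and erasure decoding: Algorithm~\ref{alg:level-rll} runs in $O(n)$. Solving for the erased coordinates means solving the linear system $(H_i)_{\cE_i}\boldx = H_i$ applied to the known part, by Gaussian elimination on an $r_i\times|\cE_i|$ system, which costs $O(n^3)$.
\end{itemize}
Summing over the constant number $J$ of levels gives $O(n^3)$ overall.

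The only step that needs care is the alignment cost. The naive bound of $n$ fragments times $n$ positions times $\log n$ comparisons is $O(n^2\log n)$, which is still within $O(n^3)$. I would state this bound directly rather than rely on $K=O(n/\log n)$ from Lemma~\ref{lemma:number-of-fragments}, since that holds only with high probability and the claim is a worst-case time bound. The dominant term is the Gaussian elimination, so the $O(n^3)$ bound holds deterministically for every code in the family.
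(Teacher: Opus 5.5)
Your proposal is correct and follows the same stage-by-stage accounting as the paper: constant $J$ and $Q$, $O(n^3)$ Gaussian elimination both for the generator matrices and for each level's erasure decoding, $O(n^2)$ encoding, and brute-force alignment over at most $n$ fragments. The only difference is that you bound alignment more tightly, by $O(n^2\log n)$, because only a length-$\Theta(\log n)$ prefix is compared at each candidate position, whereas the paper uses the cruder $n\cdot n\cdot n=O(n^3)$ count; in both versions Gaussian elimination remains the dominant $O(n^3)$ term.
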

\begin{proof}
Once \(\varepsilon\) is fixed, the parameters \(T,m,J\), and \(Q\) are
constants independent of \(n\). Each parity-check matrix \(H_i\) has at
most \(n\) rows and columns. Hence, the parity-check matrices and shift
vectors can be generated in \(O(n^2)\) time. For each full-row-rank
\(H_i\), a generator matrix for \(\ker(H_i)\) can be computed by
Gaussian elimination in \(O(n^3)\) time. Since \(J\) is constant, all
generator matrices can be computed in \(O(n^3)\) time. The De Bruijn
sequence, the pilot sequence, and the sets used by the construction can
all be generated in at most \(O(n^2)\) time. Thus, the total
preprocessing complexity is \(O(n^3)\).

After preprocessing, encoding each level requires multiplication by a
precomputed generator matrix, which takes at most \(O(n^2)\) time.
Since \(J\) is constant, encoding all levels requires \(O(n^2)\) time.
The shifts, RLL encoding, self-interleaving, and placement of the
resulting bits require only \(O(n)\) additional time. Hence, the total
encoding complexity is \(O(n^2)\).

For decoding, the received fragments have total length \(n\), so there
are at most \(n\) fragments. In the initial alignment, a brute-force
implementation tests at most \(n\) candidate positions for each fragment
and compares at most \(n\) bits for each candidate. Therefore, the
initial alignment requires at most \(O(n^3)\) time. At each subsequent
level, Algorithm~\ref{alg:recycled-pilot-alignment} has the same
worst-case complexity \(O(n^3)\). Since \(J\) is constant, all alignment
levels together require \(O(n^3)\) time.

Finally, each level-wise erasure decoder solves a binary linear system of
size at most \(n\times n\), which takes \(O(n^3)\) time by Gaussian
elimination. Since there are only \(J\) levels, all erasure-decoding
steps together require \(O(n^3)\) time. Therefore, the complete encoder
and decoder can be implemented in \(O(n^3)\) time.
\end{proof}
\printbibliography
\appendices
\section{Omitted proofs}\label{appendix:rll-encoders}

\begin{lemma}
There exists an $\mathrm{RLL}(0,k-1)$ fixed-length block code with rate $(k-1)/k$.
\end{lemma}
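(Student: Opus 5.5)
The construction is the one already described in the text after Lemma~\ref{lemma:rll-beta-minus-two}: the encoder appends a single $1$ after every $k-1$ input bits. Concretely, take an input word $\boldx\in\{0,1\}^{N(k-1)}$ for some integer $N$, split it into $N$ consecutive blocks $\boldx^{(1)},\ldots,\boldx^{(N)}$ of length $k-1$, and output
\[
\operatorname{Enc}_{\mathrm{RLL}}(\boldx)=\boldx^{(1)}\,1\,\boldx^{(2)}\,1\cdots\boldx^{(N)}\,1\in\{0,1\}^{Nk}.
\]
The rate is $N(k-1)/(Nk)=(k-1)/k$ by counting. The map is injective, and in fact has an explicit inverse: delete the symbols at indices $k-1,2k-1,\ldots$. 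This is exactly the inverse used in Line~\ref{line:remove added 1's} of Algorithm~\ref{alg:level-rll}.

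The only substantive step is verifying the run-length constraint, namely that no output contains $k$ consecutive zeros. In the output, the positions of the inserted $1$'s are $k-1,2k-1,\ldots,Nk-1$ (0-indexed), so consecutive $1$'s are exactly $k$ apart. Any window of $k$ consecutive positions $\{j,\ldots,j+k-1\}$ with $j+k-1\le Nk-1$ contains $k$ consecutive integers, so it contains exactly one integer congruent to $k-1$ modulo $k$. That integer is the position of an inserted $1$, and hence the window is not all-zero. Therefore every zero run has length at most $k-1$, which is the $\mathrm{RLL}(0,k-1)$ property.

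I expect no real obstacle. The one point worth stating carefully is the boundary behaviour. Because the encoder ends with a $1$, concatenating the output with arbitrary following material (for instance, the next block in the self-interleaved placement) cannot create a zero run spanning the boundary longer than $k-1$ from the encoded side. If the input length is not a multiple of $k-1$, I would pad with $o(\cdot)$ bits, consistent with the paper's integrality assumptions; this does not affect the asymptotic rate.
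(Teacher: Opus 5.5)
Your proposal is correct and follows essentially the same construction and argument as the paper. You split the input into blocks of length $k-1$, append a $1$ to each, and get the rate $(k-1)/k$ by counting. You note that every window of $k$ consecutive output positions contains an inserted $1$, which the paper phrases as every length-$k$ encoded block ending in $1$. You decode by deleting every $k$-th symbol, as the paper does.
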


\begin{proof}
This is the construction of~\cite[Lemma~1]{liu2026improved}. Let $\boldv\in\{0,1\}^{\ell}$, and assume that $k-1$ divides $\ell$; otherwise, pad the final block with at most $k-2$ dummy bits. Partition $\boldv$ into $\ell/(k-1)$ blocks $\boldv^{(1)},\ldots,\boldv^{(\ell/(k-1))}$ of length $k-1$ and append one $1$ to each block: $\boldx^{(j)}\triangleq\boldv^{(j)}1$. The concatenation has length $\ell k/(k-1)$ and rate $(k-1)/k$. Every length-$k$ encoded block ends in $1$, so no zero run has length $k$ or more. Decoding splits the codeword into length-$k$ blocks and removes the last bit of each block.
\end{proof}

\begin{lemma}
There exists a $\widetilde{\mathrm{RLL}}(0,k-2)$ fixed-length block code with rate $(k-2)/k$.
\end{lemma}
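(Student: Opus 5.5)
The plan is to mirror the proof of the preceding $\mathrm{RLL}(0,k-1)$ lemma, which follows the construction of~\cite[Lemma~2]{liu2026improved}, and to verify the three required properties in turn: rate, run-length constraint, and decodability. Let $\boldv\in\{0,1\}^{\ell}$, and assume $k-2$ divides $\ell$. Otherwise, we pad the final block with at most $k-3$ dummy bits, which changes the length by a constant and so does not affect the asymptotic rate. We partition $\boldv$ into $\ell/(k-2)$ consecutive blocks $\boldv^{(1)},\ldots,\boldv^{(\ell/(k-2))}$ of length $k-2$, and encode each block by prepending and appending a $1$, namely $\boldx^{(j)}\triangleq 1\,\boldv^{(j)}\,1$. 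The codeword is the concatenation $\boldx^{(1)}\|\cdots\|\boldx^{(\ell/(k-2))}$.

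First, the rate. Each encoded block has length $k$, so the output has length $\ell k/(k-2)$ and the rate is $(k-2)/k$. The encoder is clearly injective.

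Second, the run-length property. Every zero in the codeword lies strictly inside some block $\boldx^{(j)}$, between its leading and trailing $1$. A zero run therefore cannot cross a block boundary, and its length is at most $k-2$, which is exactly the $\widetilde{\mathrm{RLL}}(0,k-2)$ constraint. I would also record the stronger structural fact that every length-$k$ block begins and ends with $1$. This fact is what the proof of Lemma~\ref{lemma:extract-initial-debruijn-correct} actually uses: a marker $0^{q_J\beta}$ inserted between complete blocks is flanked by $1$'s on both sides, so no spurious zero run can overlap it.

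Third, decoding. The decoder splits the codeword into consecutive length-$k$ blocks, deletes the first and last bit of each, and concatenates the remainders. This recovers $\boldv$ after discarding any padding.

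There is no real obstacle here. The only care needed is to state the block-boundary property explicitly, since the later marker-uniqueness argument relies on it and not merely on the run-length bound.
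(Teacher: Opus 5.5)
Your proposal is correct and takes the same approach as the paper's proof: you split the input into blocks of length $k-2$, map each to $1\boldv^{(j)}1$ for rate $(k-2)/k$, note that zero runs are confined to block interiors, and decode by stripping the first and last bit of each length-$k$ block. Your extra remark that every encoded block begins and ends with $1$ is exactly the property the paper relies on later, in the proof of Lemma~\ref{lemma:extract-initial-debruijn-correct}.
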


\begin{proof}
This is the construction of~\cite[Lemma~2]{liu2026improved}. Let $\boldv\in\{0,1\}^{\ell}$, with $k-2$ dividing $\ell$ after at most $k-3$ padding bits. Partition $\boldv$ into blocks of length $k-2$ and map each block $\boldv^{(j)}$ to $\boldx^{(j)}\triangleq1\boldv^{(j)}1$. The encoded length is $\ell k/(k-2)$ and the rate is $(k-2)/k$. Every block begins and ends in $1$, so every zero run has length at most $k-2$. Decoding splits the codeword into length-$k$ blocks and removes the first and last bit of each block.
\end{proof}
\begin{lemma}\label{lemma:initial-local-alignment-correct-appendix}
For every \(\boldf\in\cF_J\),
Algorithm~\ref{alg:initial-local-alignment} identifies the unique
de-interleaved subsequence that contains bits of the pilot sequence.
Consequently, \(\boldp_{\boldf}\) is exactly the pilot subsequence contained in \(\boldf\), and every \(\boldy_{\boldf,r}\) is the corresponding modulo-\(T\) de-interleaved subsequence of \(\boldf\).
\end{lemma}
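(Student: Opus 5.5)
The plan is to prove two claims. First, the true pilot subsequence always contains a full run $0^{q_J\beta}$. Second, no other de-interleaved subsequence $\boldw_{\boldf,r}$ can contain such a run. Together these make $\cP_{\boldf}$ a singleton equal to the index of the pilot residue. The remaining conclusions then follow by relabeling: $\boldp_{\boldf}=\boldw_{\boldf,j_{\boldf}}$ and $\boldy_{\boldf,r}=\boldw_{\boldf,(j_{\boldf}+r)\bmod T}$.

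For the first claim, we use that pilot bits sit at positions $\{0,T,2T,\dots\}$. Since $T$ divides every fragment offset difference, the fragment bits with index congruent to $(-s)\bmod T$ all come from the pilot, and they form a contiguous window of $\boldp$. Here $s$ is the fragment's true start. This window has length at least $\lfloor(1+\delta)T\log n/T\rfloor=(1+\delta)\log n$. By Definition~\ref{definition:modified-pilot}, a marker of length $q_J\beta$ starts every $b_n=(1+\delta/2)\log n$ pilot bits. Any contiguous window of length at least $b_n+q_J\beta$ therefore contains a complete marker. This holds for large $n$ because $q_J\beta=o(\log n)$, so $(1+\delta)\log n\ge (1+\delta/2)\log n+q_J\beta$ eventually.

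For the second claim, take $r\neq$ the pilot index. The subsequence $\boldw_{\boldf,r}$ consists of codeword bits at positions $s+r',s+r'+T,\dots$ that avoid the residues in $\cR_J$ modulo $Q$. Such a position may belong to different $\cA_h$ as we move along, since $T$-steps cycle through residues modulo $Q$. Suppose $\boldw_{\boldf,r}$ contained $q_J\beta$ consecutive zeros, i.e., positions $x,x+T,\dots,x+(q_J\beta-1)T$. These cover $q_J\beta$ consecutive multiples of $T$ in $x+T\bZ$, which spans $\beta$ full periods of length $Q=q_JT$. The residues $x+jT \bmod Q$ are therefore periodic with period $q_J$ in $j$. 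Some fixed residue $b\in\cB_h$ is then hit at $\beta$ positions $x+j_0T+uQ$ for $u=0,\dots,\beta-1$, all equal to zero. Under the self-interleaved placement, these are $\beta$ consecutive bits $\bar\bolds^{\,h,\ell}[u_0],\dots,\bar\bolds^{\,h,\ell}[u_0+\beta-1]$ of a single block of $\bar\bolds^{\,h}$. Such a block is a contiguous run of the RLL-encoded word, so we obtain $\beta$ consecutive zeros in $\bar\bolds^{\,h}$. This contradicts the $\mathrm{RLL}(0,\beta-1)$ property of Lemma~\ref{lemma:rll-beta-minus-one}.

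The main obstacle is this last step. One must check that the $\beta$ consecutive indices do not straddle two different blocks $\bar\bolds^{\,h,\ell}$ and $\bar\bolds^{\,h,\ell+1}$. The same residue and consecutive $u$ guarantee they stay within one block, since each block occupies exactly $u\in\{0,\dots,n/Q-1\}$ for fixed $b_{h,\ell}$, and the fragment lies inside the codeword. Further, the RLL appended $1$'s occur every $\beta$ bits along that block. To make this work, I would require $\beta\mid n/Q$, which is assumed via $\beta Q\mid n$, so that the RLL blocks do not cross block boundaries. Even without this, the RLL constraint holds on any contiguous window of $\bar\bolds^{\,h}$, so within-block contiguity suffices.
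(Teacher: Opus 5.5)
Your proof is correct and follows the paper's two-step argument. The pilot residue class contains a complete marker because a pilot window of length about $b_n+q_J\beta$ must contain one. In any other class, a run $0^{q_J\beta}$ would force $\beta$ zeros in a single residue class, hence $\beta$ consecutive zeros in a contiguous stretch of some $\bar\bolds^{\,h}$, which contradicts the $\mathrm{RLL}(0,\beta-1)$ constraint. Your explicit check that these $\beta$ bits lie in one self-interleaved block $\bar\bolds^{\,h,\ell}$ is a welcome precision; the paper only calls them ``a subsequence of $\bar\bolds^{\,i}$'' without stating that they are contiguous in it.
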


\begin{proof}
Let \(s\) be the true starting position of \(\boldf\), and let
\(j^\star\in\bZ_T\) be the unique index satisfying
\(s+j^\star\equiv0\pmod T\). We prove the claim in two steps.

\emph{Step 1: the pilot-containing de-interleaved subsequence contains a
marker.}

By the definition of \(j^\star\), the subsequence \(\boldw_{\boldf,j^\star}\) consists exactly of the fragment bits whose global positions are multiples of \(T\). 
Since \(c[vT]=p[v]\) for every \(v\in\{0,1,\ldots,n/T-1\}\), these bits form a subsequence of the pilot sequence \(\boldp\).
There is exactly one pilot bit in every \(T\) consecutive codeword
bits. 
Therefore, for every \(\boldf\) which has length
\((1+\delta)T\log n\), it contains
\(|\boldw_{\boldf,j^\star}|\geq(1+\delta)T\log n/T
=(1+\delta)\log n\) pilot bits.
Recall that in Definition~\ref{definition:modified-pilot}, the markers in \(\boldp\) exist in every \(b_n=(1+\delta/2)\log n\) pilot bits, and each marker has length \(q_J\beta\). 
Hence every pilot subsequence of length at least \(b_n+q_J\beta-1\) contains a complete marker. 
Since \(q_J\beta=o(\log n)\) as $q_J$ is a constant and $\beta=o(\log n)$, we have \((1+\delta)\log n-1>b_n+q_J\beta-1\) for all sufficiently large \(n\). 
Thus \(\boldw_{\boldf,j^\star}\) contains a complete marker \(0^{q_J\beta}\).
Therefore, at least one de-interleaved subsequence is selected by the marker
test (Line~\ref{line:marker test}) in Algorithm~\ref{alg:initial-local-alignment}.

\emph{Step 2: no information-only de-interleaved subsequence contains a
marker.}

We first show that every \(\boldw_{\boldf,j}\) with \(j\neq j^\star\)
contains only bits from the level codewords.  
Let \(j\neq j^\star\) and set \(d\triangleq(s+j)\bmod T\).  
Then \(d\in\{1,\ldots,T-1\}\), and every bit in \(\boldw_{\boldf,j}\) is located at a global position congruent to \(d\) modulo \(T\).  
Since \(Q=q_JT\), within each period of length \(Q\) these positions have residues $d,\ d+T,\ d+2T,\ldots,d+(q_J-1)T \pmod Q.$
By Definition~\ref{definition:nested-residue-sets}, the pilot residues are
$\cR_J=\{0,T,2T,\ldots,(q_J-1)T\}.$
Since \(d\in\{1,\ldots,T-1\}\), none of the above \(q_J\) residues belongs
to \(\cR_J\).  Hence every bit in \(\boldw_{\boldf,j}\) comes from one of
the level codewords, and \(\boldw_{\boldf,j}\) contains no bits from the
fixed pilot sequence.

We next show that \(\boldw_{\boldf,j}\) cannot contain \(q_J\beta\)
consecutive zeros. 
As the bits of \(\boldw_{\boldf,j}\) are taken every
\(T\) codeword bits, their residues modulo \(Q\) cycle through
$d,\ d+T,\ldots,d+(q_J-1)T.$
Therefore, every \(q_J\beta\) consecutive bits of
\(\boldw_{\boldf,j}\) contain exactly \(\beta\) bits from each of these
\(q_J\) residue classes. 
Fix one such residue \(r\) and since \(r\notin\cR_J\), there is a unique \(i\in[J]\) such that \(r\in\cB_i\).  
The \(\beta\) bits corresponding to \(r\) occur consecutively in the subsequence $(c[r],c[Q+r],c[2Q+r],\ldots).$
By the self-interleaving construction of the level-\(i\) codeword, this
 subsequence is a subsequence of \(\bar{\bolds}^{\,i}\).  
 Since \(\bar{\bolds}^{\,i}\) satisfies the \(\mathrm{RLL}(0,\beta-1)\) constraint, no \(\beta\) consecutive bits in this subsequence can all be zero.
 Hence no \(\boldw_{\boldf,j}\) with \(j\neq j^\star\) can contain
\(0^{q_J\beta}\).

Combining the two steps, \(\boldw_{\boldf,j^\star}\) contains
\(0^{q_J\beta}\), whereas every other de-interleaved subsequence does not.
Algorithm~\ref{alg:initial-local-alignment} therefore identifies
\(j_{\boldf}=j^\star\) uniquely and returns
\(\boldp_{\boldf}=\boldw_{\boldf,j^\star}\), which is exactly the pilot
subsequence contained in \(\boldf\).
Finally, for every \(i\in\bZ_T\), the sequence
\(\boldy_{\boldf,i}
=\boldw_{\boldf,(j^\star+i)\bmod T}\) contains exactly the fragment bits
whose global positions are congruent to \(i\) modulo \(T\). 
Thus the returned pilot subsequence and all returned modulo-\(T\) de-interleaved subsequences are correct.
\end{proof}
\section{Supporting estimates for rate and reliability}\label{appendix:rate-reliability-estimates}

This section contains lemmas necessary for Section~\ref{Section:rate-and-reliability}.
\begin{lemma}\label{lemma:number-of-fragments}
 Let $K$ be the smallest index such that
\[
    \sum_{i=1}^{K} N_i \ge n,
\]
where $N_1,N_2,\ldots$ are i.i.d. $\operatorname{Geometric}(p_n)$ random variables representing the lengths of broken fragments. Then
\[
    K = O\!\left(\frac{n}{\log n}\right)
\]
with high probability.
\end{lemma}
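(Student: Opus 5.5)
The plan is to convert the event $\{K>cn/\log n\}$ into a statement about the number of breaks. Up to boundary effects at the last fragment, $K-1$ equals the number of cut locations among the $n-1$ gaps. Each gap is cut independently with probability $p_n$. Hence $K-1\sim\operatorname{Binomial}(n-1,p_n)$ exactly: the event $\sum_{i=1}^{K}N_i\ge n$ with the truncated last fragment matches the Bernoulli description of Section~\ref{section:torn-paper-channel}. I will state this identification first, arguing that the geometric description and the per-gap Bernoulli description induce the same law on the set of cut positions in $\{1,\dots,n-1\}$.

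Next, I will use $\alpha=\lim_{n\to\infty}p_n\log n$, so $p_n\le 2\alpha/\log n$ for all large $n$ (assuming $\alpha>0$; if $\alpha=0$, replace $2\alpha$ by $1$). This gives $\mathbb{E}[K-1]\le 2\alpha n/\log n$. Applying the multiplicative Chernoff bound $\Pr(X\ge 2\mu')\le e^{-\mu'/3}$ with $\mu'\triangleq 2\alpha n/\log n\ge\mathbb{E}[X]$ (monotonicity in $p$ allows using an upper bound on the mean) yields
\[
\Pr\!\left(K-1\ge \frac{4\alpha n}{\log n}\right)\le \exp\!\left(-\frac{2\alpha n}{3\log n}\right)\longrightarrow 0.
\]
Thus $c=4\alpha+1$ works, and $K=O(n/\log n)$ with high probability. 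If one prefers to avoid Chernoff, Chebyshev suffices, since $\operatorname{Var}(K-1)\le\mathbb{E}[K-1]=O(n/\log n)$, which is $o\bigl((n/\log n)^2\bigr)$.

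The main subtlety is the first step: justifying that the stopping-index definition of $K$ via i.i.d.\ geometric lengths gives exactly a binomial count of cuts. This holds because geometric inter-arrival times are the renewal description of i.i.d.\ Bernoulli$(p_n)$ cut indicators on successive gaps, with the process simply stopped at position $n$. Under the paper's own description ("a break occurs independently between consecutive codeword symbols with probability $p_n$") this is immediate, so I expect only a short remark is needed.
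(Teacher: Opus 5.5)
Your proposal is correct and matches the paper's proof. Both identify $K-1$ with a $\operatorname{Binomial}(n-1,p_n)$ count of breaks and apply a multiplicative Chernoff bound using $p_n\le c/\log n$; the paper writes the threshold as $(1+\gamma_n)\mathbb{E}[Y_n]$ with $\gamma_n\ge 1$, whereas you plug an upper bound $\mu'$ on the mean directly into Chernoff. One small slip: in the $\alpha=0$ case your threshold becomes $2n/\log n$, so the constant should be $3$ rather than $4\alpha+1$. Taking a single $c$ with $p_n\le c/\log n$, as the paper does, avoids this case split.
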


\begin{proof}
For each \(t\in\{1,\dots,n-1\}\), let \(B_t\) be the indicator of the
event that a break occurs between \(x_t\) and \(x_{t+1}\). Then
\(B_1,\dots,B_{n-1}\) are independent
\(\operatorname{Bernoulli}(p_n)\) random variables. Since every break
increases the number of fragments by one,
\[
K=1+\sum_{t=1}^{n-1}B_t.
\]

Let
\[
Y_n\triangleq\sum_{t=1}^{n-1}B_t.
\]
Then
\[
Y_n\sim\operatorname{Binomial}(n-1,p_n)
\qquad\text{and}\qquad
\mathbb E[Y_n]=(n-1)p_n.
\]
Since \(\lim_{n\to\infty}p_n\log n=\alpha\), there exists a constant
\(c>0\) such that
\[
p_n\leq\frac{c}{\log n}
\]
for all sufficiently large \(n\). Hence,
\[
\mathbb E[Y_n]\leq\frac{c(n-1)}{\log n}.
\]

Therefore,
\[
\frac{2c(n-1)}{\log n}\geq2\mathbb E[Y_n].
\]
If \(\mathbb E[Y_n]>0\), write
\[
\frac{2c(n-1)}{\log n}
=
(1+\gamma_n)\mathbb E[Y_n],
\]
where \(\gamma_n\geq1\). By the Chernoff bound,
\[
\Pr\left(
Y_n>\frac{2c(n-1)}{\log n}
\right)
\leq
\exp\left(
-\frac{\gamma_n\mathbb E[Y_n]}{3}
\right).
\]
Moreover,
\[
\gamma_n\mathbb E[Y_n]
=
\frac{2c(n-1)}{\log n}-\mathbb E[Y_n]
\geq
\frac{c(n-1)}{\log n},
\]
and therefore
\[
\Pr\left(
Y_n>\frac{2c(n-1)}{\log n}
\right)
\leq
\exp\left(
-\frac{c(n-1)}{3\log n}
\right)
\xrightarrow[n\to\infty]{}0.
\]
If \(\mathbb E[Y_n]=0\), then \(Y_n=0\) almost surely, so the conclusion
is immediate.

Thus, with high probability,
\[
K
=
1+Y_n
\leq
1+\frac{2c(n-1)}{\log n}
=
O\!\left(\frac{n}{\log n}\right).
\qedhere
\]
\end{proof}

\begin{lemma}\label{lemma:long-fragment-coverage}
For a constant~$\gamma$, the coverage~$V_\gamma$ satisfies the following two properties
    \begin{align}\label{Equation: limit of coverage}
        \Pr\!\left( \left| V_{\gamma} - (\alpha\gamma + 1)e^{-\alpha\gamma} \right| > \epsilon \right)
\;\xrightarrow[n\to\infty]{} 0,\qquad \text{and} \qquad \lim_{n \to \infty} \mathbb{E}[V_\gamma]
    = (\alpha \gamma + 1)e^{-\alpha \gamma}.
\end{align}
\end{lemma}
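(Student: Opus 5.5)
The plan is to compute the mean of $V_\gamma$ directly from the break indicators, and then get convergence in probability from a second-moment (Chebyshev) bound. I would work with the break indicators $B_1,\ldots,B_{n-1}$, which are i.i.d.\ $\operatorname{Bernoulli}(p_n)$ as in the proof of Lemma~\ref{lemma:number-of-fragments}. The key rewrite is
\[
nV_\gamma=\sum_{x=0}^{n-1}\mathbf 1\{x \text{ lies in a fragment of length}\ge L\},\qquad L\triangleq\gamma\log n .
\]
This turns the coverage into a sum of per-position indicators.

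\textbf{Step 1 (the mean).} For a position $x$ at distance at least $L$ from both ends of the codeword, its fragment is determined by two run lengths. Let $G_-$ be the number of positions to the left of $x$ before the first break, and $G_+$ the number to the right. Away from the boundary these are independent, each approximately $\operatorname{Geometric}(p_n)$ started at $0$. The fragment containing $x$ has length $1+G_-+G_+$, which is a negative-binomial sum. Hence
\[
\Pr(1+G_-+G_+\ge L)=\sum_{k=0}^{1}\binom{L-1}{k}p_n^k(1-p_n)^{L-1-k}\cdot(1+o(1)),
\]
that is, the event that fewer than two breaks fall in a window of $L-1$ slots around $x$, summed over the relative placement. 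More directly,
\[
\Pr(\text{at most one break in the }L-1\text{ gaps placed so as to split }x\text{'s window}) = (1-p_n)^{L-1}+(L-1)p_n(1-p_n)^{L-2}.
\]
Using $p_n\log n\to\alpha$, we get $(1-p_n)^{L}\to e^{-\alpha\gamma}$ and $Lp_n\to\alpha\gamma$. So each such indicator has mean tending to $(1+\alpha\gamma)e^{-\alpha\gamma}$. The $O(\log n)$ positions near the boundary contribute $O(\log n/n)\to0$ to $V_\gamma$. Averaging over $x$ then gives $\mathbb E[V_\gamma]\to(\alpha\gamma+1)e^{-\alpha\gamma}$.

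I expect the main obstacle to be writing the per-position probability precisely. The exact statement is $\Pr(G_-+G_+\ge L-1)$, where $G_-+G_+$ counts failures before the second success in a Bernoulli$(p_n)$ sequence. This equals $\Pr(\operatorname{Bin}(L-1,p_n)\le1)$, and I would verify this identity carefully, including the truncation at the codeword ends.

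\textbf{Step 2 (concentration).} The indicator for position $x$ depends only on $B_t$ for $t$ within distance $L$ of $x$ once we truncate. Concretely, the event "fragment of $x$ has length $\ge L$" is determined by the breaks in $[x-L,x+L]$: the fragment has length $\ge L$ if and only if some length-$L$ window containing $x$ is break-free, and all such windows lie in that range. Therefore indicators for positions more than $2L$ apart are independent. This gives
\[
\operatorname{Var}(nV_\gamma)\le n\cdot(4L+1)=O(n\log n),
\]
so $\operatorname{Var}(V_\gamma)=O(\log n/n)\to0$. Chebyshev's inequality, combined with Step 1, then yields $\Pr(|V_\gamma-(\alpha\gamma+1)e^{-\alpha\gamma}|>\epsilon)\to0$. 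This result also appears as \cite[Lemma 2]{shomorony2021torn}, which could alternatively be cited.
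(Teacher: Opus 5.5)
Your proof is correct, but it takes a different route from the paper. The paper gives no argument of its own: it defers entirely to \cite[Lemma~15]{liu2026improved}, which in turn rests on \cite[Lemma~6]{shomorony2021torn]}. You instead prove the lemma from scratch by rewriting $nV_\gamma$ as a sum of per-position indicators driven by the i.i.d.\ break variables $B_t$.

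This buys you several things.
\begin{itemize}
\item You never have to deal with the random number of fragments $K$. Nor do you face the dependence among the $N_j$ created by the stopping rule $\sum_{j\le K}N_j\ge n$ and by the truncated last fragment.
\item The mean follows by linearity, using one position-independent formula for all interior positions.
\item Finite-range dependence gives the explicit bound $\operatorname{Var}(V_\gamma)=O(\log n/n)$, since each indicator depends only on breaks within distance about $L$ of $x$. Chebyshev then closes the argument with a quantitative rate.
\end{itemize}
The citation route is shorter. However, it leaves the reader to check that the cited model, in particular its treatment of the final fragment, matches the one used here; your argument makes that check unnecessary.

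One correction to Step~1: your ``exact'' identity is off by one. $G_-+G_+$ has the law of the number $F$ of failures before the second success in an i.i.d.\ Bernoulli$(p_n)$ sequence. Moreover, $F\ge L-1$ holds iff at most one success occurs among the first $L$ trials. Hence
\[
\Pr(1+G_-+G_+\ge L)=\Pr(\mathrm{Bin}(L,p_n)\le 1)=(1-p_n)^{L}+Lp_n(1-p_n)^{L-1},
\]
not $\Pr(\mathrm{Bin}(L-1,p_n)\le 1)$. For example, with $L=2$ the true value is $1-p_n^2$, whereas your expression gives $1$.

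This slip is harmless. Since $p_n\to0$ and $Lp_n\to\alpha\gamma$, both expressions converge to $(1+\alpha\gamma)e^{-\alpha\gamma}$. So your first display, which carries the $(1+o(1))$ factor, remains true, and the conclusion is unaffected. You should also set $L=\lceil\gamma\log n\rceil$ so that the binomial is well defined; this does not change the limit either.
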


\begin{proof}
This is the long-fragment coverage result used in
\cite[Lemma~15]{liu2026improved}, which follows from
\cite[Lemma~6]{shomorony2021torn}.
\end{proof}

\begin{lemma}\label{lemma:number-of-debruijn-subsequences-in-fragment}
Fix $i\in[J-1]$ and consider a fragment $\boldf$ of length
$(1+\delta)t_i\log n$.
Consider all candidate length-$k_p$ De Bruijn subsequences that can be
extracted from $\boldf$ under all possible candidate placements.
The number of such candidate subsequences, counted with multiplicity
according to how they arise from the fragment, is at most
\(Tb_n(1+\delta)t_i\log n=n^{o(1)}\).
\end{lemma}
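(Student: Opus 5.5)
The plan is a counting argument. Every candidate De Bruijn subsequence extracted from $\boldf$ is fixed by a small amount of discrete data attached to a candidate placement $a$. I will bound the number of distinct such data and then check that the product is $n^{o(1)}$.

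\textbf{Step 1: what determines the extracted subsequence.} Fix a candidate placement $a$. The fragment indices $j$ with $a+j\in\cS_J$ are exactly those with $a+j\equiv 0 \pmod T$. So the set of fragment bits aligned with the pilot depends on $a$ only through $a\bmod T$, which takes at most $T$ values.

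Within the pilot sequence $\boldp$, the bits of $\boldq$ are obtained from $\boldp$ by deleting the marker bits and the two $\widetilde{\mathrm{RLL}}(0,\beta-2)$ bits of each length-$\beta$ block (Definition~\ref{definition:modified-pilot}). This deletion pattern is periodic with period $b_n$ in the pilot index $v=(a+j)/T$. Hence which of the aligned fragment bits are De Bruijn bits, and in what order they are read, depends on $a$ only through the pair
\[
\bigl(a \bmod T,\ \lfloor a/T\rfloor \bmod b_n\bigr).
\]
That pair has at most $Tb_n$ possible values.

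\textbf{Step 2: choosing the length-$k_p$ window.} Given the pair, the extracted De Bruijn bits form a sequence of length at most $(1+\delta)t_i\log n$. A length-$k_p$ contiguous window of it is specified by its starting offset, and there are at most $(1+\delta)t_i\log n$ offsets. Multiplying gives at most $Tb_n(1+\delta)t_i\log n$ candidate subsequences, counted with multiplicity according to the (pair, offset) that produces them.

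\textbf{Step 3: the asymptotics.} Here $T$ and $t_i$ are constants, and $b_n=(1+\delta/2)\log n$. The bound is therefore $O(\log^2 n)=2^{O(\log\log n)}=n^{o(1)}$.

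\textbf{Main obstacle.} The delicate point is Step 1: showing rigorously that the marker/RLL deletion pattern seen by the fragment depends on $a$ only through its residue modulo $Tb_n$. I would verify this directly from Definition~\ref{definition:modified-pilot}. A pilot bit $\boldp[v]$ is a retained De Bruijn bit iff $v\bmod b_n\ge q_J\beta$ and $(v\bmod b_n - q_J\beta)\bmod \beta\notin\{0,\beta-1\}$, using that $q_J\beta$ and $b_n$ are multiples of $\beta$. This condition is a function of $v \bmod b_n$ alone. Since $v=(a+j)/T$, it is determined by $(a+j)\bmod Tb_n$, and hence, for fixed $j$, by $a \bmod Tb_n$.

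The bound on the number of offsets is then immediate, because the extracted sequence is no longer than the fragment prefix.
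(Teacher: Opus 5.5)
Your proposal is correct and takes essentially the same route as the paper. You use the same three factors: $T$ choices of the pilot residue class, $b_n$ choices of the phase within a length-$b_n$ pilot block, and at most $(1+\delta)t_i\log n$ window offsets, which gives $Tb_n(1+\delta)t_i\log n=O(\log^2 n)=n^{o(1)}$; your explicit check that the marker/RLL deletion pattern depends on $a$ only through $a \bmod Tb_n$ makes rigorous a step the paper only asserts.
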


\begin{proof}
We first explain what is being counted.
As the candidate starting position \(a\) varies, different bits of the
fragment may be selected as bits of the pilot sequence, and among these
pilot bits, different bits may be selected as De Bruijn bits after the
marker bits and the \(1\)'s added by the
\(\widetilde{\mathrm{RLL}}(0,\beta-2)\) encoder are removed.
We count candidate length-\(k_p\) De Bruijn subsequences according to
how they arise from the fragment. In particular, two candidate
subsequences are counted separately if they arise from different
candidate pilot subsequences, different positions within a length-\(b_n\)
pilot block, or different choices of the length-\(k_p\) De Bruijn
portion, even if the resulting binary sequences are identical.

A candidate starting position \(a\) first determines which fragment bits
are mapped to the pilot sequence.
Since one pilot bit is placed every
\(T\) bits, there are \(T\) possibilities, corresponding
to the \(T\) residue classes of the fragment modulo \(T\).
After a candidate pilot subsequence has been selected, we must determine
which of its bits are from the De Bruijn sequence rather than marker bits
or the \(1\)'s added by the
\(\widetilde{\mathrm{RLL}}(0,\beta-2)\) encoder.
Recall that the pilot
sequence is constructed in blocks of \(b_n\) pilot bits
(as in Definition~\ref{definition:modified-pilot}), and every such
block has exactly the same structure: the marker occupies fixed positions
within the block, and the \(1\)'s added by the
\(\widetilde{\mathrm{RLL}}(0,\beta-2)\) encoder also occur at fixed
positions.
Therefore, it is enough to know which one of the
\(b_n\) positions \(0,1,\ldots,b_n-1\) within a pilot block corresponds
to the first bit of the candidate pilot subsequence.
Once this position is fixed, the construction uniquely determines,
throughout the entire candidate pilot subsequence, which bits are marker
bits, which bits are RLL-added \(1\)'s, and which bits are original
De Bruijn bits.

Now fix one of these \(b_n\) possible positions.
For each of the \(T\) candidate pilot subsequences, remove all marker
bits and all \(1\)'s added by the
\(\widetilde{\mathrm{RLL}}(0,\beta-2)\) encoder.
The remaining bits are from the De Bruijn sequence \(\boldq\).
For each such case, there are at most
\((1+\delta)t_i\log n\) possible choices of a length-\(k_p\)
De Bruijn subsequence within the fragment.

Finally, there are \(T\) possible pilot subsequences, \(b_n\) possible
positions for the first bit of a candidate pilot subsequence within a
length-\(b_n\) pilot block, and for each such case there are at most
\((1+\delta)t_i\log n\) possible length-\(k_p\) De Bruijn subsequences.
Hence, the total number of candidate length-\(k_p\) De Bruijn
subsequences, counted with the multiplicity described above, is at most
$
Tb_n(1+\delta)t_i\log n.
$
Since \(b_n=(1+\delta/2)\log n\) and \(T,t_i\) are constants, this upper
bound is
\[
T\left(1+\frac{\delta}{2}\right)(1+\delta)t_i(\log n)^2
=
O((\log n)^2)
=
n^{o(1)}.\qedhere
\]
\end{proof}
\end{document}